\newcommand{\sgn}{\operatorname{sgn}}
\newtheorem{lemma}{Lemma}
\newtheorem{prop}[lemma]{Proposition}
\newtheorem{cor}[lemma]{Corollary}
\newtheorem{theorem}[lemma]{Theorem}
\newtheorem{definition}[lemma]{Definition}
\numberwithin{equation}{section}
\numberwithin{lemma}{section}
\newcommand{\C}{\mathbb{C}}    
\newcommand{\N}{\mathbb{N}}    
\newcommand{\PL}{\mathbb{P}}   
\newcommand{\R}{\mathbb{R}}    
\newcommand{\Z}{\mathbb{Z}}    
\newcommand{\wh}{\widehat}
\renewcommand{\le}{\leqslant}
\renewcommand{\ge}{\geqslant}
\newcommand{\bs}{\backslash}
\newcommand{\ol}{\overline}
\newcommand{\la}{\langle}
\newcommand{\ra}{\rangle}
\newcommand{\bo}{\mathscr{O}} 
\newcommand{\gep}{\varepsilon}
\newcommand{\gl}{\lambda}
\newcommand{\bp}{ \begin{proof} }
\newcommand{\ep}{\hfill \end{proof} }
\newcommand{\be}{ \begin{equation} }
\newcommand{\ee}{ \end{equation} }
\newcommand{\tp}{\mathsf{T}}  
\newcommand{\setsp}{\;:\;}     
\newcommand{\qp}{\mathcal{Q}}  
\newcommand{\pp}{\mathsf{p}}
\newcommand{\pq}{\mathsf{q}}
\newcommand{\td}{\boldsymbol{\delta}}  
\newcommand{\si}{\mathtt{S}}            
\newcommand{\vmo}{\operatorname{vm}} 
\newcommand{\lp}[1]{l_{#1}(\mathbb{Z})}
\newcommand{\Lp}[1]{L_{#1}(\mathbb{R})}
\newcommand{\lLp}[1]{L_{#1}^{loc}(\mathbb{R})}
\newcommand{\TLp}[1]{L_{#1}(\mathbb{T})}
\newcommand{\AS}{\operatorname{\mathsf{AS}}} 
\begin{document}

\title{Gibbs Phenomenon of Framelet Expansions and Quasi-projection Approximation}

\author{Bin Han}

\address{Department of Mathematical and Statistical Sciences,
University of Alberta, Edmonton,\quad Alberta, Canada T6G 2G1.
\quad {\tt bhan@ualberta.ca}\quad {\tt http://www.ualberta.ca/$\sim$bhan}
}



\thanks{Research was supported in part by the Natural Sciences and Engineering Research Council of Canada (NSERC).
}



\makeatletter \@addtoreset{equation}{section} \makeatother

\begin{abstract}
The Gibbs phenomenon is widely known for Fourier expansions of periodic functions and refers to the phenomenon that
the $n$th Fourier partial sums overshoot a target function at jump discontinuities in such a way that such overshoots do not die out as $n$ goes to infinity.
The Gibbs phenomenon for wavelet expansions using (bi)orthogonal wavelets has been studied in the literature.
Framelets (also called wavelet frames) generalize (bi)orthogonal wavelets.
Approximation by quasi-projection operators are intrinsically linked to approximation by truncated wavelet and framelet expansions. In this paper we shall establish a key identity for quasi-projection operators and then we use it to study the Gibbs phenomenon of framelet expansions and approximation by general quasi-projection operators. We shall also study and characterize the Gibbs phenomenon at an arbitrary point for approximation by quasi-projection operators.
As a consequence, we show that the Gibbs phenomenon appears at all points for every tight or dual framelet having at least two vanishing moments and for quasi-projection operators having at least three accuracy orders.
Our results not only improve current results in the literature on the Gibbs phenomenon for (bi)orthogonal wavelet expansions but also are new for framelet expansions and approximation by quasi-projection operators.
\end{abstract}

\keywords{Gibbs phenomenon, quasi-projection operators, dual multiframelets, (bi)orthogonal multiwavelets, vanishing moments, approximation order, accuracy order, polynomial reproduction}

\subjclass[2010]{42C40, 42C15, 41A25, 41A35, 65T60}
\maketitle

\pagenumbering{arabic}

\section{Introduction}
Wavelets and their generalizations such as framelets have been widely applied to many areas such as image processing and numerical algorithms with great success (\cite{daubook}). It is well noticed that many wavelets and framelets suffer the visually unpleasant ringing effect near jump discontinuities, which is related to the Gibbs phenomenon of wavelet and framelet expansions. It is the purpose of this paper to study the Gibbs phenomenon of wavelet and framelet expansions as well as their associated approximation schemes by quasi-projection operators.

Let us recall the Gibbs phenomenon for Fourier expansions.
Let $\TLp{2}$ denote the space of all $2\pi$-periodic square integrable functions equipped with the inner product
\[
\la f, g\ra_{\TLp{2}}:=\frac{1}{2\pi}\int_{-\pi}^\pi f(x) \ol{g(x)}dx,\qquad f,g\in \TLp{2}.
\]
Since $\{e^{ikx}\}_{k\in \Z}$ is an orthonormal basis of $\TLp{2}$, every $f\in \TLp{2}$ has a Fourier expansion $f=\sum_{k\in \Z} \wh{f}(k) e^{ikx}$ in $\TLp{2}$ with Fourier coefficients
\[
\wh{f}(k):=\la f, e^{ikx}\ra=
\frac{1}{2\pi}\int_{-\pi}^\pi f(x) e^{-ikx}dx,\qquad k\in \Z.
\]
That is, $\lim_{n\to \infty}\| S_n f-f\|_{\TLp{2}}=0$, where its $n$th Fourier partial sum $S_n f$ is defined to be
\[
[S_n f](x):=\sum_{k=-n}^n \wh{f}(k) e^{ik x}, \qquad x\in \R.
\]
In applications, a lot of signals are modeled by piecewise smooth/analytic functions with finitely many simple jump discontinuities. Let us consider a particular function $f\in \TLp{2}$ defined by $f(x)=-1$ for $x\in (-\pi,0]$ and $f(x)=1$ for $x\in (0,\pi]$.
Then $f$ is a piecewise smooth/analytic $2\pi$-periodic function with simple jump discontinuities at $\pi\Z$.
By a straightforward calculation,
one has its $n$th Fourier partial sum
\[
[S_nf](x)=\sum_{k=1}^n \frac{2(1-(-1)^k)}{\pi k} \sin (kx).
\]
At every $x_0\not\in \pi \Z$ (i.e., at every point where $f$ is continuous),  $\lim_{n\to \infty} [S_n f](x)=f(x)$ uniformly for $x$ in some neighborhood of $x_0$. But Gibbs \cite{gib} pointed out that at the jump discontinuity $0$,
\[
\lim_{n\to \infty} [S_n f]\left(\frac{\pi}{n}\right)=\frac{2}{\pi}\int_0^\pi \frac{\sin x}{x}dx \approx 1.17898 >1.
\]
That is, $S_n f$ overshoots $f$ by a fixed positive amount from the right-hand side of the origin.
Such overshoots do not die out as $n$ goes to infinity and this creates undesired visually unpleasant ringing effects near jump discontinuities in applications. This phenomenon is called the \emph{Gibbs phenomenon}, which was first discovered analytically by Wilbraham~\cite{wil48}.
There are a lot of studies on the Gibbs phenomenon and how to reduce it for various expansions and bases.
See \cite{gs97,jerbook} for historical overview and recent developments on the Gibbs phenomenon for Fourier expansions.
The Gibbs phenomenon for orthogonal wavelets and biorthogonal wavelets has been studied in the literature, e.g., see \cite{gc95,jerbook,kel96,ml18,rf05,shen11,shen02,sv96} and references therein.
In this paper we are particularly interested in the Gibbs phenomenon for framelet expansions and their associated quasi-projection approximation.

Every square integrable function in $\Lp{2}$ has a wavelet expansion.
Let us first recall the definition of wavelets and framelets (e.g., see \cite{daubook,han97,hanbook,rs97} and references therein).
For a (vector) function $f$ on the real line $\R$, we shall adopt the notation
\[
f_{j;k}(x):=2^{j/2} f(2^j x-k),\qquad j,k\in \Z, x\in \R.
\]
By $f\in (\Lp{2})^{r\times s}$ we mean that $f$ is an $r\times s$ matrix of functions in $\Lp{2}$ and we define
\be \label{innerprod}
\la f,g\ra:=\int_{\R} f(x) \ol{g(x)}^\tp dx,\qquad
f\in (\Lp{2})^{r\times t}, g\in (\Lp{2})^{s\times t}.
\ee
%
For an $r\times 1$ vector function $\phi=(\phi^1,\ldots,\phi^r)^\tp \in (\Lp{2})^r$ and an $s\times 1$ vector function $\psi=(\psi^1,\ldots,\psi^s)^\tp\in (\Lp{2})^s$, we say that $\{\phi;\psi\}$ is \emph{a framelet in $\Lp{2}$} if there exist positive constants $C_1$ and $C_2$ such that
\be \label{framelet}
C_1 \|f\|_{\Lp{2}}^2\le
\sum_{\ell=1}^r \sum_{k\in \Z} |\la f, \phi^\ell(\cdot-k)\ra|^2+
\sum_{j=0}^\infty \sum_{\ell=1}^s \sum_{k\in \Z} |\la f, \psi^\ell_{j;k}\ra|^2\le C_2 \|f\|_{\Lp{2}}^2,\qquad
\forall\, f\in \Lp{2}.
\ee
%
If \eqref{framelet} holds with $C_1=C_2=1$, then
$\{\phi;\psi\}$ is called \emph{a tight framelet} (or a tight multiframelet) in $\Lp{2}$.
For $J\in \Z$, we define \emph{nonhomogeneous affine systems} $\AS_J(\phi;\psi)$ in $\Lp{2}$ to be
\be \label{as}
\AS_J(\phi;\psi):=\{\phi^\ell_{J;k} \setsp k\in \Z, \ell=1,\ldots,r\}\cup \{ \psi^\ell_{j;k} \setsp j\ge J,k\in \Z,\ell=1,\ldots,s\}.
\ee
If $\AS_0(\phi;\psi)$ is an orthonormal basis of $\Lp{2}$, then $\{\phi;\psi\}$ is called \emph{an orthogonal wavelet} or more precisely, an orthogonal multiwavelet. Obviously, an orthogonal wavelet must be a tight framelet.
Let $\tilde{\phi}$ be an $r\times 1$ vector of functions in $\Lp{2}$ and $\tilde{\psi}$ be an $s\times 1$ vector of functions in $\Lp{2}$. We say that $(\{\tilde{\phi};\tilde{\psi}\},\{\phi;\psi\})$
is \emph{a dual framelet in $\Lp{2}$} if both $\{\tilde{\phi};\tilde{\psi}\}$ and $\{\phi;\psi\}$ are framelets in $\Lp{2}$ such that
\be \label{df}
\la f,g\ra=\sum_{k\in \Z} \la f, \tilde{\phi}(\cdot-k)\ra\la \phi(\cdot-k),g\ra+
\sum_{j=0}^\infty \sum_{k\in \Z}
\la f, \tilde{\psi}_{j;k}\ra \la \psi_{j;k},g\ra,\qquad \forall\, f,g\in \Lp{2}
\ee
with the above series converging absolutely.
For improved presentation and simplicity, in \eqref{df} we used the definition in \eqref{innerprod} for inner product of vector functions.
If in addition $\AS_0(\tilde{\phi};\tilde{\psi})$ and $\AS_0(\phi;\psi)$ are biorthogonal to each other, then a dual framelet $(\{\tilde{\phi};\tilde{\psi}\},\{\phi;\psi\})$ in $\Lp{2}$ is called \emph{a biorthogonal wavelet in $\Lp{2}$} or more precisely, a biorthogonal multiwavelet in $\Lp{2}$. It follows directly from \eqref{df} that every function $f\in \Lp{2}$ has the following framelet/wavelet expansion:
\be \label{df:repr}
f=\sum_{k\in \Z} \la f, \tilde{\phi}(\cdot-k)\ra \phi(\cdot-k)+
\sum_{j=0}^\infty \sum_{k\in \Z}
\la f, \tilde{\psi}_{j;k}\ra \psi_{j;k}
\ee
with the series converging unconditionally in $\Lp{2}$.
For $n\in \N\cup\{0\}$, the truncated expansions of \eqref{df:repr} are given by (see \cite{dhrs03})
\be \label{df:repr:trunc}
\mathcal{A}_n f:=\sum_{k\in \Z} \la f, \tilde{\phi}(\cdot-k)\ra \phi(\cdot-k)+
\sum_{j=0}^{n-1} \sum_{k\in \Z}
\la f, \tilde{\psi}_{j;k}\ra \psi_{j;k},\qquad f\in \Lp{2}.
\ee
Then $\mathcal{A}_n f\in \Lp{2}$ and $\lim_{n\to \infty} \|\mathcal{A}_n f-f\|_{\Lp{2}}=0$. We shall see in Proposition~\ref{prop:framelet} that
the truncated expansions in \eqref{df:repr:trunc} are closely linked to the well-known quasi-projection operators which we shall discuss here. By $\lLp{2}$ we denote the space of all locally square integrable functions, that is, $f\in \lLp{2}$ if $\int_{-N}^N |f(x)|^2 dx<\infty$ for all $N\in \N$. Let $\phi$ and $\tilde{\phi}$ be compactly supported vector functions in $(\Lp{2})^r$. For $n\in \N\cup\{0\}$, the \emph{quasi-projection operators} $\qp_n: \lLp{2}\rightarrow \lLp{2}$ associated with the compactly supported vector functions $\phi$ and $\tilde{\phi}$ are defined to be
\be \label{qp}
[\qp_n f](x)=\sum_{k\in \Z} \la f, \tilde{\phi}_{n;k}\ra \phi_{n;k}(x)=\sum_{k\in \Z} \la f, 2^n \tilde{\phi}(2^n\cdot-k)\ra \phi(2^n x-k),
\qquad n\in \N\cup\{0\}, f\in \lLp{2}.
\ee
Since $f\in \lLp{2}$ and $\tilde{\phi}$ is a compactly supported vector function in $(\Lp{2})^r$, the inner product $\la f, 2^n \tilde{\phi}(2^n\cdot-k)\ra$ is a well-defined row vector. On the other hand, since $\phi$ has compact support, for any given $x\in \R$, the summations in \eqref{qp} are actually finite. Therefore, $\qp_n f$ is well defined and we can easily observe that $\qp_n f\in \lLp{2}$. Hence, quasi-projection operators $\qp_n$ map $\lLp{2}$ to $\lLp{2}$.
Quasi-projection operators are well known in approximation theory, e.g., see \cite{hanbook,jz95,jia04,jj02}. In particular, we define $\qp:=\qp_0$, that is,
\be \label{qp:0}
\qp f:=\sum_{k\in \Z} \la f, \tilde{\phi}(\cdot-k)\ra \phi(\cdot-k),\qquad f\in \lLp{2}.
\ee
Then $[\qp_n f](2^{-n}\cdot)=\qp (f(2^{-n}\cdot))$ for all $n\in \N$.
We shall see in Proposition~\ref{prop:framelet} that
the truncated expansions $\mathcal{A}_n f$ in \eqref{df:repr:trunc} are linked to the quasi-projection operators through the identities $\mathcal{A}_n f=\qp_n f$ for all $f\in \Lp{2}$ and $n\in \N\cup\{0\}$.
Obviously, any polynomial $\pp$ belongs to $\lLp{2}$ and hence, $\qp_n \pp$ is well defined.
Recall that
the sign function is defined to be
\be \label{sgn}
\sgn(x)=1 \quad  \mbox{if}\;\; x>0; \qquad \sgn(0):=0; \qquad
\sgn(x):=-1\quad  \mbox{if}\;\; x<0.
\ee
Note that $\sgn\in \lLp{2}$. Hence, $\qp \sgn$ is well defined.
The Fourier transform in this paper is defined to be $\wh{f}(\xi):=\int_{\R} f(x) e^{-ix \xi}dx$ for $f\in \Lp{1}$ and is naturally extended to square integrable functions.

The definition of the Gibbs phenomenon for a general approximation scheme will be stated in Definition~\ref{def:gibbs}.
To study the Gibbs phenomenon of the truncated framelet expansions in \eqref{df:repr:trunc}, it is necessary to study the Gibbs phenomenon of the associated quasi-projection operators in \eqref{qp}. To do so, we need the following key identity in this paper on quasi-projection operators.

\begin{theorem}\label{thm:main}
Let $\phi$ and $\tilde{\phi}$ be $r\times 1$ vectors of compactly supported functions in $\Lp{2}$ such that
\be \label{basic:cond}
\ol{\wh{\tilde{\phi}}(0)}^\tp\wh{\phi}(0)=1
\quad \mbox{and}\quad
\ol{\wh{\tilde{\phi}}(0)}^\tp \wh{\phi}(2\pi k)=0 \qquad \forall\, k\in \Z\bs\{0\}.
\ee
Let $\qp$ be defined in \eqref{qp:0}.
Then $\sgn-\qp \sgn$ is a compactly supported function in $\Lp{2}$ and
\be \label{identity}
\la (\cdot), \sgn-\qp \sgn\ra=\frac{1}{6}
-\ol{\wh{\phi}(0)}^\tp (\kappa_1-\kappa_2)
-\ol{[\wh{\phi}]''(0)}^\tp \wh{\tilde{\phi}}(0)+i\ol{[\wh{\phi}]'(0)}^\tp \wh{\tilde{\phi}}(0)
-2i\ol{[\wh{\phi}]'(0)}^\tp \kappa_1,
\ee
where $(\cdot)$ stands for the identity function/polynomial such that $(x):=x$ for all $x\in \R$, and
\be \label{kappa}
\kappa_j:=\int_0^1 \sum_{n\in \Z} n^j \tilde{\phi}(x-n)dx, \qquad j\in \N\cup\{0\}.
\ee
\end{theorem}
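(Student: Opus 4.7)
My plan hinges on the observation that the hypothesis \eqref{basic:cond} is equivalent to the polynomial-reproduction identity $\qp\mathbf{1}\equiv 1$, which follows from Poisson summation applied to $\sum_{k\in\Z}\phi(x-k)=\sum_{m\in\Z}\wh{\phi}(2\pi m)e^{2\pi imx}$. I will set $c_k:=\la\sgn,\tilde{\phi}(\cdot-k)\ra$ and extract the \emph{finitely supported} discrepancy
\[
e_k:=c_k-\sgn(k)\ol{\wh{\tilde{\phi}}(0)}^{\tp}=\int_{\R}\bigl(\sgn(y+k)-\sgn(k)\bigr)\ol{\tilde{\phi}(y)}^{\tp}\,dy,
\]
which vanishes for all sufficiently large $|k|$ because $\sgn(y+k)=\sgn(k)$ once $|k|$ exceeds the radius of the support of $\tilde{\phi}$. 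The key decomposition is
\[
\sgn-\qp\sgn=(\sgn-G)-\sum_{k\in\Z}e_k\,\phi(\cdot-k),\qquad G(x):=\ol{\wh{\tilde{\phi}}(0)}^{\tp}\sum_{k\in\Z}\sgn(k)\phi(x-k),
\]
in which both summands have compact support: the first because $G(x)=\sgn(x)\,\qp\mathbf{1}(x)=\sgn(x)$ once $|x|$ is large enough that every $k$ contributing satisfies $\sgn(k)=\sgn(x)$, and the second trivially. This simultaneously proves the compact support of $\sgn-\qp\sgn$ asserted in the theorem and reduces \eqref{identity} to evaluating $\int_{\R}x(\sgn-G)(x)\,dx$ and $\sum_{k\in\Z}e_k\int_{\R}x\,\phi(x-k)\,dx$.

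For the first integral, $\sgn-G$ is a compactly supported $L_2$ function, so $\wh{\sgn-G}$ is entire and $\int_{\R}x(\sgn-G)(x)\,dx=i\,[\wh{\sgn-G}]'(0)$. On the tempered-distribution level, $\wh{\sgn}(\xi)=-2i\operatorname{PV}(1/\xi)$ combined with the classical periodic Fourier series $\sum_{k\in\Z}\sgn(k)e^{-ik\xi}=-i\cot(\xi/2)$ yields
\[
\wh{\sgn-G}(\xi)=-\tfrac{2i}{\xi}+i\,\ol{\wh{\tilde{\phi}}(0)}^{\tp}\wh{\phi}(\xi)\cot(\xi/2).
\]
Inserting $\cot(\xi/2)=2/\xi-\xi/6+O(\xi^3)$ and the Taylor expansion of $\wh{\phi}$ near $0$, the $1/\xi$ contributions cancel precisely because $\ol{\wh{\tilde{\phi}}(0)}^{\tp}\wh{\phi}(0)=1$; reading the coefficient of $\xi$ then delivers $\int_{\R}x(\sgn-G)(x)\,dx=\tfrac{1}{6}-\ol{\wh{\tilde{\phi}}(0)}^{\tp}[\wh{\phi}]''(0)$. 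The second integral is elementary, since $\int_{\R}x\phi(x-k)\,dx=i[\wh{\phi}]'(0)+k\wh{\phi}(0)$, so it equals $iE_0[\wh{\phi}]'(0)+F_0\wh{\phi}(0)$ with the finite row-vector sums $E_0:=\sum_ke_k$ and $F_0:=\sum_kk\,e_k$.

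To close the proof I identify $E_0$ and $F_0$ with the stated quantities. A short case analysis supplies the combinatorial identities
\[
\sum_{k\in\Z}\bigl(\sgn(y+k)-\sgn(k)\bigr)=2\lfloor y\rfloor+1\ \mbox{a.e.},\qquad\sum_{k\in\Z}k\bigl(\sgn(y+k)-\sgn(k)\bigr)=-n(n+1)\ \ \mbox{on}\ (n,n+1),
\]
so that interchanging the finite $k$-sum with the $y$-integral expresses $E_0,F_0$ as integrals of $\tilde{\phi}$ against piecewise polynomials. The substitution $m=-n$ in the defining sum for $\kappa_j$ gives $\sum_{m\in\Z}m^j\int_m^{m+1}\tilde{\phi}(y)\,dy=(-1)^j\kappa_j$, whence $E_0=-2\ol{\kappa_1}^{\tp}+\ol{\wh{\tilde{\phi}}(0)}^{\tp}$ and $F_0=\ol{\kappa_1-\kappa_2}^{\tp}$. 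Substituting into the formula for $\int_{\R}x(\sgn-\qp\sgn)(x)\,dx$, taking the complex conjugate to obtain $\la(\cdot),\sgn-\qp\sgn\ra$ (as $x$ is real), and using $\ol{u}^{\tp}v=v^{\tp}\ol{u}$ to move bars onto the $\wh{\phi}$-factors, reproduces exactly \eqref{identity}. The principal obstacle will be the rigorous justification of the distributional formula for $\wh{\sgn-G}$: the right-hand side is a priori only a tempered distribution with a principal-value singularity at $\xi=0$, but by \eqref{basic:cond} the two singular contributions cancel, so Paley--Wiener forces it to coincide with the entire function $\wh{\sgn-G}$; a smooth regularization $\sgn_{\varepsilon}(x):=\sgn(x)\chi(\varepsilon x)$ followed by $\varepsilon\to 0^+$ provides an alternative derivation.
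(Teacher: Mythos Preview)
Your argument is correct and the computations check out; in particular the combinatorial identities for $\sum_k(\sgn(y+k)-\sgn(k))$ and $\sum_k k(\sgn(y+k)-\sgn(k))$, the identification of $E_0,F_0$ with $\kappa_1,\kappa_2$, and the final assembly all match the asserted formula \eqref{identity} term by term. The route, however, is quite different from the paper's.

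The paper avoids distributional Fourier analysis entirely. It introduces an auxiliary compactly supported function $\eta$ (built from $B_3$) satisfying $\sum_k k^j\eta(\cdot-k)=x^j$ for $j=0,1,2$, uses this to write $\la\sgn-\qp\sgn,(\cdot)\ra=\sum_k k\,[c-\tilde b*b](k)$ as a \emph{discrete} expression with $c(k)=\la\sgn,\eta(\cdot-k)\ra$, $\tilde b(k)=\la\sgn,\tilde\phi(\cdot-k)\ra$, $b(k)=\la\phi,\eta(\cdot-k)\ra$, and then evaluates this via an elementary lemma on finite sums of the form $\sum_k k\,[c*d](k)$ with $\wh d(0)=0$. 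Everything reduces to finite Abel-type rearrangements and the identities $[\wh b]^{(j)}(0)=[\wh\phi]^{(j)}(0)$ for $j\le 2$. Your approach instead splits $\sgn-\qp\sgn=(\sgn-G)-\sum_ke_k\phi(\cdot-k)$ and evaluates the first piece via the tempered-distribution identities $\wh\sgn(\xi)=-2i\,\mathrm{PV}(1/\xi)$ and $\sum_k\sgn(k)e^{-ik\xi}=-i\cot(\xi/2)$, reading off the answer from the Laurent expansion of $\cot$; the second piece is handled by direct integration and the floor-function identities. The paper's method is more self-contained (no distributions, no principal values), while yours is shorter and more transparent once the distributional formula is granted. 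As you note, the only nontrivial point in your argument is justifying that the entire function $\wh{\sgn-G}$ really agrees with the formal expression obtained by subtracting two singular distributions; this is exactly where the second hypothesis in \eqref{basic:cond} is needed, since it kills the poles of $\cot(\xi/2)$ at every $2\pi k$, $k\ne 0$, not just at $0$. A smooth cutoff regularization (or simply computing $\int_{-M}^M x(\sgn-G)\,dx$ directly and tracking the finitely many boundary terms near $|k|\approx M$) gives a clean justification.
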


We shall present the technical proof of Theorem~\ref{thm:main} in Section~\ref{sec:proof}.
It is well known (also see Section~2) that
\eqref{basic:cond} is equivalent to $\qp 1=1$. This condition is needed in order to guarantee that $\sgn-\qp \sgn$ is compactly supported.
As we shall see in this paper,
the identity \eqref{identity} in Theorem~\ref{thm:main} plays the key role in our study of the Gibbs phenomenon of framelet expansions and quasi-projection approximation.

The structure of the paper is as follows. Using Theorem~\ref{thm:main}, we shall define and discuss in Section~2 the Gibbs phenomenon for approximation by quasi-projection operators. In particular, we show that every approximation scheme by general quasi-projection operators with at least three approximation/accuracy orders exhibits the Gibbs phenomenon at the origin.
We further discuss in Section~2 about how to avoid the Gibbs phenomenon for approximation by quasi-projection operators without sacrificing the approximation orders and accuracy orders.
In Section~\ref{sec:wavelet} we address the Gibbs phenomenon of framelet expansions. In particular, we show that every dual framelet, if both its primal and dual framelet generators have at least two vanishing moments, must exhibit the Gibbs phenomenon at the origin.
We shall prove
Theorem~\ref{thm:main} in Section~\ref{sec:proof}.
Under the conditions $\qp1=1$ and all the entries of the vector function $\phi$ are continuous, in Section~\ref{sec:gibbs:x0} we shall study the Gibbs phenomenon at an arbitrary point for framelet expansions and approximation by quasi-projection operators. We show that every approximation scheme by general quasi-projection operators with at least three approximation/accuracy orders exhibits the Gibbs phenomenon at every point of the real line. Moreover, any dual framelet with at least two vanishing moments for its primal and dual framelet generators must exhibit the Gibbs phenomenon at all points.
Though most results in this paper can be generalized to quasi-projection operators in $\Lp{p}$ with $1\le p\le \infty$ and to wavelets and framelets not necessarily having compact support, for simplicity and for avoiding too much technicality, we only consider the space $\Lp{2}$ and framelets with compact support in this paper.

\section{Gibbs Phenomenon of Approximation by Quasi-projection Operators}
\label{sec:qp}

In this section we apply Theorem~\ref{thm:main} to study the Gibbs phenomenon of approximation by general quasi-projection operators.
Let us first give the definition of the Gibbs phenomenon of a general approximation scheme.

Generally, we often expand/represent functions in $\Lp{2}$ under various bases in $\Lp{2}$, e.g., orthogonal wavelet bases.
Therefore, we approximate a function $f\in \Lp{2}$
by a sequence $\{\qp_n f\}_{n\in \N}$ of functions in $\Lp{2}$ using some linear operators $\qp_n$ mapping $\Lp{2}$ to $\Lp{2}$ such that $\lim_{n\to \infty} \|\qp _n f-f\|_{\Lp{2}}=0$.
For piecewise smooth/analytic functions $f\in \Lp{2}$ with finitely many jump discontinuities, quite often we have $\lim_{n\to \infty} [\qp_n f](x)=f(x)$ uniformly in a neighborhood of every given point where $f$ is continuous (e.g., see Lemma~\ref{lem:qp:shift}). Therefore,  to study the Gibbs phenomenon, we only need to consider a special function with only one jump discontinuity. The definition of the Gibbs phenomenon under a general approximation scheme $\{\qp_n\}_{n\in \N}$ is defined as follows:

\begin{definition}\label{def:gibbs}
Let $x_0\in \R$ and $\eta$ be a compactly supported $C^\infty$ function on $\R$ such that $\eta(x)=1$ for all $x\in [-1,1]$. We say that a sequence $\{\qp_n\}_{n\in \N}$ of linear operators, mapping real-valued functions in $\Lp{2}$ into real-valued functions in $\Lp{2}$, exhibits the Gibbs phenomenon at the point $x_0$ if there exists a sequence $\{c_n\}_{n\in \N}$ of positive numbers such that $\lim_{n\to \infty} c_n=0$ and either
\be \label{r:gibbs}
\limsup_{n\to \infty} \mbox{ess-sup}_{x\in (x_0,x_0+c_n)} [\qp_n g](x)>1,
\ee
or
\be \label{l:gibbs}
\liminf_{n\to \infty} \mbox{ess-inf}_{x\in (x_0-c_n,x_0)} [\qp_n g](x)<-1,
\ee
where $g:=\eta(\cdot-x_0) \sgn(\cdot-x_0)$, which is smooth and continuous everywhere except at the point $x_0$.
\end{definition}

The inequality in
\eqref{r:gibbs} means that the approximation $\qp_n g$
overshoots $g$ by a fixed positive amount from the right-hand side of $x_0$.
Note that in the above definition we do not require that all $\qp_n g$ should be continuous functions.
If all $\qp_n g$ are continuous (the most common case), then the above definition is equivalent to that there exists a sequence $\{c_n\}_{n\in \N}$ of positive numbers such that $\lim_{n\to \infty} c_n=0$ and either $\limsup_{n\to \infty} [\qp_n g](x_0+c_n)>1$ or
$\liminf_{n\to \infty} [\qp_n g](x_0-c_n)<-1$.

In this section we are particularly interested in the Gibbs phenomenon of approximation by quasi-projection operators.
Let us first discuss the Gibbs phenomenon at the origin (i.e., $x_0=0$) in Definition~\ref{def:gibbs}.
As discussed in \cite{kel96,shen11} for orthogonal wavelets, the Gibbs phenomenon at a point $x_0$ which is not a dyadic rational number is much more complicated.
For simplicity of presentation, we shall postpone our discussion on the Gibbs phenomenon at an arbitrary point to Section~\ref{sec:gibbs:x0}, the last section of this paper. In Section~\ref{sec:gibbs:x0}, we shall characterize and study the Gibbs phenomenon at an arbitrary point for approximation by quasi-projection operators and framelet expansions.
Since all involved functions in $\phi$ and $\tilde{\phi}$ in the quasi-projection operators
in \eqref{qp} have compact support and $\sgn\in \lLp{2}$, the functions $\qp_n \sgn$ are well defined and
$\qp_n \sgn$ agrees with $\qp_n (\eta \sgn)$ in a neighborhood of the origin for all large $n\in \N$. Therefore, we only need to study $\qp_n \sgn$ in a neighborhood of the origin, even though $\sgn \not \in \Lp{2}$.

\begin{lemma}\label{lem:qp:gibb}
Let $\phi$ and $\tilde{\phi}$ be $r\times 1$ vectors of compactly supported real-valued functions in $\Lp{2}$. Let
$\qp_n, n\in \N$ be the quasi-projection operators defined in \eqref{qp} and $\qp:=\qp_0$ in \eqref{qp:0}. Then
$\{\qp_n\}_{n\in \N}$ does not exhibit the Gibbs phenomenon at the origin if and only if
\be \label{qp:nogibbs}
[\qp \sgn](x)\le 1 \qquad a.e.\; x\in (0,\infty) \quad \mbox{and}\quad
[\qp \sgn](x)\ge -1\qquad a.e.\; x\in (-\infty,0).
\ee
\end{lemma}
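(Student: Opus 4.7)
The strategy will be to reduce the Gibbs behavior of $\{\qp_n\}$ at the origin to a single function, $\qp \sgn$, via the dilation relation $[\qp_n \sgn](x) = [\qp \sgn](2^n x)$. First I would establish this identity. Because $\sgn(2^{-n} y) = \sgn(y)$, a change of variable inside $\la \sgn, 2^n \tilde{\phi}(2^n \cdot - k)\ra$ yields $\la \sgn, 2^n \tilde{\phi}(2^n \cdot - k)\ra = \la \sgn, \tilde{\phi}(\cdot - k)\ra$, so the sum defining $[\qp_n \sgn](x)$ in \eqref{qp} matches term by term the sum defining $[\qp \sgn](2^n x)$ in \eqref{qp:0}.

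Second, I would transfer this identity from $\sgn$ to the test function $g := \eta \sgn$ appearing in Definition~\ref{def:gibbs}. Since $\eta \equiv 1$ on $[-1,1]$, the difference $g - \sgn$ is supported outside $[-1,1]$; combined with the compact supports of $\phi$ and $\tilde{\phi}$, only translates $\tilde{\phi}_{n;k}$ with $|k| \ge 2^n - O(1)$ can contribute to $\qp_n(g - \sgn)$, and the corresponding $\phi_{n;k}$ are supported outside a shrinking neighborhood of $\pm 1$. Consequently $\qp_n g$ and $\qp_n \sgn$ coincide on a fixed open neighborhood $U$ of the origin (for concreteness, $U = (-\tfrac12,\tfrac12)$) for all sufficiently large $n$. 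Together with Step~1 this gives $[\qp_n g](x) = [\qp \sgn](2^n x)$ for $x \in U$ and $n$ large.

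The third step is the change of variable $y = 2^n x$ inside the essential suprema and infima of Definition~\ref{def:gibbs}: for any positive sequence $c_n \to 0$ and all sufficiently large $n$,
\begin{align*}
\mbox{ess-sup}_{x \in (0, c_n)} [\qp_n g](x) &= \mbox{ess-sup}_{y \in (0, 2^n c_n)} [\qp \sgn](y), \\
\mbox{ess-inf}_{x \in (-c_n, 0)} [\qp_n g](x) &= \mbox{ess-inf}_{y \in (-2^n c_n, 0)} [\qp \sgn](y).
\end{align*}
If \eqref{qp:nogibbs} holds, the right-hand sides are $\le 1$ and $\ge -1$ respectively for every such $c_n$, so neither \eqref{r:gibbs} nor \eqref{l:gibbs} can hold and no Gibbs phenomenon occurs. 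Conversely, if, say, $[\qp \sgn] > 1$ on a positive-measure subset of $(0, \infty)$, then there is some $R > 0$ with $\mbox{ess-sup}_{(0,R)}[\qp\sgn] > 1$; the choice $c_n := R/2^n$ makes $2^n c_n \equiv R$ and forces \eqref{r:gibbs}, and the symmetric choice handles \eqref{l:gibbs}.

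The only delicate point is the bookkeeping in Step~2 that guarantees $\qp_n g$ and $\qp_n \sgn$ agree on a neighborhood of $0$ whose size does not shrink with $n$. This is essentially the observation flagged just before the lemma statement, so it amounts to careful tracking of compact supports rather than a genuine obstacle; after that the rest is merely unwinding the scaling.
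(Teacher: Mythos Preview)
Your proposal is correct and follows essentially the same route as the paper: the key dilation identity $[\qp_n \sgn](x)=[\qp\sgn](2^n x)$, the reduction from $g=\eta\sgn$ to $\sgn$ via compact supports (which the paper states as a preliminary remark just before the lemma rather than inside the proof), and the choice $c_n=R/2^n$ for the necessity direction all match the paper's argument. Your bookkeeping in Step~2 is slightly more explicit than the paper's, but there is no substantive difference in method.
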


\bp
From the definition of the quasi-projection operators, we have the basic fact
$[\qp_n f](2^{-n}\cdot)=\qp (f(2^{-n}\cdot))$ for all $n\in \N$. In particular, by $\sgn(2^{-n}\cdot)=\sgn$, we have $[\qp_n \sgn](2^{-n} \cdot)=\qp \sgn$.

The sufficiency part is trivial, since \eqref{qp:nogibbs} implies $[\qp_n \sgn](x)=[\qp \sgn](2^n x)\le 1$ for almost every $x\in (0,\infty)$, and $[\qp_n \sgn](x)=[\qp \sgn](2^n x)\ge -1$ for almost every $x\in (-\infty,0)$.

Necessity. Suppose that $\{\qp_n \}_{n\in \N}$ does not exhibit the Gibbs phenomenon at the origin but
\eqref{qp:nogibbs} fails. Without loss of generality, we assume that there is a measurable set $E\subseteq (0,\infty)$ such that $E$ has a positive measure and $[\qp \sgn](x)>1$ for all $x\in E$. Therefore, there exists $c>0$ such that $C:=\mbox{ess-sup}_{x\in (0,c)} [\qp \sgn](x)>1$.
Define $c_n:=2^{-n} c>0$. Then $\lim_{n\to \infty} c_n=0$ and
\[
\mbox{ess-sup}_{x\in (0,c_n)}[\qp_n \sgn](x)=
\mbox{ess-sup}_{x\in (0,c)} [\qp \sgn](x)=C>1.
\]
Therefore, $\{\qp_n \}_{n\in \N}$ exhibits the Gibbs phenomenon at the origin, which is a contradiction to our assumption.
Consequently, \eqref{qp:nogibbs} must hold and we proved the necessity part.
\ep

In Section~\ref{sec:gibbs:x0},
under the conditions that $\qp 1=1$ and all the entries of $\phi$ are continuous,
we shall see that Lemma~\ref{lem:qp:gibb} is a special case of Theorem~\ref{thm:gibbsx0}, which characterizes the Gibbs phenomenon at an arbitrary point of approximation by quasi-projection operators.

Before addressing the Gibbs phenomenon of approximation by general quasi-projection operators, let us recall some well known results on quasi-projection operators from approximation theory.
For $m\in \N$, the Sobolev space $H^m(\R)$ consists of all functions $f\in \Lp{2}$ such that $f,f',\ldots, f^{(m)}\in \Lp{2}$ (all the derivatives are in the sense of distributions).
By $\PL_{m-1}$ we denote the set of all polynomials having degree less than $m$.
The following result is well known in approximation theory, e.g., see \cite{jz95,jia04}, \cite[Theorem~5.4.2]{hanbook} and references therein.

\begin{theorem}\label{thm:appr}
Let $\phi$ and $\tilde{\phi}$ be $r\times 1$ vectors of compactly supported functions in $\Lp{2}$. Let
$\qp_n, n\in \N\cup\{0\}$ be the quasi-projection operators defined in \eqref{qp}. For $m\in \N$,
$\{\qp_n\}_{n\in \N}$ has approximation order $m$, that is, there exists a positive constant $C$ such that
\be \label{apprord}
\|\qp_n f-f\|_{\Lp{2}}\le C 2^{-nm} \|f^{(m)}\|_{\Lp{2}},\qquad \forall\; f\in H^m(\R), n\in \N
\ee
if and only if $\qp \pp=\pp$ for all polynomials $\pp\in \PL_{m-1}$ of degree less than $m$, where $\qp:=\qp_0$ is defined in \eqref{qp:0}.
\end{theorem}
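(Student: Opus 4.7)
The plan hinges on two standard ingredients. First, the dyadic-scaling identity $[\qp_n f](2^{-n}\cdot)=\qp(f(2^{-n}\cdot))$ noted just above this theorem, which turns the $n$-dependent bound \eqref{apprord} into an equivalent $n$-free unit-scale estimate once one tracks the relevant Jacobians. Second, the \emph{locality} of $\qp$: since $\phi$ and $\tilde{\phi}$ are compactly supported, there is a constant $K>0$ such that, for every interval $I\subset\R$, $[\qp f]|_I$ depends only on $f|_{I+[-K,K]}$ and $\|\qp f\|_{L_2(I)}\le C_0\|f\|_{L_2(I+[-K,K])}$, with $C_0$ independent of $f$ and $I$; accordingly, $\qp_n$ is local at scale $2^{-n}K$.

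For sufficiency, assume $\qp\pp=\pp$ for every $\pp\in\PL_{m-1}$. I would first prove the unit-scale estimate $\|\qp g-g\|_{\Lp{2}}\le C\|g^{(m)}\|_{\Lp{2}}$ for $g\in H^m(\R)$. Tile $\R$ by unit intervals $\{I_j\}_{j\in\Z}$ and set $\tilde I_j:=I_j+[-K,K]$. The Bramble--Hilbert lemma supplies, for each $j$, a polynomial $\pp_j\in\PL_{m-1}$ with $\|g-\pp_j\|_{L_2(\tilde I_j)}\le C_1\|g^{(m)}\|_{L_2(\tilde I_j)}$ and $C_1$ independent of $j$. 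Polynomial reproduction rewrites $\qp g-g=\qp(g-\pp_j)-(g-\pp_j)$ on $I_j$; locality then yields $\|\qp g-g\|_{L_2(I_j)}\le C_2\|g^{(m)}\|_{L_2(\tilde I_j)}$, and squaring and summing in $j$ (with bounded overlap of the $\tilde I_j$'s) gives the unit-scale bound. Finally, substituting $g(x)=f(2^{-n}x)$ and using the Jacobian identities $\|g^{(m)}\|_{\Lp{2}}=2^{-n(m-1/2)}\|f^{(m)}\|_{\Lp{2}}$ and $\|\qp g-g\|_{\Lp{2}}=2^{n/2}\|\qp_n f-f\|_{\Lp{2}}$ transforms the unit-scale estimate into \eqref{apprord}.

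For necessity, assume \eqref{apprord} and fix $\pp\in\PL_{m-1}$. Choose $\eta\in C_c^\infty(\R)$ with $\eta\equiv 1$ on $[-2,2]$ and set $f:=\eta\pp\in H^m(\R)$. Locality of $\qp_n$ at scale $2^{-n}K$ ensures that, for every $x\in[-1,1]$ and all sufficiently large $n$, the value $[\qp_n f](x)$ only sees $f|_{[-2,2]}=\pp$, so $[\qp_n f](x)=[\qp_n\pp](x)$ on $[-1,1]$. Consequently
\be \label{plan:decay}
\|\qp_n\pp-\pp\|_{L_2([-1,1])}\le \|\qp_n f-f\|_{\Lp{2}}\le C\,2^{-nm}\|f^{(m)}\|_{\Lp{2}},
\ee
which tends to $0$ as $n\to\infty$. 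Rescaling via $[\qp_n\pp](2^{-n}\cdot)=\qp(\pp(2^{-n}\cdot))$ and expanding $\pp(2^{-n}y)=\sum_{j=0}^{\deg\pp}a_j\,2^{-nj}y^j$, I would induct on $j$: the case $j=0$ forces $\qp 1=1$ directly from \eqref{plan:decay}, and, granted $\qp y^i=y^i$ for every $i<j$, the $2^{-nj}$-order term in the expansion isolates the monomial $y^j$ and the remaining decay in \eqref{plan:decay} forces $\qp y^j=y^j$.

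The main obstacle is this last induction: one has to upgrade an $L_2$-decay statement for rescaled residuals into the pointwise identity $\qp y^j=y^j$. I would handle it by first using translation-invariance of $\qp$ (immediate from \eqref{qp:0}) together with locality to show that each residual $\qp y^j-y^j$ has at worst polynomial growth, after which the prefactors $2^{-nj}$ in the monomial expansion, weighed against the decay rate $2^{-nm}$ in \eqref{plan:decay}, single out one monomial at a time. Apart from this point, both directions reduce to routine scaling and covering arguments; the full picture is in fact the classical Strang--Fix theorem, whose complete treatment is carried out in the references cited alongside the statement.
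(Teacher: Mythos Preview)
The paper does not prove Theorem~\ref{thm:appr}: it states the result as ``well known in approximation theory'' and cites \cite{jz95,jia04} and \cite[Theorem~5.4.2]{hanbook} for the proof. So there is no paper-proof to compare against; your proposal supplies an argument where the paper simply defers to the literature.

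Your sketch is correct and follows the standard Strang--Fix route. Two remarks. First, your ``main obstacle'' in the necessity direction is lighter than you make it sound. For a monomial $\pp(x)=x^j$ with $j\le m-1$, the homogeneity $\pp(2^{-n}\cdot)=2^{-nj}\pp$ gives $[\qp_n\pp-\pp](x)=2^{-nj}[\qp\pp-\pp](2^nx)$, so
\[
\int_{-2^n}^{2^n}|[\qp\pp-\pp](y)|^2\,dy
=2^{n(2j+1)}\|\qp_n\pp-\pp\|_{L_2([-1,1])}^2
\le C'\,2^{n(2j-2m+1)}\longrightarrow 0,
\]
since $2j-2m+1\le -1$. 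The left side is nondecreasing in $n$, so $\qp\pp=\pp$ a.e.\ directly, with no induction or growth estimate needed. Second, be careful with the phrase ``translation-invariance of $\qp$'': the operator in \eqref{qp:0} commutes only with \emph{integer} shifts, not all real shifts. That weaker invariance (together with locality) is still enough for the polynomial-growth claim you state, but as the computation above shows you do not actually need it.
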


For two smooth functions $f$ and $g$, by $f(\xi)=g(\xi)+\bo(|\xi|^m)$ as $\xi\to 0$ we mean $f^{(j)}(0)=g^{(j)}(0)$ for all $j=0,\ldots,m-1$.
By $\td$ we denote the Dirac sequence such that $\td(0)=1$ and $\td(k)=0$ for all $k\in \Z\bs\{0\}$.
We say that $\{\qp_n\}_{n\in \N}$ has \emph{accuracy order $m$} if $\qp \pp=\pp$ for all $\pp\in \PL_{m-1}$, which is also equivalent to that $\qp_n \pp =\pp$ for all $\pp\in \PL_{m-1}$ and $n\in \N\cup\{0\}$.
It is also well known (e.g., see \cite[Proposition~5.5.2]{hanbook}) that $\qp \pp=\pp$ for all $\pp\in \PL_{m-1}$ if and only if
\be \label{polyprod}
\ol{\wh{\tilde{\phi}}(\xi)}^\tp \wh{\phi}(\xi+2\pi k)=\td(k)+\bo(|\xi|^m),\qquad \xi\to 0\quad \mbox{for all}\quad k\in \Z.
\ee
Similarly, we can define quasi-projection operators $\tilde{\qp}_n$ by switching the roles of $\phi$ and $\tilde{\phi}$ as follows:
\be \label{qp:dual}
\tilde{\qp}:=\tilde{\qp}_0\quad \mbox{and}\quad
\tilde{\qp}_n f:=\sum_{k\in \Z} \la f, 2^n \phi(2^n \cdot-k) \ra \tilde{\phi}(2^n \cdot-k),\qquad  n\in \N\cup\{0\}, f\in \lLp{2}.
\ee
Then $\tilde{\qp} \pp=\pp$ for all $\pp\in \PL_{m-1}$ if and only if
\be \label{tildepolyprod}
\ol{\wh{\phi}(\xi)}^\tp \wh{\tilde{\phi}}(\xi+2\pi k)=\td(k)+\bo(|\xi|^m),\qquad \xi\to 0\quad \mbox{for all}\quad k\in \Z.
\ee

A simple sufficient condition
for $\{\qp_n\}_{n\in \N}$  to be free of the Gibbs phenomenon is as follows.

\begin{prop}\label{prop:nogibb}
Let $\phi$ and $\tilde{\phi}$ be $r\times 1$ vectors of compactly supported real-valued functions in $\Lp{2}$ such that \eqref{basic:cond} is satisfied, i.e., $\qp 1=1$, where $\qp$ is defined in \eqref{qp:0}.
\begin{enumerate}
\item[(i)] If all the entries in the vector functions $\phi$, $\int_{-\infty}^{k} \tilde{\phi}(x) dx$ and $\int_k^\infty \tilde{\phi}(x) dx$ are nonnegative for all $k\in \Z$, then $-1\le [\qp \sgn](x)\le 1$ for almost every $x\in \R$ and in particular, \eqref{qp:nogibbs} holds.
\item[(ii)] If all the entries in $\phi$ and $\tilde{\phi}$ are nonnegative, then $-1\le [\qp \sgn](x)\le 1$ for almost every $x\in \R$ and \eqref{qp:nogibbs} holds, but $\{\qp_n\}_{n\in \N}$ has accuracy order no more than two, where $\qp_n, n\in \N$ are defined in \eqref{qp}
\end{enumerate}
\end{prop}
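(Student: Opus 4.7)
For part (i), the plan is to evaluate $[\qp\sgn](x)$ directly from \eqref{qp:0}. The change of variables $z=y-k$ inside the inner product gives $\la\sgn,\tilde{\phi}(\cdot-k)\ra = T_{+}(k)^{\tp} - T_{-}(k)^{\tp}$, where $T_{+}(k):=\int_{-k}^{\infty}\tilde{\phi}(z)\,dz$ and $T_{-}(k):=\int_{-\infty}^{-k}\tilde{\phi}(z)\,dz$ satisfy $T_{+}(k)+T_{-}(k) = \wh{\tilde{\phi}}(0)$ for every $k\in\Z$. Since \eqref{basic:cond} is equivalent to $\qp 1 = 1$ (noted right after Theorem~\ref{thm:main}), i.e.\ $\sum_{k\in\Z}\wh{\tilde{\phi}}(0)^{\tp}\phi(x-k)\equiv 1$, I can rewrite
\[
[\qp\sgn](x) = 2\sum_{k\in\Z}T_{+}(k)^{\tp}\phi(x-k) - 1 = 1 - 2\sum_{k\in\Z}T_{-}(k)^{\tp}\phi(x-k).
\]
Because $-k$ also ranges over $\Z$, the hypothesis of (i) forces every entry of $T_{\pm}(k)$ and of $\phi(x-k)$ to be nonnegative, so both sums above are nonnegative; consequently $-1\le [\qp\sgn](x)\le 1$ for a.e.\ $x\in\R$, which in particular yields \eqref{qp:nogibbs}.

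Part (ii) splits into two claims. The bound on $\qp\sgn$ is immediate from (i) since entrywise nonnegativity of $\phi$ and $\tilde{\phi}$ makes all the relevant tail integrals nonnegative. For the upper bound of two on the accuracy order I argue by contradiction: assume $\{\qp_n\}$ has accuracy order at least three, so \eqref{polyprod} holds with $m=3$. Setting $k=0$ and Taylor-expanding $\ol{\wh{\tilde{\phi}}(\xi)}^{\tp}\wh{\phi}(\xi)=1+\bo(|\xi|^{3})$ at $\xi=0$, and writing $A_{j}:=\int\phi^{j}$, $a_{j}:=\int x\,\phi^{j}(x)\,dx$, $b_{j}:=\int x^{2}\phi^{j}(x)\,dx$ together with $\tilde{A}_{j},\tilde{a}_{j},\tilde{b}_{j}$ for the analogous moments of $\tilde{\phi}^{j}$, the coefficients of $\xi^{0},\xi^{1},\xi^{2}$ yield
\[
\sum_{j=1}^{r}A_{j}\tilde{A}_{j}=1,\quad \sum_{j=1}^{r}(\tilde{a}_{j}A_{j}-\tilde{A}_{j}a_{j})=0,\quad \sum_{j=1}^{r}(\tilde{b}_{j}A_{j}+\tilde{A}_{j}b_{j}) = 2\sum_{j=1}^{r}\tilde{a}_{j}a_{j}.
\]
The first identity forces some $j_{0}$ with $A_{j_{0}}\tilde{A}_{j_{0}}>0$. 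For each $j$ with $A_{j}\tilde{A}_{j}>0$, Cauchy--Schwarz applied to the nonnegative measures $\phi^{j}\,dx$ and $\tilde{\phi}^{j}\,dx$ gives $A_{j}b_{j}\ge a_{j}^{2}$ and $\tilde{A}_{j}\tilde{b}_{j}\ge \tilde{a}_{j}^{2}$, and combining with AM--GM yields
\[
\tilde{b}_{j}A_{j} + \tilde{A}_{j}b_{j} \ge \frac{A_{j}\tilde{a}_{j}^{2}}{\tilde{A}_{j}} + \frac{\tilde{A}_{j}a_{j}^{2}}{A_{j}} \ge 2|\tilde{a}_{j}a_{j}| \ge 2\tilde{a}_{j}a_{j}.
\]
Since $\phi^{j_{0}}\in\Lp{2}$ with $A_{j_{0}}>0$ cannot be supported at a single point, Cauchy--Schwarz is strict at $j=j_{0}$; summing over $j$ then produces $\sum_{j}(\tilde{b}_{j}A_{j}+\tilde{A}_{j}b_{j})>2\sum_{j}\tilde{a}_{j}a_{j}$, contradicting the third identity.

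The main obstacle is the vector case $r>1$. For $r=1$ a single Cauchy--Schwarz on the positive measure $\phi\,dx$ already forces the contradiction directly, but for $r>1$ the first moments $a_{j},\tilde{a}_{j}$ of different components may have mixed signs, so one must pair $\tilde{b}_{j}A_{j}$ with $\tilde{A}_{j}b_{j}$ symmetrically and insert an absolute value at the AM--GM step to dominate $2\tilde{a}_{j}a_{j}$ regardless of sign. One also has to dispose of the components with $A_{j}\tilde{A}_{j}=0$ (where one of $\phi^{j},\tilde{\phi}^{j}$ vanishes identically and all the involved moments are zero) so that they do not affect the strict sum inequality. Once those algebraic steps are in place, the rest reduces to routine Taylor-expansion bookkeeping in \eqref{polyprod}.
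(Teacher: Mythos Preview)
Your proof of part~(i), and of the first claim in part~(ii), is correct and essentially identical to the paper's: both write
\[
[\qp\sgn](x)=2\sum_{k\in\Z}T_{+}(k)^{\tp}\phi(x-k)-1=1-2\sum_{k\in\Z}T_{-}(k)^{\tp}\phi(x-k)
\]
and read off the bounds from nonnegativity of the tail integrals and of $\phi$.

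For the accuracy-order bound in part~(ii) your argument is correct but takes a different route from the paper. The paper introduces the scalar cross-correlation
\[
\eta(x):=\int_{\R}\tilde{\phi}(x+y)^{\tp}\phi(y)\,dy,
\]
observes that $\eta\ge 0$ pointwise (immediate from nonnegativity of all entries of $\phi$ and $\tilde{\phi}$), and notes that $\wh{\eta}(\xi)=\ol{\wh{\tilde{\phi}}(\xi)}^{\tp}\wh{\phi}(\xi)=1+\bo(|\xi|^{3})$ forces $\int_{\R}x^{2}\eta(x)\,dx=-[\wh{\eta}]''(0)=0$; together with $\eta\ge 0$ this gives $\eta\equiv 0$, contradicting $\wh{\eta}(0)=1$. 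You instead expand the same quantity componentwise into zeroth, first and second moments and reach the contradiction via Cauchy--Schwarz on each nonnegative measure $\phi^{j}\,dx$, $\tilde{\phi}^{j}\,dx$ followed by AM--GM. The paper's route is shorter and more conceptual: the single nonnegative scalar $\eta$ absorbs at once the component-by-component bookkeeping, the sign issue you handled with the $|\tilde{a}_{j}a_{j}|$ step, and the disposal of indices with $A_{j}\tilde{A}_{j}=0$. Your route, in exchange, makes the moment relations completely explicit and shows directly where the strictness comes from.
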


\bp Note that $\la \tilde{\phi}(\cdot-k),1\ra=\wh{\tilde{\phi}}(0)$ and $\la \tilde{\phi}(\cdot-k),\sgn\ra=2\int_0^\infty \tilde{\phi}(y-k) dy-\wh{\tilde{\phi}}(0)$. Hence,
\be \label{Qsgn}
[\qp \sgn](x)=2\int_0^\infty K(x,y) dy-\ol{\wh{\tilde{\phi}}(0)}^\tp \sum_{k\in \Z} \phi(x-k) \quad \mbox{with}\quad
K(x,y):=\sum_{k\in \Z} \ol{\tilde{\phi}(y-k)}^\tp \phi(x-k).
\ee
Since \eqref{basic:cond} holds (i.e., $\qp 1=1$) and $\tilde{\phi}$ is real-valued,
by the assumptions in item (i), we have
\[
[\qp \sgn](x)=2\int_0^\infty K(x,y) dy-1=2\sum_{k\in \Z}\left(\int_{-k}^\infty \tilde{\phi}(y) dy\right)^\tp \phi(x-k)-1 \ge -1
\]
and
\[
[\qp \sgn](x)=1-2\int_{-\infty}^0 K(x,y) dy=
1-2\sum_{k\in \Z}\left(\int^{-k}_{-\infty} \tilde{\phi}(y) dy\right)^\tp \phi(x-k)\le 1
\]
for almost all $x\in \R$. Hence, $-1\le [\qp \sgn](x)\le 1$ for almost every $x\in \R$. This proves item (i).

Obviously, the conditions in item (ii) imply the conditions in item (i). Therefore, \eqref{qp:nogibbs} must hold.
We use proof by contradiction to show that $\{\qp_n\}_{n\in \N}$ has accuracy order no more than two. Suppose that
$\{\qp_n\}_{n\in \N}$ has accuracy order at least three.
Define $\eta(x):=\int_{\R} \ol{\tilde{\phi}(x+y)}^\tp \phi(y) dy$. By \eqref{polyprod} with $m=3$,
we must have $\wh{\eta}(\xi)=\ol{\wh{\tilde{\phi}}(\xi)}^\tp \wh{\phi}(\xi)=1+\bo(|\xi|^3)$ as $\xi \to 0$, from which we conclude that $[\wh{\eta}]''(0)=0$ and $\wh{\eta}(0)=1$. By $0=[\wh{\eta}]''(0)=\int_{\R} \eta(x)(-i x)^2 dx$, we must have $\int_\R x^2 \eta(x) dx=0$. However, since both $\tilde{\phi}$ and $\phi$ are nonnegative, the scalar function $\eta$ must be nonnegative. The condition $\int_\R x^2 \eta(x) dx=0$ will then force $\eta=0$, a contradiction to $\wh{\eta}(0)=1$.
This proves that
$\{\qp_n\}_{n\in \N}$ must have accuracy order no more than two. This proves item (ii).
\ep

Under the condition in \eqref{basic:cond} (i.e., $\qp 1=1$), by \eqref{Qsgn}, the criterion in \eqref{qp:nogibbs} of Lemma~\ref{lem:qp:gibb} for $\{\qp_n\}_{n\in \N}$  to be free of the Gibbs phenomenon is equivalent to
\be \label{gibbs:K}
\int_0^\infty K(x,y) dy\le 1 \quad a.e.\, x>0
\quad \mbox{and}\quad
\int_0^\infty K(x,y) dy\ge 0 \quad a.e.\, x<0.
\ee
The criterion in \eqref{gibbs:K} for $\{\qp_n\}_{n\in \N}$  to be free of the Gibbs phenomenon is already known in \cite[Theorem~3.1]{kel96} for orthogonal wavelets and in \cite{gc95} for more general expansions.
Many functions in wavelet analysis and approximation theory are nonnegative. One important family of such functions are B-splines. \emph{The B-spline function $B_m$ of order $m$} is defined by
\be \label{bsplines}
B_1:=\chi_{(0,1]} \quad \mbox{and}\quad
B_m:=B_{m-1}*B_1=\int_0^1 B_{m-1}(\cdot-t) dt,\qquad m\in \N.
\ee
It is easy to check that $B_m$ is a nonnegative piecewise polynomial with support $[0,m]$ and $\wh{B_m}(\xi)=(\frac{1-e^{-i\xi}}{i\xi})^m$. Therefore, $\wh{B_m}(0)=1$ and $\wh{B_m}(\xi+2\pi k)=\bo(|\xi|^m)$ as $\xi \to 0$ for all $k\in \Z \bs\{0\}$.

For the Gibbs phenomenon of approximation by general quasi-projection operators, we have

\begin{theorem}\label{thm:qp}
Let $\phi$ and $\tilde{\phi}$ be $r\times 1$ vectors of compactly supported functions in $\Lp{2}$. Let
$\qp_n$ be the quasi-projection operators defined in \eqref{qp} and $\qp:=\qp_0$ in \eqref{qp:0}. If \eqref{basic:cond} holds (i.e., $\qp 1=1$) and
\be \label{basic:cond:dual}
\ol{\wh{\phi}(\xi)}^\tp \wh{\tilde{\phi}}(\xi+2\pi k)=\td(k)+\bo(|\xi|^2),\qquad \xi \to 0\quad \mbox{for all}\;\; k\in \Z,
\ee
then
\be \label{sgn:qp}
\la (\cdot), \sgn-\qp \sgn\ra=-[\ol{\wh{\phi}}^\tp \wh{\tilde{\phi}}]''(0),
\ee
where $(\cdot)$ stands for the linear polynomial with $(x):=x$ for all $x\in \R$.
If in addition $[\ol{\wh{\phi}}^\tp \wh{\tilde{\phi}}]''(0)=0$, both $\phi$ and $\tilde{\phi}$ are real-valued, and
\be \label{sgn:cond}
\qp \sgn\ne \sgn \quad \mbox{on some set of positive measure},
\ee
then $\{\qp_n \}_{n\in \N}$ must exhibit the Gibbs phenomenon at the origin.
\end{theorem}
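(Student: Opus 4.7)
The plan is to split the proof into two steps: first establish the identity \eqref{sgn:qp}, and then deduce the Gibbs phenomenon from it via Lemma~\ref{lem:qp:gibb}. For the identity, I would reduce the master formula \eqref{identity} of Theorem~\ref{thm:main} by making $\ol{\wh{\phi}(0)}^\tp\kappa_1$ and $\ol{\wh{\phi}(0)}^\tp\kappa_2$ into explicit local expressions in the Fourier data of $\phi,\tilde{\phi}$ at the origin. For the Gibbs conclusion, I would use the vanishing of $\int_\R x(\sgn-\qp\sgn)(x)\,dx$ (forced by \eqref{sgn:qp} together with the assumption $[\ol{\wh{\phi}}^\tp\wh{\tilde{\phi}}]''(0)=0$) against a short sign argument based on Lemma~\ref{lem:qp:gibb}.

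Concretely, the key algebraic step for the identity is the decomposition $n=x-(x-n)$, which gives
\begin{align*}
\sum_n n\,\tilde{\phi}(x-n)&=xP_0(x)-P_1(x),\\
\sum_n n^2\,\tilde{\phi}(x-n)&=x^2 P_0(x)-2x P_1(x)+P_2(x),
\end{align*}
where $P_j(x):=\sum_n(x-n)^j\tilde{\phi}(x-n)=i^j\sum_m [\wh{\tilde{\phi}}]^{(j)}(2\pi m)e^{2\pi imx}$ by Poisson summation. Integrating over $[0,1]$ and multiplying on the left by $\ol{\wh{\phi}(0)}^\tp$, the Strang--Fix relations $\ol{\wh{\phi}(0)}^\tp\wh{\tilde{\phi}}(2\pi m)=\td(m)$ and $\ol{\wh{\phi}(0)}^\tp[\wh{\tilde{\phi}}]'(2\pi m)=-\ol{[\wh{\phi}]'(0)}^\tp\wh{\tilde{\phi}}(2\pi m)$ (obtained by evaluating \eqref{basic:cond:dual} and its first derivative at $\xi=0$) kill the diagonal terms and produce
\begin{align*}
\ol{\wh{\phi}(0)}^\tp\kappa_1&=\tfrac12+i\,\ol{[\wh{\phi}]'(0)}^\tp\wh{\tilde{\phi}}(0),\\
\ol{\wh{\phi}(0)}^\tp\kappa_2&=\tfrac13+i\,\ol{[\wh{\phi}]'(0)}^\tp\wh{\tilde{\phi}}(0)-\ol{\wh{\phi}(0)}^\tp[\wh{\tilde{\phi}}]''(0)+R,
\end{align*}
with $R:=\sum_{m\ne 0}\ol{[\wh{\phi}]'(0)}^\tp\wh{\tilde{\phi}}(2\pi m)/(\pi m)$ a residual off-diagonal sum. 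The same expansion of $\kappa_1$ shows that the term $-2i\,\ol{[\wh{\phi}]'(0)}^\tp\kappa_1$ in \eqref{identity} contributes exactly $-R$; substituting everything into \eqref{identity}, the numerical constants collapse as $\tfrac16-\tfrac12+\tfrac13=0$, the residual $R$ cancels, and after a final use of the first-order relation $\ol{\wh{\phi}(0)}^\tp[\wh{\tilde{\phi}}]'(0)=-\ol{[\wh{\phi}]'(0)}^\tp\wh{\tilde{\phi}}(0)$ the surviving local terms recombine into $-[\ol{\wh{\phi}}^\tp\wh{\tilde{\phi}}]''(0)$, proving \eqref{sgn:qp}.

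For the Gibbs conclusion, set $h:=\sgn-\qp\sgn$; by Theorem~\ref{thm:main} and the reality of $\phi,\tilde{\phi}$, $h$ is real-valued, compactly supported and in $\Lp{2}$, while \eqref{sgn:qp} combined with $[\ol{\wh{\phi}}^\tp\wh{\tilde{\phi}}]''(0)=0$ gives $\int_\R x\,h(x)\,dx=0$. Suppose for contradiction that $\{\qp_n\}_{n\in\N}$ does not exhibit the Gibbs phenomenon at the origin; then Lemma~\ref{lem:qp:gibb} forces $h(x)\ge 0$ a.e.\ on $(0,\infty)$ and $h(x)\le 0$ a.e.\ on $(-\infty,0)$, whence $x\,h(x)\ge 0$ a.e.\ on $\R$; combined with $\int_\R x\,h(x)\,dx=0$ this yields $h=0$ a.e., contradicting \eqref{sgn:cond}. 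The main difficulty is the non-local character of $R$: since \eqref{basic:cond:dual} provides only accuracy two on the dual side, $\ol{\wh{\phi}(0)}^\tp\kappa_2$ is a priori not a local quantity, and the identity \eqref{sgn:qp} emerges only after the exact cancellation of $R$ between the $\kappa_2$ and the $\ol{[\wh{\phi}]'(0)}^\tp\kappa_1$ contributions is verified in closed form.
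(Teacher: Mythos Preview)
Your proposal is correct. Both parts of the argument are sound: the Poisson-summation computation of $\ol{\wh{\phi}(0)}^\tp\kappa_1$ and $\ol{\wh{\phi}(0)}^\tp\kappa_2$ checks out line by line (the residual $R=\sum_{m\ne 0}\ol{[\wh{\phi}]'(0)}^\tp\wh{\tilde{\phi}}(2\pi m)/(\pi m)$ does appear in $\ol{\wh{\phi}(0)}^\tp\kappa_2$ via the $-2\int_0^1 xP_1$ piece and is exactly cancelled by the $-2i\,\ol{[\wh{\phi}]'(0)}^\tp\kappa_1$ term of \eqref{identity}), and the Gibbs conclusion via Lemma~\ref{lem:qp:gibb} is identical to the paper's.

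Your route differs from the paper's mainly in packaging. The paper does not invoke Poisson summation; instead it reads \eqref{basic:cond:dual} as saying that the \emph{dual} quasi-projection $\tilde{\qp}$ has accuracy two, writes the pointwise identities $\tilde{\qp}1=1$ and $\tilde{\qp}x=x$, and then integrates these against $x^2$ and $x$ on $[0,1]$. This produces two linear relations involving $\kappa_1,\kappa_2$ and an auxiliary quantity $\kappa_*:=\int_0^1\sum_k k(x-k)\tilde{\phi}(x-k)\,dx$, which plays the role your $R$ plays: it is eliminated by combining the two relations, yielding \eqref{eqkappa*}, and then substituting into \eqref{identity}. Your approach is a bit more direct (you never need to name $\tilde{\qp}$ or $\kappa_*$) and makes the ``non-local'' nature of the cancellation explicit through the Fourier series of the periodizations $P_j$; the paper's approach is more conceptual, tying the computation to the dual accuracy order rather than to termwise Fourier identities. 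The only caveat in your version is that the Fourier-series manipulations and the definition of $R$ should be justified (e.g.\ via $P_j\in L_2([0,1])$ and Cauchy--Schwarz for the absolute convergence of $R$), whereas the paper's finite-sum/polynomial-reproduction argument sidesteps any convergence issue.
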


\bp
Since \eqref{basic:cond} holds, by Theorem~\ref{thm:main}, the identity in \eqref{identity} must be true.
By the definition of the quasi-projection operators $\tilde{\qp}$ and $\tilde{\qp}_n$ in \eqref{qp:dual},
since \eqref{basic:cond:dual} is just \eqref{tildepolyprod} with $m=2$, we conclude that \eqref{basic:cond:dual} is equivalent to saying that $\{\tilde{\qp}_n\}_{n\in \N}$ has accuracy order two, i.e., $\tilde{\qp}1=1$ and $\tilde{\qp} x=x$. By calculation, we have
$\la (\cdot), \phi(\cdot-k)\ra=\la (\cdot)+k, \phi\ra=-i\ol{[\wh{\phi}]'(0)}^\tp+k \ol{\wh{\phi}(0)}^\tp$. Hence,
\be \label{qp:dual:x}
x=\tilde{\qp} x=\sum_{k\in \Z} \la (\cdot), \phi(\cdot-k)\ra \tilde{\phi}(x-k)=\sum_{k\in \Z} \left( -i\ol{[\wh{\phi}]'(0)}^\tp+k \ol{\wh{\phi}(0)}^\tp
\right) \tilde{\phi}(x-k).
\ee
Multiplying $x$ to both sides of the above identity in \eqref{qp:dual:x} and integrating on $[0,1]$, by $x=(x-k)+k$, we have
\begin{align*}
\frac{1}{3}=\int_0^1 x^2 dx
&=-i\ol{[\wh{\phi}]'(0)}^\tp\sum_{k\in \Z} \int_0^1 x\tilde{\phi}(x-k)dx+\ol{\wh{\phi}(0)}^\tp \sum_{k\in \Z} \int_0^1 kx \tilde{\phi}(x-k)dx\\
&=-i\ol{[\wh{\phi}]'(0)}^\tp\left(\int_0^1 \sum_{k\in \Z} (x-k)\tilde{\phi}(x-k)dx+\int_0^1 \sum_{k\in \Z} k\tilde{\phi}(x-k) dx\right)\\
&\qquad +
\ol{\wh{\phi}(0)}^\tp \left(\int_0^1 \sum_{k\in \Z} k(x-k) \tilde{\phi}(x-k)dx+\int_0^1 \sum_{k\in \Z} k^2 \tilde{\phi}(x-k) dx\right)
\\
&=-i\ol{[\wh{\phi}]'(0)}^\tp\left(\int_\R  x\tilde{\phi}(x)dx+\kappa_1\right)+
\ol{\wh{\phi}(0)}^\tp (\kappa_*+\kappa_2)\\
&=\ol{[\wh{\phi}]'(0)}^\tp [\wh{\tilde{\phi}}]'(0)
-i\ol{[\wh{\phi}]'(0)}^\tp \kappa_1+\ol{\wh{\phi}(0)}^\tp
\kappa_*+\ol{\wh{\phi}(0)}^\tp \kappa_2,
\end{align*}
where $\kappa_1,\kappa_2$ are defined in \eqref{kappa}, and
$\kappa_*:=\int_{0}^{1} \sum_{k\in \Z} k(x-k) \tilde{\phi}(x-k)dx$.
Since $\la 1, \phi(\cdot-k)\ra=\ol{\wh{\phi}(0)}^\tp$ and
$\tilde{\qp} 1=1$, we have
\[
1=[\tilde{\qp} 1](x)=\sum_{k\in \Z}\la 1, \phi(\cdot-k)\ra \tilde{\phi}(x-k)=
\ol{\wh{\phi}(0)}^\tp \sum_{k\in \Z} \tilde{\phi}(x-k).
\]
Multiplying $x^2$ on both sides of the above identity and integrating on $[0,1]$, we have
\begin{align*}
\frac{1}{3}&=\int_0^1 x^2 dx
=\ol{\wh{\phi}(0)}^\tp \sum_{k\in \Z} \int_0^1 x^2 \tilde{\phi}(x-k)dx
=\ol{\wh{\phi}(0)}^\tp \int_0^1 \sum_{k\in \Z} \Big((x-k)^2+2k (x-k)+k^2\Big) \tilde{\phi}(x-k)dx\\
&=\ol{\wh{\phi}(0)}^\tp \left(
\int_0^1 \sum_{k\in \Z} (x-k)^2 \tilde{\phi}(x-k)dx+
\int_0^1 \sum_{k\in \Z} 2k(x-k) \tilde{\phi}(x-k)dx+
\int_0^1 \sum_{k\in \Z} k^2 \tilde{\phi}(x-k)dx\right)\\
&=\ol{\wh{\phi}(0)}^\tp \left( \int_\R x^2 \tilde{\phi}(x) dx+2\kappa_*+\kappa_2\right)\\
&=-\ol{\wh{\phi}(0)}^\tp [\wh{\tilde{\phi}}]''(0)
+2\ol{\wh{\phi}(0)}^\tp \kappa_*+
\ol{\wh{\phi}(0)}^\tp  \kappa_2.
\end{align*}
That is, we proved that \eqref{basic:cond:dual} implies
\[
\frac{1}{3}=
\ol{[\wh{\phi}]'(0)}^\tp [\wh{\tilde{\phi}}]'(0)
-i\ol{[\wh{\phi}]'(0)}^\tp \kappa_1+\ol{\wh{\phi}(0)}^\tp
\kappa_*+\ol{\wh{\phi}(0)}^\tp \kappa_2,
\qquad \frac{1}{3}=
-\ol{\wh{\phi}(0)}^\tp [\wh{\tilde{\phi}}]''(0)
+2\ol{\wh{\phi}(0)}^\tp \kappa_*+
\ol{\wh{\phi}(0)}^\tp  \kappa_2.
\]
Multiplying the first identity by $2$ and subtracting the second identity, we end up with
\be \label{eqkappa*}
\frac{1}{3}=2\ol{[\wh{\phi}]'(0)}^\tp [\wh{\tilde{\phi}}]'(0)-2i \ol{[\wh{\phi}]'(0)}^\tp \kappa_1 +\ol{\wh{\phi}(0)}^\tp \kappa_2+\ol{\wh{\phi}(0)}^\tp [\wh{\tilde{\phi}}]''(0).
\ee
Integrating over $[0,1]$ on both sides of \eqref{qp:dual:x}, we have
\[
\frac{1}{2}=\int_0^1 x dx=
-i\ol{[\wh{\phi}]'(0)}^\tp \sum_{k\in \Z} \int_0^1 \tilde{\phi}(x-k) dx+\ol{\wh{\phi}(0)}^\tp
\int_0^1 \sum_{k\in \Z} k \tilde{\phi}(x-k) dx=
-i\ol{[\wh{\phi}]'(0)}^\tp \wh{\tilde{\phi}}(0)+\ol{\wh{\phi}(0)}^\tp \kappa_1,
\]
from which we have $\ol{\wh{\phi}(0)}^\tp \kappa_1=\frac{1}{2}+i\ol{[\wh{\phi}]'(0)}^\tp \wh{\tilde{\phi}}(0)$.
Now we conclude from \eqref{identity} in Theorem~\ref{thm:main} that
\begin{align*}
\la (\cdot), \sgn-\qp \sgn\ra
&=\frac{1}{6}
-\ol{\wh{\phi}(0)}^\tp (\kappa_1-\kappa_2)
-\ol{[\wh{\phi}]''(0)}^\tp \wh{\tilde{\phi}}(0)+i\ol{[\wh{\phi}]'(0)}^\tp \wh{\tilde{\phi}}(0)
-2i\ol{[\wh{\phi}]'(0)}^\tp \kappa_1\\
&=
\frac{1}{6}
-\left(\frac{1}{2}+i\ol{[\wh{\phi}]'(0)}^\tp \wh{\tilde{\phi}}(0)\right)
+\ol{\wh{\phi}(0)}^\tp \kappa_2
-\ol{[\wh{\phi}]''(0)}^\tp \wh{\tilde{\phi}}(0)+i\ol{[\wh{\phi}]'(0)}^\tp \wh{\tilde{\phi}}(0)
-2i\ol{[\wh{\phi}]'(0)}^\tp \kappa_1\\
&=-\frac{1}{3}+\ol{\wh{\phi}(0)}^\tp \kappa_2
-\ol{[\wh{\phi}]''(0)}^\tp \wh{\tilde{\phi}}(0)
-2i\ol{[\wh{\phi}]'(0)}^\tp \kappa_1\\
&=
-2\ol{[\wh{\phi}]'(0)}^\tp [\wh{\tilde{\phi}}]'(0)-\ol{\wh{\phi}(0)}^\tp [\wh{\tilde{\phi}}]''(0)
-\ol{[\wh{\phi}]''(0)}^\tp \wh{\tilde{\phi}}(0)
=-[ \ol{\wh{\phi}}^\tp \wh{\tilde{\phi}}]''(0),
\end{align*}
where we used \eqref{eqkappa*} in the second-to-last identity.
Hence, \eqref{sgn:qp} must be true.

If in addition $[\ol{\wh{\phi}}^\tp \wh{\tilde{\phi}}]''(0)=0$, then it follows directly from \eqref{sgn:qp} that $\la (\cdot), \sgn-\qp \sgn\ra=0$.
Suppose that $\{\qp_n\}_{n\in \N}$ does not exhibit the Gibbs phenomenon at the origin. By Lemma~\ref{lem:qp:gibb}, \eqref{qp:nogibbs} must hold. Consequently,
$x(\sgn-\qp \sgn)(x)\ge 0$ for almost every $x\in \R$.
By $\la (\cdot), \sgn-\qp \sgn \ra=0$, we must have $x(\sgn-\qp \sgn)(x)=0$ for almost every $x\in \R$, which is a contradiction to our assumption in \eqref{sgn:cond}. Therefore, $\{\qp_n\}_{n\in \N}$ must exhibit the Gibbs phenomenon at the origin.
\ep

It is easy to see that \eqref{sgn:cond} is often true. For example, if $\phi$ is continuous, then \eqref{sgn:cond} must hold.
As a direct consequence of Theorem~\ref{thm:qp}, the following result says that quasi-projection operators having accuracy order higher than two must exhibit the Gibbs phenomenon.

\begin{cor}\label{cor:qp}
Let $\phi$ be an $r\times 1$ vector of compactly supported functions in $\Lp{2}$. Recall that the associated quasi-projection operators are defined by
\be \label{qp:phi}
\qp_n f=\sum_{k\in \Z} \la f, 2^n \phi(2^n \cdot-k)\ra \phi(2^n\cdot-k), \qquad  n\in \N\cup\{0\}, f\in \lLp{2}.
\ee
If $\{\qp_n\}_{n\in \N}$ has accuracy order higher than two, then $\la (\cdot), \sgn-\qp \sgn\ra=0$, where $\qp:=\qp_0$. If in addition \eqref{sgn:cond} holds, then $\{\qp_n\}_{n\in \N}$ must exhibit the Gibbs phenomenon at the origin.
\end{cor}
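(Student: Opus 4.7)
The plan is to reduce this corollary to the symmetric case $\tilde{\phi} := \phi$ of Theorem~\ref{thm:qp}.

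First, I would verify that the two hypotheses \eqref{basic:cond} and \eqref{basic:cond:dual} of Theorem~\ref{thm:qp} both hold automatically under the accuracy assumption. Accuracy order higher than two, in particular accuracy order at least two, means $\qp \pp = \pp$ for every $\pp \in \PL_1$, which by \eqref{polyprod} (equivalently \eqref{tildepolyprod}, since here $\tilde{\phi} = \phi$) is characterized by
\[
\ol{\wh{\phi}(\xi)}^\tp \wh{\phi}(\xi+2\pi k) = \td(k) + \bo(|\xi|^2),\qquad \xi \to 0,\; k\in \Z.
\]
Setting $\xi = 0$ gives \eqref{basic:cond}, and the full display is precisely \eqref{basic:cond:dual} with $\tilde{\phi}$ replaced by $\phi$. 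Thus Theorem~\ref{thm:qp} applies, and the identity \eqref{sgn:qp} specializes to $\la (\cdot), \sgn - \qp \sgn\ra = -[\ol{\wh{\phi}}^\tp \wh{\phi}]''(0)$.

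Next, I would exploit the extra order of accuracy. Applying \eqref{polyprod} with $m = 3$ and $k = 0$ gives $\ol{\wh{\phi}(\xi)}^\tp \wh{\phi}(\xi) = 1 + \bo(|\xi|^3)$ as $\xi \to 0$, so the first three Taylor coefficients of this scalar function at the origin agree with those of the constant $1$; in particular $[\ol{\wh{\phi}}^\tp \wh{\phi}]''(0) = 0$. Combined with the preceding identity this yields $\la (\cdot), \sgn - \qp \sgn\ra = 0$, establishing the first assertion.

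For the second assertion, the additional hypothesis \eqref{sgn:cond}, together with the real-valuedness of $\phi$ implicit in Definition~\ref{def:gibbs} (needed for $\qp_n$ to map real-valued functions to real-valued functions), places us precisely in the setting of the second half of Theorem~\ref{thm:qp}. That part of the theorem then delivers the Gibbs conclusion by contradiction: if $\{\qp_n\}_{n\in \N}$ were free of the Gibbs phenomenon at the origin, Lemma~\ref{lem:qp:gibb} would force $x(\sgn - \qp \sgn)(x) \ge 0$ for almost every $x\in \R$, and together with the vanishing integral $\la (\cdot), \sgn - \qp \sgn\ra = 0$ this would force $\qp \sgn = \sgn$ almost everywhere, contradicting \eqref{sgn:cond}. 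No substantive obstacle is anticipated: the corollary is essentially a direct specialization of Theorem~\ref{thm:qp}, and the only observation worth isolating is that in the symmetric case the two a priori distinct hypotheses \eqref{basic:cond} and \eqref{basic:cond:dual} collapse into a single polynomial reproduction condition, while one extra degree of accuracy supplies exactly the vanishing of $[\ol{\wh{\phi}}^\tp \wh{\phi}]''(0)$ needed to annihilate the right-hand side of \eqref{sgn:qp}.
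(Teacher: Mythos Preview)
Your proposal is correct and follows essentially the same route as the paper: set $\tilde{\phi}:=\phi$, use the accuracy hypothesis (via \eqref{polyprod} with $m=3$) to verify \eqref{basic:cond}, \eqref{basic:cond:dual}, and $[\ol{\wh{\phi}}^\tp \wh{\phi}]''(0)=0$, and then invoke Theorem~\ref{thm:qp}. Your added remark that real-valuedness of $\phi$ is implicitly needed for the Gibbs conclusion is well taken and matches what Theorem~\ref{thm:qp} requires.
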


\bp Since $\{\qp_n\}_{n\in \N}$ has accuracy order higher than two, we must have $\qp 1=1, \qp x=x$ and $\qp x^2=x^2$; or equivalently,
\be \label{apprord:2}
\ol{\wh{\phi}(\xi)}^\tp \wh{\phi}(\xi+2\pi k)=\td(k)+\bo(|\xi|^3),\qquad \xi\to 0, k\in \Z.
\ee
Hence, both \eqref{basic:cond} and \eqref{basic:cond:dual} are satisfied with $\tilde{\phi}:=\phi$. Moreover, \eqref{apprord:2} further implies $\ol{\wh{\phi}(\xi)}^\tp \wh{\phi}(\xi)=1+\bo(|\xi|^3)$ as $\xi \to 0$, from which we have $[\ol{\wh{\phi}}^\tp \wh{\phi}]''(0)=0$.
Now the conclusion follows directly from Theorem~\ref{thm:qp}.
\ep

In the rest of this section we discuss how to avoid the Gibbs phenomenon while preserving higher accuracy order.
Let $\phi$ and $\tilde{\phi}$ be $r\times 1$ vectors of compactly supported functions in $\Lp{2}$. By $\si(\phi)$ we denote the space of all functions $\sum_{k\in \Z} v(k) \phi(\cdot-k)$ for all sequences $v: \Z\rightarrow \C^{1\times r}$. If the quasi-projection operators $\qp_n$ in \eqref{qp} have accuracy order $m$, then $\qp \pp=\pp$ for all $\pp\in \PL_{m-1}$, where $\qp:=\qp_0$ is defined in \eqref{qp:0}. In particular, we have $\PL_{m-1}\subset \si(\phi)$.
If $\PL_{m-1}\subset \si(\phi)$, then it is interesting to ask whether there exists a compactly supported vector function $\tilde{\phi}$ such that $\{\qp_n \}_{n\in \N}$ in \eqref{qp} has accuracy order $m$ and is free of the Gibbs phenomenon.
The following result partially answers this question.

\begin{prop}\label{prop:nogibbs}
Let $\phi$ be a nonnegative compactly supported real-valued function in $\Lp{2}$ such that $\wh{\phi}(0)=1$ and $\wh{\phi}(\xi+2\pi k)=\bo(|\xi|^m)$ as $\xi \to 0$ for all $k\in \Z\bs\{0\}$. Then there exists a compactly supported real-valued function $\tilde{\phi}\in \Lp{2}$ such that \eqref{polyprod} is satisfied,
$\sum_{k\in \Z} \tilde{\phi}(\cdot-k)=1$, and $\{\qp_n\}_{n\in \N}$ has accuracy order $m$ and is free of the Gibbs phenomenon at the origin, where $\qp_n, n\in \N$ are defined in \eqref{qp}.
\end{prop}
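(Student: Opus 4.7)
The plan is to construct $\tilde{\phi}$ as a finitely supported step function on the integer lattice,
\[
\tilde{\phi}(x) = \sum_{j = j_0}^{j_0 + N - 1} a_j B_1(x - j), \qquad B_1 := \chi_{(0,1]},
\]
with real coefficients $(a_j)$, an integer offset $j_0$, and a length parameter $N \ge m$ all to be chosen. I adopt this piecewise-constant ansatz because the Gibbs-free sufficient criterion of Proposition~\ref{prop:nogibb}(i), under the given nonnegativity of $\phi$, reduces to the transparent box condition $b_k := \sum_{j < k} a_j \in [0,1]$ for every $k \in \Z$, with $b_k = 0$ for $k \le j_0$ and $b_k = 1$ for $k \ge j_0 + N$.

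Translating the remaining requirements into equations on $(a_j)$: the partition-of-unity $\sum_k \tilde{\phi}(\cdot-k) = 1$ is $\sum_j a_j = 1$; the $k \ne 0$ instances of \eqref{polyprod} are automatic from the Strang--Fix hypothesis on $\phi$, since $\wh{\phi}(\xi + 2\pi k) = \bo(|\xi|^m)$ and $\wh{\tilde{\phi}}$ is bounded near $0$; the $k = 0$ instance, using that $\tilde{\phi}$ is to be real, reduces to $\wh{\tilde{\phi}}(-\xi)\,\wh{\phi}(\xi) = 1 + \bo(|\xi|^m)$ as $\xi \to 0$. Equating Taylor coefficients produces moment identities $\int_\R x^n \tilde{\phi}(x)\,dx = \mu_n$ for $n = 0,1,\ldots,m-1$, with $\mu_0 = 1$ and $\mu_1,\ldots,\mu_{m-1}$ explicit real scalars determined by the first $n$ moments of $\phi$; substituting the ansatz turns these into a linear system of $m$ equations in $(a_j)$.

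The feasibility step is the heart of the argument. Taking the nominal starting point $\tilde{\phi}_{\mathrm{nom}} := (1/N)\chi_{(j_0,j_0+N]}$, i.e., $a_j \equiv 1/N$, gives $b_k = (k - j_0)/N \in [0,1]$ strictly in the interior. One then tunes $j_0$ (and possibly the parity of $N$) to match $\mu_1$, and perturbs $(a_j)$ on a block of at most $m - 2$ consecutive indices positioned near the centre of the support to correct the remaining mismatches against $\mu_2, \ldots, \mu_{m-1}$. On such a fixed-size central block, the normalized moment-matching submatrix has a Vandermonde-type structure whose inverse is bounded by constants depending only on $m$, while the target mismatches depend only on $\phi$; so the corrections to $(a_j)$ and the induced corrections to $b_k$ are uniformly bounded in $N$, and the headroom $\approx 1/2$ around $b_k = 1/2$ absorbs them once $N$ is large enough.

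With a $\tilde{\phi} \in \Lp{2}$ compactly supported, real-valued, satisfying \eqref{polyprod}, and obeying the box constraint thus constructed, Theorem~\ref{thm:appr} gives accuracy order $m$, Proposition~\ref{prop:nogibb}(i) yields $-1 \le [\qp \sgn](x) \le 1$ for a.e.\ $x \in \R$, and Lemma~\ref{lem:qp:gibb} concludes that $\{\qp_n\}_{n\in\N}$ is free of the Gibbs phenomenon at the origin. The main obstacle is the feasibility step above: the dimension count is comfortable for $N \ge m$, but the moment weights grow polynomially with the index, so a naive global correction could push $b_k$ outside $[0,1]$; localizing the correction to a fixed-size block near the centre, where both the inverse norm is $N$-independent and the headroom is $\approx 1/2$, is essential.
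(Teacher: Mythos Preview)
Your feasibility argument contains a genuine error. For the nominal $\tilde{\phi}_{\mathrm{nom}} = (1/N)\chi_{(j_0,\, j_0+N]}$ centred so that $j_0 + N/2 \approx \mu_1$, compute the second moment directly:
\[
\int_{\R} x^2\,\tilde{\phi}_{\mathrm{nom}}(x)\,dx \;=\; \frac{(j_0+N)^3 - j_0^3}{3N} \;=\; (\mu_1+O(1))^2 + \frac{N^2}{12},
\]
which diverges as $N\to\infty$. More generally the $n$-th nominal moment behaves like $N^{n-1}$ for every $n\ge 2$, while the targets $\mu_n$ are fixed numbers determined by $\phi$. Hence for $m\ge 3$ the mismatches $\mu_n - \mu_n^{\mathrm{nom}}$ are \emph{unbounded} in $N$, contradicting your assertion that ``the target mismatches depend only on $\phi$.'' A correction supported on a fixed-size block near the centre (where the $n$-th moment weight is $O(1)$) would then need entries of order $N^{m-2}$ to repair the top moment, and the partial sums $b_k$ across that block shift by the same order, swamping the $1/2$ headroom. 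Spreading the nominal mass over a long interval buys room in the $b_k$'s but simultaneously destroys control of the higher moments; the two effects do not balance. (There is also a secondary issue: choosing $j_0$ and the parity of $N$ cannot match $\mu_1$ exactly unless $\mu_1$ happens to be a half-integer, so $\mu_1$ must enter the perturbation system too.)

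The paper takes the opposite extreme: $\tilde{\phi}$ is supported on only \emph{two} unit intervals $[N-1,N+1]$ (here $N$ is a single integer, not a length parameter) and has the telescoping form $\tilde{\phi} = \eta - \eta(1+\cdot) + \chi_{(N-1,N]}$, with $\eta$ a step function on a partition $N=x_0<\cdots<x_{m-1}=N+1$. This form forces $\sum_k\tilde{\phi}(\cdot-k)=1$ automatically and, crucially, reduces the box constraint of Proposition~\ref{prop:nogibb}(i) to the single scalar condition $\int_{\R}\eta\in[0,1]$. The moment equations determine the $m-1$ heights of $\eta$ uniquely (the zeroth equation is vacuous), yet the first-moment equation alone already pins down $\int_{\R}\eta = \mu_1 - N + \tfrac12$; choosing the integer $N$ so that $0\le \mu_1 - N + \tfrac12 < 1$ therefore settles the box constraint regardless of what the higher moments force on $\eta$. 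This decoupling of the box constraint (controlled by the first moment via an integer shift) from the higher-moment equations is the mechanism your approach lacks.
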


\bp
Define $d_j:=i^j [1/\ol{\wh{\phi}}]^{(j)}(0)$ for $j=0,\ldots,m-1$. Since $\phi$ is real-valued, we can check that all $d_j$ are real numbers. Moreover, $d_0=1$ and $d_1=\int_\R x\phi(x) dx$. Let $N$ be the unique integer such that $0\le d_1-N+\frac{1}{2}<1$.
Consider $N=x_0<x_1<\cdots<x_{m-1}=N+1$ and $c_0,\ldots,c_{m-1}\in \R$. Define
\[
\eta:=\sum_{k=1}^{m-1} c_k \chi_{[x_{k-1},x_k]} \quad \mbox{and}\quad \tilde{\phi}:=\eta-\eta(1+\cdot)+\chi_{(N-1,N]}.
\]
Then it is trivial that $\eta$ is supported inside $[N,N+1]$, $\tilde{\phi}$ is supported inside $[N-1,N+1]$, and
$\sum_{k\in \Z} \tilde{\phi}(\cdot-k)=1$.
We now choose the unknowns $\{c_k\}_{k=1}^{m-1}$ so that
\be \label{moment:match}
\wh{\tilde{\phi}}(\xi)=\frac{1}{\ol{\wh{\phi}(\xi)}}+\bo(|\xi|^m), \qquad \xi \to 0.
\ee
By the definition of $\tilde{\phi}$ and $d_j$, we see that \eqref{moment:match} is equivalent to
\[
\int_\R x^j \tilde{\phi}(x)dx=
\int_\R x^j \Big(\eta(x)-\eta(x+1)+\chi_{(N-1,N]}(x)\Big)dx=d_j,\qquad j=0,\ldots,m-1.
\]
Note that $\int_\R x^j \chi_{(N-1,N]}(x) dx=\frac{N^{j+1}-(N-1)^{j+1}}{j+1}$,
$\int_\R x^j \eta(x+1) dx=\int_\R (x-1)^j \eta(x) dx$, and
\[
\int_{\R} x^j (\eta(x)-\eta(x+1))dx=\int_\R (x^j-(x-1)^j) \eta(x) dx=\sum_{k=1}^{m-1} c_k \int_{x_{k-1}}^{x_k} (x^j-(x-1)^j) dx.
\]
Consequently, \eqref{moment:match} is further equivalent to that for all $j=0,\ldots,m-1$,
\be \label{eqn:c}
\sum_{k=1}^{m-1} c_k \int_{x_{k-1}}^{x_k} (x^j-(x-1)^j) dx
=d_j-\frac{N^{j+1}-(N-1)^{j+1}}{j+1}.
\ee
For $j=0$, by $d_0=1$, both sides of the above equation in \eqref{eqn:c} are zero.
We now prove that the system of linear equations in \eqref{eqn:c} for $j=1,\ldots,m-1$ has a unique solution $\{c_k\}_{k=1}^{m-1}$. In fact, we consider its dual problem:
If
\be \label{eqn:c:dual}
\int_{x_{k-1}}^{x_k} \pp(x) dx=0 \quad \mbox{for all}\;\; k=1,\ldots, m-1 \quad \mbox{with}\quad \pp(x):=\sum_{j=1}^{m-1} b_j (x^j-(x-1)^j),
\ee
we must prove that $b_1=\cdots=b_{m-1}=0$. Let $\pq$ be the unique polynomial such that $\pq'=\pp$ and $\pq(x_0)=0$. Then $\deg(\pq)\le m-1$ by $\deg(\pp)\le m-2$.
Since $\int_{x_{k-1}}^{x_k} \pp(x) dx=\pq(x_k)-\pq(x_{k-1})$ and $\pq(x_0)=0$, we see that \eqref{eqn:c:dual} is equivalent to
$\pq(x_1)=\cdots=\pq(x_{m-1})=0$. Since $\pq(x_0)=0$, the polynomial $\pq$ has $m$ distinct roots $x_0,\ldots, x_{m-1}$. Then $\deg(\pq)\le m-1$ will force $\pq=0$ and consequently, $\pp=\pq'=0$, from which we must have $b_1=\cdots=b_{m-1}=0$.
Since all $d_j$ are real numbers,
this proves that the system of linear equations in \eqref{eqn:c} has a unique real-valued solution $\{c_k\}_{k=1}^{m-1}$.
Therefore, \eqref{polyprod} is satisfied and
$\{\qp_n\}_{n\in \N}$ has accuracy order $m$.

Considering $j=1$ in \eqref{eqn:c}, we have
\[
\int_\R \eta(x) dx=\sum_{k=1}^{m-1} c_k (x_k-x_{k-1})=d_1-\frac{N^2-(N-1)^2}{2}=d_1-N+\frac{1}{2}.
\]
Note that $\tilde{\phi}=\eta$ on $[N,N+1]$. By our choice of $N\in \Z$, we conclude that the number $\int_{N}^{N+1} \tilde{\phi}(x) dx=\int_\R \eta(x)dx=d_1-N+\frac{1}{2}$ must lie on the interval $[0,1]$.
On the other hand, since $\tilde{\phi}=\chi_{(N-1,N]}-\eta(1+\cdot)$ on $[N-1,N]$, we have
\[
\int_{N-1}^N \tilde{\phi}(x) dx=1-\int_\R \eta(1+x) dx=
1-\left(d_1-N+\frac{1}{2}\right)
\]
must lie on $[0,1]$ by $0\le d_1-N+\frac{1}{2}<1$. Since $\tilde{\phi}$ is supported inside $[N-1,N+1]$, now it follows from
item (i) of Proposition~\ref{prop:nogibb} that $\{\qp_n\}_{n\in \N}$ is free of the Gibbs phenomenon at the origin.
\ep

According to Theorem~\ref{thm:qp}, for $m\ge 3$, Proposition~\ref{prop:nogibbs} is optimal in the sense that any desired compactly supported function $\tilde{\phi}$ in Proposition~\ref{prop:nogibbs} cannot satisfy the additional condition: $[\wh{\tilde{\phi}}]'(2\pi k)=0$ for all $k\in \Z\bs\{0\}$. Otherwise, Theorem~\ref{thm:qp} tells us that $\{\qp_n\}_{n\in \N}$ must exhibit the Gibbs phenomenon at the origin, a contradiction to the claim in Proposition~\ref{prop:nogibbs}.

\section{Gibbs Phenomenon of Wavelet and Framelet Expansions}
\label{sec:wavelet}

In this section we study the Gibbs phenomenon of wavelet and framelet expansions. As we shall see in this section, wavelets and framelets having high vanishing moments must exhibit the Gibbs phenomenon at the origin.

One of the main features of wavelets and framelets $\{\phi;\psi\}$ is sparse representation, which is largely due to the vanishing moments of $\psi$  (see \cite{daubook}). We say that $\psi$ has \emph{$m$ vanishing moments}  if $\int_\R x^j \psi(x)dx=0$ for all $j=0,\ldots,m-1$, or equivalently, $\wh{\psi}^{(j)}(0)=0$ for all $j=0,\ldots,m-1$.
In particular, we define $\vmo(\psi):=m$ with $m$ being the largest such nonnegative integer.

Before discussing the Gibbs phenomenon of wavelet and framelet expansions, we have the following result, which is essentially known in the literature but we shall provide a proof here for the convenience of the reader.

\begin{prop}\label{prop:framelet}
Let $\phi, \tilde{\phi}$ be $r\times 1$ vectors of compactly supported functions in $\Lp{2}$ and $\psi, \tilde{\psi}$ be $s\times 1$ vectors of compactly supported functions in $\Lp{2}$.
Suppose that $(\{\tilde{\phi};\tilde{\psi}\},\{\phi;\psi\})$ is a dual framelet in $\Lp{2}$. Then
\begin{enumerate}
\item[(i)] $\mathcal{A}_nf=\qp_n f$ for all $f\in \Lp{2}$ and $n\in \N\cup\{0\}$, where $\mathcal{A}_n$ are the operators in \eqref{df:repr:trunc} for the truncated framelet expansions and $\qp_n$ are the quasi-projection operators defined in \eqref{qp}.

\item[(ii)] $\qp \pp= \pp$ for all $\pp\in \PL_{m-1}$ with $m:=\vmo(\tilde{\psi})$, where $\qp \pp:=\sum_{k\in \Z} \la \pp, \tilde{\phi}(\cdot-k)\ra \phi(\cdot-k)$.
\end{enumerate}
\end{prop}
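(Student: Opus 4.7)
My plan for Proposition~\ref{prop:framelet} is to derive both parts directly from the dual framelet identity \eqref{df}: part (i) by a dyadic dilation substitution in \eqref{df}, and part (ii) by truncating $\pp$ to an $\Lp{2}$ function via a cutoff and then killing every wavelet term with the vanishing moments of $\tilde{\psi}$.

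For part (i), the key observation is that \eqref{df} is itself dilation-covariant. Substituting the pair $(f(2^{-n}\cdot), g(2^{-n}\cdot))$ into \eqref{df} and applying the routine rescaling identities
\[
\la f(2^{-n}\cdot), \tilde{\phi}(\cdot-k)\ra = 2^{n/2}\la f, \tilde{\phi}_{n;k}\ra, \qquad \la f(2^{-n}\cdot), \tilde{\psi}_{j;k}\ra = 2^{n/2}\la f, \tilde{\psi}_{j+n;k}\ra,
\]
together with their primal analogues, then cancelling the overall factor $2^n$ and reindexing the wavelet sum by $j\mapsto j-n$, I obtain the level-$n$ reconstruction
\[
\la f, g\ra = \sum_{k\in\Z} \la f, \tilde{\phi}_{n;k}\ra \la \phi_{n;k}, g\ra + \sum_{j=n}^{\infty}\sum_{k\in\Z} \la f, \tilde{\psi}_{j;k}\ra \la \psi_{j;k}, g\ra.
\]
Because this identity holds for every $g\in \Lp{2}$, the corresponding vector expansion $f = \qp_n f + \sum_{j\ge n}\sum_{k\in\Z} \la f,\tilde{\psi}_{j;k}\ra \psi_{j;k}$ converges in $\Lp{2}$. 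Subtracting it from \eqref{df:repr} and comparing with the definitions \eqref{qp:0} and \eqref{df:repr:trunc} yields $\qp_n f - \qp f = \sum_{j=0}^{n-1}\sum_{k\in\Z} \la f,\tilde{\psi}_{j;k}\ra \psi_{j;k}$, which is precisely $\qp_n f = \mathcal{A}_n f$.

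For part (ii), I would pair both sides of the claim $\qp\pp=\pp$ against an arbitrary compactly supported $g\in \Lp{2}$ and exploit the vanishing moments of $\tilde{\psi}$ to annihilate the entire wavelet branch. Let $M$ be a common support radius for the entries of $\phi,\tilde{\phi},\psi,\tilde{\psi}$, and choose $\eta\in C_c^\infty(\R)$ with $\eta\equiv 1$ on a bounded interval large enough to contain the supports of every $\tilde{\phi}(\cdot-k)$ and every $\tilde{\psi}_{j;k}$ that meets $\operatorname{supp}(g)$; this is possible because each such $\operatorname{supp}(\tilde{\psi}_{j;k})$ has length $2M/2^j$ and lies within distance $M$ of $\operatorname{supp}(g)$, so these supports are confined to one bounded set uniformly in $(j,k)$. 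Then $\eta\pp\in\Lp{2}$, and applying \eqref{df} to $(\eta\pp, g)$ lets me replace each surviving $\la \eta\pp, \tilde{\phi}(\cdot-k)\ra$ by $\la \pp, \tilde{\phi}(\cdot-k)\ra$ and each surviving $\la \eta\pp, \tilde{\psi}_{j;k}\ra$ by $\la \pp, \tilde{\psi}_{j;k}\ra$, since $\eta\equiv 1$ on their supports. A binomial-expansion computation of $\la x^\ell,\tilde{\psi}_{j;k}\ra$ after the substitution $y=2^j x-k$ reduces every such coefficient to a linear combination of the moments $\int_\R y^i\ol{\tilde{\psi}(y)}^\tp dy$ with $i\le\ell<m=\vmo(\tilde{\psi})$, so $\la \pp, \tilde{\psi}_{j;k}\ra=0$ for all $\pp\in\PL_{m-1}$ and all $j,k$. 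The wavelet sum therefore vanishes and the identity collapses to $\la\pp,g\ra=\la\qp\pp,g\ra$. As $g$ ranges over all compactly supported elements of $\Lp{2}$, this forces $\qp\pp=\pp$ almost everywhere.

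The dilation reduction makes part (i) completely mechanical; the only care needed there is justifying the reindexing $j\mapsto j-n$, which is legitimate because of the unconditional $\Lp{2}$ convergence implicit in the dual framelet hypothesis. The real substance of the argument sits in part (ii), where the support-bookkeeping behind the choice of $\eta$ must be done carefully enough that \emph{every} wavelet coefficient that survives the cutoff can honestly be identified with the corresponding pure polynomial pairing $\la \pp, \tilde{\psi}_{j;k}\ra$.
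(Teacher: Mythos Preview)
Your proposal is correct and follows essentially the same approach as the paper. For part~(i) both arguments exploit dilation covariance of \eqref{df}; the paper telescopes between consecutive levels $J=n-1$ and $J=n$ to obtain the cascade identity, while you compare level $0$ with level $n$ directly, which is an equivalent bookkeeping. For part~(ii) both arguments multiply $\pp$ by a smooth cutoff and use the vanishing moments of $\tilde\psi$ to kill the wavelet branch; the paper works with the pointwise expansion \eqref{df:repr} on a growing interval $(2N_0-N,N-2N_0)$ and lets $N\to\infty$, whereas you take the dual viewpoint of pairing against a fixed compactly supported test function $g$ and choosing the cutoff adapted to $g$---these are the same support-bookkeeping computation seen from the $f$-side versus the $g$-side.
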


\bp Item (i) is essentially known in \cite{han10,han12}. Indeed, for $J\in \Z$, replacing $f$ and $g$ in \eqref{df} by $2^{-J/2}f(2^{-J}\cdot)$ and $2^{-J/2}g(2^{-J}\cdot)$, respectively, we have
\begin{align*}
\la f,g\ra&=
\la 2^{-J/2}f(2^{-J}\cdot),2^{-J/2}g(2^{-J}\cdot)\ra=
\sum_{k\in \Z} \la 2^{-J/2}f(2^{-J}\cdot), \tilde{\phi}(\cdot-k)\ra\la \phi(\cdot-k),2^{-J/2}g(2^{-J}\cdot)\ra\\
&\qquad\qquad\qquad +
\sum_{j=0}^\infty \sum_{k\in \Z}
\la 2^{-J/2}f(2^{-J}\cdot), \tilde{\psi}_{j;k}\ra \la \psi_{j;k},2^{-J/2}g(2^{-J}\cdot)\ra\\
&=\sum_{k\in \Z} \la f, \tilde{\phi}_{J;k}\ra\la \phi_{J;k},g\ra+
\sum_{j=J}^\infty \sum_{k\in \Z}
\la f, \tilde{\psi}_{j;k}\ra \la \psi_{j;k},g\ra.
\end{align*}
Considering the differences between the levels $J=n-1$ and $J=n$ of the above identities, we conclude that
\be \label{cascade}
\sum_{k\in \Z} \la f, \tilde{\phi}_{n-1;k}\ra\la \phi_{n-1;k},g\ra+
\sum_{k\in \Z}
\la f, \tilde{\psi}_{n-1;k}\ra \la \psi_{n-1;k},g\ra
=\sum_{k\in \Z} \la f, \tilde{\phi}_{n;k}\ra\la \phi_{n;k},g\ra,\qquad n\in \Z.
\ee
Consequently, it follows directly from the above identities that
\[
\la \mathcal{A}_n f,g\ra=\sum_{k\in \Z} \la f, \tilde{\phi}_{n;k}\ra\la \phi_{n;k},g\ra=
\la \qp_n f, g\ra
\]
for all $f,g\in \Lp{2}$. Therefore, we must have $\mathcal{A}_n f=\qp_n f$ for all $f\in \Lp{2}$ and $n\in \N\cup\{0\}$. This proves item (i).

We now prove item (ii). Since all the functions have compact support, without loss of generality, we assume that all $\tilde{\phi}, \phi, \tilde{\psi}, \psi$ are supported inside $[-N_0,N_0]$.
For $N>2N_0$, let $\eta_N$ be a compactly supported $C^\infty$ function such that $\eta_N(x)=1$ for $x\in [-N,N]$. Let $\pp\in \PL_{m-1}$. Then $\eta_N \pp\in \Lp{2}$.
Now it follows directly from the representation in \eqref{df:repr} that
\[
\eta_N \pp=
\sum_{k\in \Z} \la \eta_N \pp, \tilde{\phi}(\cdot-k)\ra \phi(x-k)
+\sum_{j=0}^\infty \sum_{k\in \Z} \la \eta_N \pp, \tilde{\psi}_{j;k}\ra \psi_{j;k}.
\]
Due to the vanishing moments of $\tilde{\psi}$, if the support of $\tilde{\psi}_{j;k}$ is contained inside $[-N,N]$, then $\la \eta_N \pp, \tilde{\psi}_{j;k}\ra=\la \pp, \tilde{\psi}_{j;k}\ra=0$.
Since $\psi$ and $\tilde{\psi}$ are supported inside $[-N_0,N_0]$, the supports of $\tilde{\psi}_{j;k}$ and $\psi_{j;k}$ are contained inside $[2^{-j}(k-N_0),2^{-j}(k+N_0)]$.
Now for $k\in [N_0-2^j N, 2^j N-N_0]$,
we have
\[
2^{-j}(k+N_0)\le 2^{-j}(2^jN-N_0+N_0)=N,\qquad
2^{-j}(k-N_0) \ge 2^{-j} (N_0-2^j N-N_0)=-N.
\]
That is, for every $k\in \Z\cap
[N_0-2^j N, 2^j N-N_0]$,
the support of $\tilde{\psi}_{j;k}$ is contained
inside $[-N,N]$ and hence
$\la \eta_N \pp, \tilde{\psi}_{j;k}\ra=\la \pp, \tilde{\psi}_{j;k}\ra=0$.
On the other hand, for all integers $k< N_0-2^jN$, we must have
\[
2^{-j}(k+N_0)<2^{-j} (N_0-2^j N+N_0)=2^{1-j}N_0-N\le 2N_0-N,
\]
where we used $j\in \N \cup\{0\}$.
Hence, $\psi_{j;k}(x)=0$ for all $x\in (2N_0-N,N-2N_0)$ and $k< N_0-2^jN$.
Similarly, for all integers $k>2^jN-N_0$, we have
\[
2^{-j}(k-N_0)>2^{-j} (2^j N-N_0-N_0)=N-2^{1-j}N_0 \ge N-2N_0.
\]
Hence, $\psi_{j;k}(x)=0$ for all $x\in (2N_0-N,N-2N_0)$ and $k>2^jN-N_0$.
The discussion for the term $\la \eta_N \pp, \tilde{\phi}(\cdot-k)\ra \phi(x-k)$ is the same as the case for $\psi_{j,k}$ with $j=0$. Consequently, we conclude that
\[
\pp(x)=\eta_N (x) \pp(x)=\sum_{k\in \Z} \la \pp, \tilde{\phi}(\cdot-k)\ra \phi(x-k),\qquad a.e.\; x\in (2N_0-N, N-2N_0).
\]
Taking $N\to \infty$, we conclude that $\qp \pp=\pp$. This proves item (ii).
\ep

We have the following result on the Gibbs phenomenon of wavelet and framelet expansions.

\begin{theorem}\label{thm:framelet:gibbs}
Let $(\{\tilde{\phi};\tilde{\psi}\},\{\phi;\psi\})$ be a dual framelet in $\Lp{2}$, where $\phi, \tilde{\phi}$ are $r\times 1$ vectors of compactly supported functions in $\Lp{2}$ and $\psi, \tilde{\psi}$ are $s\times 1$ vectors of compactly supported functions in $\Lp{2}$. Let $\mathcal{A}_n, n\in \N\cup\{0\}$ be defined in \eqref{df:repr:trunc} for the truncated framelet expansions using the dual framelet $(\{\tilde{\phi};\tilde{\psi}\},\{\phi;\psi\})$.
Define the quasi-projection operator $\qp$ as in \eqref{qp:0}.
\begin{enumerate}
\item[(i)] If $\vmo(\psi)\ge 2$ and $\vmo(\tilde{\psi})\ge 1$, then $\la (\cdot), \sgn-\qp \sgn\ra=0$. If in addition \eqref{sgn:cond} holds and $\phi,\tilde{\phi}$ are real-valued, then $\{\mathcal{A}_n\}_{n\in \N}$ exhibits the Gibbs phenomenon at the origin.
\item[(ii)] If all the entries in $\phi$ and $\tilde{\phi}$ are nonnegative, then $\{\mathcal{A}_n\}_{n\in \N}$ has no Gibbs phenomenon at the origin but $\vmo(\psi)=\vmo(\tilde{\psi})=1$.
\end{enumerate}
\end{theorem}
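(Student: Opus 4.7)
The plan is to establish part (i) by a direct telescoping computation of $\la (\cdot), \sgn - \qp \sgn\ra$ that bypasses Theorem~\ref{thm:qp}, and to deduce part (ii) by combining Proposition~\ref{prop:nogibb}(ii) with part (i) and a symmetry argument.

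For part (i), I would first note that $\vmo(\tilde\psi) \ge 1$ gives $\qp 1 = 1$ via Proposition~\ref{prop:framelet}(ii), so condition \eqref{basic:cond} holds and Theorem~\ref{thm:main} makes $\sgn - \qp \sgn$ compactly supported; in particular $\la (\cdot), \sgn - \qp \sgn\ra$ is finite. Using the scaling $[\qp_n \sgn](x) = [\qp \sgn](2^n x)$, the difference $\sgn - \qp_n \sgn = (\sgn - \qp \sgn)(2^n \cdot)$ is supported in a neighborhood of the origin shrinking like $2^{-n}$, so $\la (\cdot), \sgn - \qp_n \sgn\ra \to 0$. Next, Proposition~\ref{prop:framelet}(i) gives the telescoping identity
\[
\qp_n \sgn - \qp \sgn = \sum_{j=0}^{n-1} \sum_{k \in \Z} \la \sgn, \tilde\psi_{j;k}\ra \psi_{j;k}.
\]
A change of variable identifies each entry of $\int x \ol{\psi_{j;k}(x)}dx$ with a linear combination of $\ol{\int y \psi^\ell(y) dy}$ and $\ol{\int \psi^\ell(y) dy}$, both of which vanish by $\vmo(\psi) \ge 2$; hence $\la (\cdot), \psi_{j;k}\ra = 0$ for all $j, k$, which forces $\la (\cdot), \qp_n \sgn - \qp \sgn\ra = 0$ for every $n$. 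Letting $n \to \infty$ yields $\la (\cdot), \sgn - \qp \sgn\ra = 0$. The Gibbs conclusion then follows exactly as in the last paragraph of the proof of Theorem~\ref{thm:qp}: absence of Gibbs would combine with \eqref{qp:nogibbs} from Lemma~\ref{lem:qp:gibb} to make $x(\sgn(x) - \qp \sgn(x)) \ge 0$ a.e.\ with zero integral (real-valuedness turns the pairing into a genuine integral), so $\qp \sgn = \sgn$ a.e., contradicting \eqref{sgn:cond}.

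For part (ii), once $\qp 1 = 1$ (and symmetrically $\tilde\qp 1 = 1$) is available, Proposition~\ref{prop:nogibb}(ii) simultaneously yields \eqref{qp:nogibbs} and caps the accuracy of $\{\qp_n\}$ at two. Lemma~\ref{lem:qp:gibb} together with Proposition~\ref{prop:framelet}(i) converts the first conclusion into the absence of Gibbs for $\{\mathcal{A}_n\}$, while Proposition~\ref{prop:framelet}(ii) turns the accuracy bound into $\vmo(\tilde\psi) \le 2$ and, applied to the swapped dual framelet $(\{\phi;\psi\}, \{\tilde\phi;\tilde\psi\})$, into $\vmo(\psi) \le 2$. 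The sharper bound $\vmo(\psi) \le 1$ follows by contradiction: if $\vmo(\psi) \ge 2$ and $\vmo(\tilde\psi) \ge 1$ both held, then part (i) would produce Gibbs at $0$ (the hypothesis \eqref{sgn:cond} being forced by the nontriviality of $\qp \sgn - \sgn$ for nonnegative $\phi$), contradicting the no-Gibbs conclusion just obtained; swapping roles gives $\vmo(\tilde\psi) \le 1$.

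The main obstacle I anticipate is the pair of lower bounds $\vmo(\psi), \vmo(\tilde\psi) \ge 1$ needed in part (ii): these are required to apply Proposition~\ref{prop:framelet}(ii) (to get $\qp 1 = 1$), to invoke part (i) in the contradiction, and to upgrade ``$\le 1$'' to ``$= 1$''. Establishing them calls for an auxiliary argument that uses the dual framelet identity \eqref{df} tested on bumps approximating constants to force $\wh{\tilde\psi}(0) = 0$ and $\wh\psi(0) = 0$, given that $\wh\phi(0)$ and $\wh{\tilde\phi}(0)$ are nonzero by nonnegativity and nontriviality. This bookkeeping at the top of the scale is the delicate step; part (i) itself is conceptually straightforward once Proposition~\ref{prop:framelet} is in hand.
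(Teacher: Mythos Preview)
Your argument for part~(i) is correct and takes a genuinely different route from the paper. The paper verifies the hypotheses of Theorem~\ref{thm:qp}: from the cascade identity \eqref{cascade} it extracts the Fourier-side relation $\ol{\wh{\phi}(\xi)}^\tp \wh{\tilde{\phi}}(\xi)+\ol{\wh{\psi}(\xi)}^\tp \wh{\tilde{\psi}}(\xi)=\ol{\wh{\phi}(\xi/2)}^\tp \wh{\tilde{\phi}}(\xi/2)$, uses $\vmo(\psi)\ge 2$ and $\vmo(\tilde\psi)\ge 1$ to kill the $\psi$ term to third order, and reads off $[\ol{\wh\phi}^\tp\wh{\tilde\phi}]''(0)=0$, whence \eqref{sgn:qp} gives $\la(\cdot),\sgn-\qp\sgn\ra=0$. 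Your scaling-plus-telescoping computation is more elementary and avoids Theorem~\ref{thm:main} entirely. One small point: Proposition~\ref{prop:framelet}(i) is stated for $f\in\Lp{2}$, so applying the telescoping to $\sgn$ needs the localization trick (replace $\sgn$ by $\eta\sgn$ on a large interval and use compact support of all generators) already used in the proof of Proposition~\ref{prop:framelet}(ii).

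Part~(ii), however, has a real gap. Your contradiction step assumes that $\vmo(\psi)\ge 2$ together with nonnegativity forces \eqref{sgn:cond}; but you give no argument for this, and it is exactly what needs proving. If $\qp\sgn=\sgn$ a.e.\ (which does occur, e.g.\ for $\phi=\tilde\phi=B_1$), then part~(i) yields no Gibbs and no contradiction. The paper closes this gap by a different mechanism: it invokes the \emph{intermediate} conclusion $[\ol{\wh\phi}^\tp\wh{\tilde\phi}]''(0)=0$ from its proof of~(i), sets $\eta(x)=\int_\R\ol{\phi(x+y)}^\tp\tilde\phi(y)\,dy\ge 0$, and observes that $\wh\eta''(0)=0$ forces $\int x^2\eta(x)\,dx=0$, hence $\eta\equiv 0$, contradicting $\wh\eta(0)=1$. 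Your more direct proof of~(i) bypasses precisely this Fourier-side fact, so you no longer have it available for~(ii). You can recover it by noting that $\vmo(\psi)\ge 2$ gives \eqref{basic:cond:dual} via Proposition~\ref{prop:framelet}(ii), so \eqref{sgn:qp} applies and your computed value $0$ for $\la(\cdot),\sgn-\qp\sgn\ra$ reads back as $[\ol{\wh\phi}^\tp\wh{\tilde\phi}]''(0)=0$; then the paper's correlation argument runs unchanged. But as written, your route through ``part~(i) would produce Gibbs'' is circular.
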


\bp By item (i) of Proposition~\ref{prop:framelet}, the operators $\mathcal{A}_n$ in \eqref{df:repr:trunc} agree with the quasi-projection operators $\qp_n$ in \eqref{qp}.
Since $\vmo(\tilde{\psi})\ge 1$, by item (ii) of Proposition~\ref{prop:framelet}, we have $\qp 1=1$, which is equivalent to \eqref{basic:cond}. Define $\tilde{\qp}_n$ as in \eqref{qp:dual} and $\tilde{\qp}:=\tilde{\qp}_0$. Since $\vmo(\psi)\ge 2$, by item (ii) of Proposition~\ref{prop:framelet}, we have $\tilde{\qp} 1=1$ and $\tilde{\qp} x=x$, which is equivalent to \eqref{basic:cond:dual}.
Next, we prove $[\ol{\wh{\phi}}^\tp \wh{\tilde{\phi}}]''(0)=0$.
As we have seen in the proof of Proposition~\ref{prop:framelet}, \eqref{cascade} holds.
Applying \cite[Lemma~5]{han10} to \eqref{cascade} with $n=1$ and $\gl=1/2$, we conclude that
\[
\ol{\wh{\phi}(\xi)}^\tp \wh{\tilde{\phi}}(\xi)+
\ol{\wh{\psi}(\xi)}^\tp \wh{\tilde{\psi}}(\xi)=
\ol{\wh{\phi}(\xi/2)}^\tp \wh{\tilde{\phi}}(\xi/2).
\]
Because $\vmo(\psi)\ge 2$ and $\vmo(\tilde{\psi})\ge 1$,
we have $\ol{\wh{\psi}(\xi)}^\tp \wh{\tilde{\psi}}(\xi)=\bo(|\xi|^3)$ as $\xi \to 0$. Therefore, we conclude from the above identity that
\be \label{phi:psi:identity}
\ol{\wh{\phi}(\xi)}^\tp \wh{\tilde{\phi}}(\xi)=
\ol{\wh{\phi}(\xi/2)}^\tp \wh{\tilde{\phi}}(\xi/2)+\bo(|\xi|^3),\qquad \xi\to 0.
\ee
By \eqref{basic:cond:dual}, we trivially have $\ol{\wh{\phi}(0)}^\tp \wh{\tilde{\phi}}(0)=1$. Using the Taylor expansion of $\ol{\wh{\phi}}^\tp \wh{\tilde{\phi}}$ at the origin, we deduce from \eqref{phi:psi:identity} that we must have
$\ol{\wh{\phi}(\xi)}^\tp \wh{\tilde{\phi}}(\xi)=1+\bo(|\xi|^3)$ as $\xi \to 0$. Consequently, we proved $[\ol{\wh{\phi}}^\tp \wh{\tilde{\phi}}]''(0)=0$. Now all the claims in item (i) follows directly from Theorem~\ref{thm:qp}.

If all the entries in $\phi$ and $\tilde{\phi}$ are nonnegative, then it follows from item (ii) of Proposition~\ref{prop:nogibb} that $\{\qp_n\}_{n\in \N}$ has no Gibbs phenomenon.
We now prove $\vmo(\psi)=\vmo(\tilde{\psi})=1$. Since
$(\{\tilde{\phi};\tilde{\psi}\},\{\phi;\psi\})$ is a dual framelet in $\Lp{2}$, it is necessary that $\vmo(\psi)\ge 1$ and $\vmo(\tilde{\psi})\ge 1$. We use proof by contradiction to prove $\vmo(\psi)=\vmo(\tilde{\psi})=1$. Without loss of generality, we can assume that $\vmo(\psi)\ge 2$.
By what has been proved a moment ago, we must have $[\ol{\wh{\phi}}^\tp \wh{\tilde{\phi}}]''(0)=0$.
Define $\eta(x):=\int_{\R} \ol{\phi(x+y)}^\tp \tilde{\phi}(y) dy$. Then $\wh{\eta}(\xi)=\ol{\wh{\phi}(\xi)}^\tp \wh{\tilde{\phi}}(\xi)$. Hence, $[\ol{\wh{\phi}}^\tp \wh{\tilde{\phi}}]''(0)=0$ becomes $[\wh{\eta}]''(0)=0$, which is equivalent to saying that $\int_{\R} x^2 \eta(x) dx=0$. Since both $\phi$ and $\tilde{\phi}$ are nonnegative, the scalar function $\eta$ is nonnegative. Consequently, $\int_{\R} x^2 \eta(x) dx=0$ forces $\eta=0$, a contradiction to $\wh{\eta}(0)=1$. This proves $\vmo(\psi)=\vmo(\tilde{\psi})=1$.
This completes the proof of item (ii).
\ep

Note that an orthogonal wavelet is a special case of a tight framelet.
As a direct consequence of Theorem~\ref{thm:framelet:gibbs}, we have
the following result on tight framelets and orthogonal wavelets.

\begin{cor}\label{cor:tf}
Let $\{\phi;\psi\}$ be a tight framelet in $\Lp{2}$, where $\phi$ is an $r\times 1$ vector of compactly supported functions in $\Lp{2}$ and $\psi$ is an $s\times 1$ vector of compactly supported functions in $\Lp{2}$. Let $\mathcal{A}_n, n\in \N$ be the operators for the truncated framelet expansions using the tight framelet $\{\phi;\psi\}$, i.e.,
\be \label{qp:phi:2}
\mathcal{A}_n f:=\sum_{k\in \Z} \la f, \phi(\cdot-k)\ra \phi(\cdot-k)+\sum_{j=0}^{n-1} \sum_{k\in \Z} \la f, \psi_{j;k}\ra \psi_{j;k}.
\ee
If $\vmo(\psi)\ge 2$, then $\la (\cdot), \sgn-\qp \sgn\ra=0$, where $\qp f:=\sum_{k\in \Z} \la f, \phi(\cdot-k)\ra \phi(\cdot-k)$. If in addition \eqref{sgn:cond} holds and $\phi$ is real-valued, then $\{\mathcal{A}_n\}_{n\in \N}$ must exhibit the Gibbs phenomenon at the origin.
\end{cor}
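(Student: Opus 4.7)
The plan is to derive Corollary~\ref{cor:tf} as a direct specialization of Theorem~\ref{thm:framelet:gibbs}. The key observation is that a tight framelet $\{\phi;\psi\}$ in $\Lp{2}$ is, by definition, the special case of a dual framelet in which the primal and dual generators coincide: taking $\tilde{\phi}:=\phi$ and $\tilde{\psi}:=\psi$ in \eqref{df} reduces exactly to the tight frame identity with $C_1=C_2=1$. In particular, the truncated expansion operator $\mathcal{A}_n$ in \eqref{qp:phi:2} is precisely the operator $\mathcal{A}_n$ defined in \eqref{df:repr:trunc} for the dual framelet $(\{\phi;\psi\},\{\phi;\psi\})$, and the operator $\qp$ in the statement of the corollary is precisely the operator $\qp:=\qp_0$ from \eqref{qp:0} with $\tilde{\phi}$ replaced by $\phi$.

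With this identification, the hypothesis $\vmo(\psi)\ge 2$ in the corollary yields simultaneously $\vmo(\psi)\ge 2$ and $\vmo(\tilde{\psi})=\vmo(\psi)\ge 2\ge 1$, which are exactly the two vanishing-moment hypotheses of item (i) in Theorem~\ref{thm:framelet:gibbs}. Applying that item immediately gives the identity $\la (\cdot), \sgn-\qp \sgn\ra=0$, which is the first conclusion of the corollary.

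For the second conclusion, I would simply add the two remaining hypotheses: the condition \eqref{sgn:cond} that $\qp \sgn \neq \sgn$ on a set of positive measure, and the assumption that $\phi$ is real-valued (which, under $\tilde{\phi}=\phi$, means that both $\phi$ and $\tilde{\phi}$ are real-valued). These are exactly the extra hypotheses required by item (i) of Theorem~\ref{thm:framelet:gibbs} to conclude that $\{\mathcal{A}_n\}_{n\in \N}$ exhibits the Gibbs phenomenon at the origin, which completes the proof.

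There is essentially no obstacle here, since the work has already been done at the level of dual framelets in Theorem~\ref{thm:framelet:gibbs}; the only minor care needed is to verify cleanly that the operators $\mathcal{A}_n$ and $\qp$ in \eqref{qp:phi:2} agree with those in the dual-framelet setting when $\tilde{\phi}=\phi$ and $\tilde{\psi}=\psi$, so that the hypotheses of the theorem transfer to the tight-framelet setting verbatim.
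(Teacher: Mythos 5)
Your proposal is correct and matches the paper exactly: the paper states Corollary~\ref{cor:tf} as a direct consequence of Theorem~\ref{thm:framelet:gibbs}, obtained precisely by viewing the tight framelet as the dual framelet $(\{\phi;\psi\},\{\phi;\psi\})$ with $\tilde{\phi}=\phi$ and $\tilde{\psi}=\psi$, so that $\vmo(\tilde{\psi})=\vmo(\psi)\ge 2\ge 1$ and the real-valuedness of $\phi$ covers both generators. No gap; the identification of $\mathcal{A}_n$ and $\qp$ under this specialization is exactly the routine check the paper leaves implicit.
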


Theorem~\ref{thm:framelet:gibbs} and Corollary~\ref{cor:tf} obviously cover the results in \cite{kel96,ml18,rf05,shen02,sv96} as special cases on the Gibbs phenomenon of orthogonal wavelets and biorthogonal wavelets.
Corollary~\ref{cor:tf} tells us that the truncated approximation using a tight framelet cannot avoid the Gibbs phenomenon while having at least two vanishing moments. However, high vanishing moments are of paramount importance in many applications. Therefore, to avoid the Gibbs phenomenon while having high vanishing moments, according to Theorem~\ref{thm:framelet:gibbs}, the only possibility left is dual framelets $(\{\tilde{\phi};\tilde{\psi}\},\{\phi;\psi\})$ (which must not be tight framelets, i.e., dual framelets with $\tilde{\phi}=\phi$ and $\tilde{\psi}=\psi$)
such that $\vmo(\psi)=1$ and $\vmo(\tilde{\psi})=m$ for $m\ge 2$. By Proposition~\ref{prop:framelet}, the associated quasi-projection operators $\{\qp_n\}_{n\in \N}$ will have accuracy order $m$. To our best knowledge, so far there is no known example of compactly supported dual framelets $(\{\tilde{\phi};\tilde{\psi}\},\{\phi;\psi\})$ in $\Lp{2}$ such that its associated truncation operators $\{\mathcal{A}_n\}_{n\in \N}$ exhibit no Gibbs phenomenon and the framelet functions satisfy
$\vmo(\psi)=1$ and $\vmo(\tilde{\psi})\ge 2$.
It is very interesting to construct such examples and explore their applications in practice.

\section{Proof of Theorem~\ref{thm:main}}
\label{sec:proof}

In this section we shall prove Theorem~\ref{thm:main}.
To do so, we need some necessary definitions and auxiliary results.

By $\lp{0}$ we denote the space of all finitely supported complex-valued sequences $u=\{u(k)\}_{k\in \Z}$ on $\Z$ such that $u(k)\ne 0$ for finitely many $k\in \Z$.
For a finitely supported sequence $u=\{u(k)\}_{k\in \Z}\in \lp{0}$, its Fourier series is defined to be $\wh{u}(\xi):=\sum_{k\in \Z} u(k) e^{-ik\xi}$ for $\xi\in \R$. Note that $\wh{u}(0)=\sum_{k\in\Z} u(k)$. For two finitely supported sequences $u$ and $d$, their convolution is defined to be $[u*d](n)=\sum_{k\in \Z} u(n-k) d(n)$ for $n\in \Z$.
Note that $\wh{u*d}(\xi)=\wh{u}(\xi)\wh{d}(\xi)$.

To prove Theorem~\ref{thm:main}, we need the following auxiliary result.

\begin{lemma}\label{lem:c*d}
Define a sequence $v$ on $\Z$ such that $v(k)=-1$ for all $k<0$ and $v(k)=1$ for all $k\ge 0$.
Let $c$ be a sequence on $\Z$ such that $c-c_\infty v$ is a finitely supported sequence in $\lp{0}$ for some $c_\infty\in \C$.
Let $d\in \lp{0}$ be a finitely supported sequence on $\Z$ such that $\wh{d}(0)=0$.
Then $c*d\in \lp{0}$ is a finitely supported sequence on $\Z$ and
\begin{align}
&\sum_{k\in \Z}[c*d](k)=c_\infty\sum_{k\in \Z} [v*d](k)=-2i c_\infty [\wh{d}]'(0),\label{ord0}\\
&\sum_{k\in \Z} k[c*d](k)=ic_\infty [\wh{d}]'(0)+c_\infty [\wh{d}]''(0)+i\wh{[c-c_\infty v]}(0) [\wh{d}]'(0).\label{ord1}
\end{align}
%
\end{lemma}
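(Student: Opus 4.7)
The plan is to split $c$ additively as $c = e + c_\infty v$ where $e := c - c_\infty v \in \lp{0}$ by hypothesis, so that $c*d = e*d + c_\infty (v*d)$. Since $e$ and $d$ are both finitely supported, $e*d \in \lp{0}$ is immediate. The crux is showing that $v*d$ is finitely supported despite $v$ itself not being summable, and for this I would use the mean-zero hypothesis $\wh{d}(0) = 0$: if $d$ is supported in $[-N,N]$, then for $n \ge N$ (resp.\ $n \le -N-1$) every index $n-k$ appearing in $[v*d](n) = \sum_k v(n-k) d(k)$ satisfies $n-k \ge 0$ (resp.\ $n-k < 0$), so $v(n-k)$ is the constant $+1$ (resp.\ $-1$) and the sum collapses to $\pm \wh{d}(0) = 0$. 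This shows $v*d$ is supported in $[-N,N-1]$, and hence $c*d \in \lp{0}$.

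Once finite support is in hand, I would compute the two moments of $v*d$ directly by interchanging the order of summation (which is now legal): $\sum_n [v*d](n) = \sum_k d(k) \sum_{m \in [-N-k,\,N-1-k]} v(m)$. A short counting of the positive versus negative indices in that symmetric-like range yields $\sum_m v(m) = -2k$ for every admissible $k$, so $\sum_n [v*d](n) = -2 \sum_k k\, d(k) = -2i [\wh{d}]'(0)$, using the standard identity $\sum_k k\, d(k) = i[\wh{d}]'(0)$. An analogous computation for the first moment, based on $\sum_m m\,v(m)$ over the same range, produces an expression of the shape $N^2 + k + k^2$; the $N^2$-term is annihilated when paired against $d$ by $\wh{d}(0) = 0$, and the remaining pieces combine, via $\sum_k k^2 d(k) = -[\wh{d}]''(0)$, to give $\sum_n n[v*d](n) = i[\wh{d}]'(0) + [\wh{d}]''(0)$.

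For the honest $\lp{0}$ piece $e*d$, I would just read off the two moments from the Taylor expansion of $\wh{e*d}(\xi) = \wh{e}(\xi) \wh{d}(\xi)$ at the origin: $\sum_k [e*d](k) = \wh{e}(0)\wh{d}(0) = 0$ since $\wh{d}(0)=0$, and $\sum_k k\,[e*d](k) = i [\wh{e*d}]'(0) = i\wh{e}(0)[\wh{d}]'(0) + i [\wh{e}]'(0)\wh{d}(0) = i\,\wh{[c-c_\infty v]}(0)\,[\wh{d}]'(0)$. Adding $c_\infty$ times the $v*d$ contributions to these gives exactly \eqref{ord0} and \eqref{ord1}.

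The only real obstacle is the justification that $v*d$ makes sense and is finitely supported, since $v$ itself is unbounded in the sense of $\sum v(k)$ being divergent; the mean-zero hypothesis on $d$ is precisely what saves us, both in establishing finite support and in making the otherwise $N$-dependent partial sums of $v$ and $m\,v(m)$ against $d$ independent of the truncation. Everything else is bookkeeping with generating functions.
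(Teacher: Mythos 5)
Your proposal is correct and follows essentially the same route as the paper: the identical decomposition $c*d=(c-c_\infty v)*d+c_\infty\,v*d$, finite support of $v*d$ forced by $\wh{d}(0)=0$, and the same elementary moment computations (the paper merely routes the $v*d$ term through the Heaviside sequence $u$ with $v=-1+2u$ and cumulative sums, whereas you count $\sum_m v(m)$ and $\sum_m m\,v(m)$ over a finite window directly, which is the same bookkeeping). All the key identities, including the cancellation of the $N$-dependent terms against $\wh{d}(0)=0$ and the Leibniz-rule evaluation of the $(c-c_\infty v)*d$ moments, match the paper's argument.
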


\bp By our assumption on the sequence $c$, since $d$ is finitely supported with $\wh{d}(0)=\sum_{k\in \Z} d(k)=0$, it is not difficult to observe that $c*d$ is a finitely supported sequence.
Write $c*d=(c-c_\infty v)*d+c_\infty v*d$. Since $c-c_\infty v$ is a finitely supported sequence, we have
$\sum_{k\in\Z} [(c-c_\infty v)*d](k)=\wh{[c-c_\infty v]}(0)\wh{d}(0)=0$ by $\wh{d}(0)=0$.
Define a sequence $u$ such that $u(k)=1$ for all $k\ge 0$ and $u(k)=0$ for all $k<0$. Then $v=-1+2u$. Since $\wh{d}(0)=0$, we have $1*d=0$ and hence $v*d=-1*d+2u*d=2u*d$.
Since $d$ is finitely supported,
we can assume that $d$ is supported inside $[-N,N]$ for some $N\in \N$. Note that $\sum_{n=-N}^N d(n)=\wh{d}(0)=0$. Then
\begin{align*}
\sum_{k\in \Z} 2[u*d](k)
&=2\sum_{k\in \Z} \sum_{n\in \Z} u(n) d(k-n)
=2\sum_{k\in \Z} \sum_{n=0}^\infty d(k-n)
=2\sum_{k\in \Z} \sum_{n=-\infty}^k d(n) \\
&=2\sum_{k\in \Z} \sum_{n=-N}^k d(n)
=2\sum_{k=-N}^N \sum_{n=-N}^k d(n)
=2\sum_{n=-N}^N (N+1-n) d(n)\\
&= -2\sum_{n=-N}^N n d(n)
=-2i[\wh{d}]'(0),
\end{align*}
where we used $[\wh{d}]'(0)=\sum_{n\in \Z} d(n) (-in)$.
This proves \eqref{ord0}.

By $v*d=2u*d$, we have $c*d=(c-c_\infty v)*d+c_\infty v*d=2c_\infty u*d+(c-c_\infty v)*d$. Noting that
$[u*d](k)=\sum_{n=-\infty}^k d(n)$, we deduce that
\[
\sum_{k\in \Z} k[c*d](k)=
2c_\infty \sum_{k\in \Z}\sum_{n=-\infty}^k k  d(n)+\sum_{k\in \Z} k [(c-c_\infty v)*d](k).
\]
Since both $c-c_\infty v$ and $d$ are finitely supported sequences, by $\wh{d}(0)=0$, we have
\[
\sum_{k\in \Z} k [(c-c_\infty v)*d](k)=i[\wh{(c-c_\infty v)}\wh{d}]'(0)=
i\wh{[c-c_\infty v]}(0)[\wh{d}]'(0)+i\wh{[c-c_\infty v]}'(0)\wh{d}(0)=
i\wh{[c-c_\infty v]}(0)[\wh{d}]'(0).
\]
On the other hand, by $\sum_{n=-N}^N d(n)=\wh{d}(0)=0$, we have
\begin{align*}
2\sum_{k\in \Z} \sum_{n=-\infty}^k k d(n)
&=2\sum_{k=-N}^N \sum_{n=-N}^k kd(n)
=2\sum_{n=-N}^N \sum_{k=n}^N kd(n)
=2\sum_{n=-N}^N \frac{N+n}{2}(N+1-n) d(n)\\
&=\sum_{n=-N}^N (N(N+1)+n-n^2) d(n)
=\sum_{n=-N}^N nd(n)-\sum_{n=-N}^N n^2 d(n)
=i[\wh{d}]'(0)+[\wh{d}]''(0).
\end{align*}
Putting all the identities together,
we proved \eqref{ord1}.
\ep

We are now ready to prove Theorem~\ref{thm:main}.

\begin{proof}[Proof of Theorem~\ref{thm:main}]
We first take a special compactly supported scalar function $\eta\in \Lp{2}$ such that
\be \label{sp:eta}
\wh{\eta}(\xi)=1+\bo(|\xi|^3) \quad \mbox{and}\quad
\ol{\wh{\eta}(\xi)} \wh{\eta}(\xi+2\pi k)=\td(k)+\bo(|\xi|^3),\qquad \xi \to 0, k\in \Z.
\ee
Such a function can be easily constructed from B-splines in \eqref{bsplines}. Indeed, consider $B_3$ and note $\wh{B_3}(0)=1$.
Let $u=\{u(k)\}_{k\in \Z}$ be a finitely supported sequence such that $\wh{u}(\xi)=\frac{1}{\wh{B_3}(\xi)}+\bo(|\xi|^3)$ as $\xi \to 0$.
Define $\eta:=\sum_{k\in \Z} u(k)B_3(\cdot-k)$. Then $\eta$ is a compactly supported function in $\Lp{2}$ and
\[
\wh{\eta}(\xi)=\wh{u}(\xi)\wh{B_3}(\xi)=1+\bo(|\xi|^3),\qquad \xi\to 0.
\]
Since $\wh{B_3}(\xi+2\pi k)=\bo(|\xi|^3)$ as $\xi \to 0$ for all $k\in \Z\bs\{0\}$, the function $\eta$ satisfies all the conditions in \eqref{sp:eta}.
By \eqref{sp:eta}, we have
$\la (\cdot)^j, \eta(\cdot-k)\ra=k^j$ and
$x^j=\sum_{k\in \Z} \la (\cdot)^j, \eta(\cdot-k)\ra \eta(x-k)=\sum_{k\in \Z} k^j \eta(\cdot-k)$ for all $j=0,1,2$.
That is, we have
\be \label{eta}
1=\sum_{k\in \Z} \eta(\cdot-k),\qquad
x=\sum_{k\in \Z} k\eta(x-k),\qquad
x^2=\sum_{k\in \Z}k^2 \eta(x-k).
\ee
%
By our assumption in \eqref{basic:cond}, we have $\qp 1=1$. Therefore, $\qp \sgn$ agrees with $\sgn$ outside some neighborhood of the origin. Consequently, $\sgn-\qp \sgn$ is a compactly supported function in $\Lp{2}$.
We now calculate $\la \sgn-\qp \sgn, (\cdot)\ra$ using \eqref{eta}. Using the expression for the polynomial $x$ in \eqref{eta} and noting that $\eta$ has compact support, we have
\[
\la \sgn-\qp \sgn,(\cdot)\ra=\sum_{k\in \Z} k \la \sgn-\qp \sgn,\eta(\cdot-k)\ra=\sum_{k\in \Z} k\Big(\la \sgn, \eta(\cdot-k)\ra-\la \qp \sgn, \eta(\cdot-k)\ra\Big).
\]
On the other hand, by the definition of the quasi-projection operator $\qp$ in \eqref{qp:0}, we have
\[
\la \qp \sgn, \eta(\cdot-k)\ra=\sum_{n\in \Z} \la \sgn,\tilde{\phi}(\cdot-n)\ra \la \phi(\cdot-n), \eta(\cdot-k)\ra=\sum_{n\in \Z} \la \sgn,\tilde{\phi}(\cdot-n)\ra \la \phi, \eta(\cdot-(k-n))\ra.
\]
For $k\in \Z$, define
\be \label{bc}
c(k):=\la \sgn,\eta(\cdot-k)\ra\in \C,\quad \tilde{b}(k):=\la \sgn, \tilde{\phi}(\cdot-k)\ra\in \C^{1\times r},\quad b(k):=\la \phi,\eta(\cdot-k)\ra\in \C^r.
\ee
Then we have
\be \label{eq1}
\la \sgn-\qp \sgn, (\cdot)\ra=\sum_{k\in \Z} k\Big(c(k)-\sum_{n\in \Z} \tilde{b}(n)b(k-n)\Big)=\sum_{k\in \Z} k[c-\tilde{b}*b](k).
\ee
Since both $\phi$ and $\eta$ have compact support, the sequence $b$ must be finitely supported. Define $d:=b-\wh{\phi}(0)\td$. Then $d\in \lp{0}$.
Noting that $\tilde{b}*\td=\tilde{b}$, we can write
\[
c-\tilde{b}*b=(c-\tilde{b}*(\wh{\phi}(0)\td))-
\tilde{b}*(b-\wh{\phi}(0)\td)
=(c-\tilde{b}\wh{\phi}(0))-\tilde{b}*d.
\]

Since both $\eta$ and $\tilde{\phi}$ have compact support, by $\wh{\eta}(0)=1$, there exists a positive constant $N$ such that
$c(k)=\sgn (k)$ and $\tilde{b}(k)=\sgn(k) \ol{\wh{\tilde{\phi}}(0)}^\tp$ for all $|k|\ge N$.
Because $\ol{\wh{\tilde{\phi}}(0)}^\tp \wh{\phi}(0)=1$ by \eqref{basic:cond}, we conclude that $c-\tilde{b}\wh{\phi}(0)$ is a finitely supported sequence.
By the first identity in \eqref{eta} and the definition of the sequence $b$, we have
\[
\wh{b}(0)=\sum_{k\in\Z} b(k)=\Big\la \phi, \sum_{k\in \Z} \eta(\cdot-k)\Big\ra=\la \phi,1\ra=\wh{\phi}(0).
\]
Hence, $\wh{d}(0)=\wh{b}(0)-\wh{\phi}(0)=0$.
Since $\tilde{b}(k)=\sgn(k) \ol{\wh{\tilde{\phi}}(0)}^\tp$ for all $|k|\ge N$, we deduce that $\tilde{b}*d$ is a finitely supported sequence.
By \eqref{eq1}, since $c-\tilde{b}\wh{\phi}(0)\in \lp{0}$ and $\tilde{b}*d\in \lp{0}$, we conclude that
\be \label{eqI1I2}
\la \sgn-\qp \sgn, (\cdot)\ra=I_1-I_2 \quad \mbox{with}\quad
I_1:=\sum_{k\in \Z} k\Big
(c(k)-\tilde{b}(k)\wh{\phi}(0)\Big),
\quad
I_2:=\sum_{k\in \Z} k [\tilde{b}*d](k).
\ee
We now calculate $I_1$ and $I_2$.
To calculate $I_1$, we define
\[
\mathring{c}(k):=\int_0^1 \eta(x-k) dx=\int_{-k}^{1-k} \eta(x) dx,\qquad
\mathring{b}(k):=\int_0^1 \tilde{\phi}(x-k)dx=\int_{-k}^{1-k} \tilde{\phi}(x) dx,
\qquad k\in \Z.
\]
Then it is easy to deduce that $\sum_{k\in \Z} \mathring{b}(k)=\wh{\tilde{\phi}}(0)$ and
\be \label{tm:b}
\tilde{b}(k)=\la \sgn, \tilde{\phi}(\cdot-k)\ra
=-\ol{\wh{\tilde{\phi}}(0)}^\tp+2\sum_{n=-\infty}^k \ol{\mathring{b}(n)}^\tp.
\ee
By $\ol{\wh{\tilde{\phi}}(0)}^\tp \wh{\phi}(0)=1$,  the above identity leads to
$\tilde{b}(k)\wh{\phi}(0)=-1+2\sum_{n=-\infty}^k \ol{\mathring{b}(n)}^\tp \wh{\phi}(0)$.
Similarly, we have $c(k)=\la \sgn, \eta(\cdot-k)\ra=-1+2\sum_{n=-\infty}^k \ol{\mathring{c}(n)}$.
Since both $\mathring{b}$ and $\mathring{c}$ are finitely supported,
we can assume that $\ol{\mathring{c}}-\ol{\mathring{b}}^\tp \wh{\phi}(0)$ is supported inside $[-N,N]$ for some $N\in \N$. Then
\[
\sum_{n=-N}^N
\Big(\ol{\mathring{c}(n)}-\ol{\mathring{b}(n)}^\tp \wh{\phi}(0)\Big)
=\sum_{n\in \Z} \ol{\mathring{c}(n)}-\sum_{n\in \Z} \ol{\mathring{b}(n)}^\tp \wh{\phi}(0)
=\ol{\wh{\eta}(0)}- \ol{\wh{\tilde{\phi}}(0)}^\tp
\wh{\phi}(0)=1-1=0
\]
and
\begin{align*}
I_1&=\sum_{k\in \Z} k \Big(c(k)-\tilde{b}(k) \wh{\phi}(0)\Big)
=\sum_{k\in \Z} \sum_{n=-\infty}^k 2 k \Big(\ol{\mathring{c}(n)}-\ol{\mathring{b}(n)}^\tp \wh{\phi}(0)\Big)
=\sum_{k=-N}^N \sum_{n=-N}^k 2 k \Big(\ol{\mathring{c}(n)}-\ol{\mathring{b}(n)}^\tp \wh{\phi}(0)\Big)\\
&=\sum_{n=-N}^N \sum_{k=n}^N 2k \Big(\ol{\mathring{c}(n)}-\ol{\mathring{b}(n)}^\tp \wh{\phi}(0)\Big)
=\sum_{n=-N}^N (N(N+1)+n-n^2) \Big(\ol{\mathring{c}(n)}-\ol{\mathring{b}(n)}^\tp \wh{\phi}(0)\Big)\\
&=\sum_{n\in \Z} (n-n^2)\Big(\ol{\mathring{c}(n)}-\ol{\mathring{b}(n)}^\tp \wh{\phi}(0)\Big).
\end{align*}
By the definition of $\mathring{c}$ and \eqref{eta}, we have
\[
\sum_{n\in \Z} (n-n^2) \mathring{c}(n)
=\sum_{n\in \Z} (n-n^2) \int_{0}^{1} \eta(x-n) dx
=\int_{0}^{1} \sum_{n\in \Z} (n-n^2)\eta(x-n) dx
=\int_0^1 (x-x^2) dx=\frac{1}{6}.
\]
Similarly, by the definition of $\mathring{b}$ and the definition of $\kappa_j$ in \eqref{kappa}, we have
\[
\sum_{n\in \Z} (n-n^2)\mathring{b}(n)
=\int_0^1 \sum_{n\in \Z} (n-n^2) \tilde{\phi}(x-n) dx
=\kappa_1-\kappa_2.
\]
Hence, we conclude that
$I_1=\frac{1}{6}-\ol{(\kappa_1-\kappa_2)}^\tp \wh{\phi}(0)$.

We now calculate $I_2$. Note that we already proved that $\tilde{b}(k)=\sgn(k) \ol{\wh{\tilde{\phi}}(0)}^\tp$ for all $|k|\ge N$.
Define $\tilde{d}:=\tilde{b}-\ol{\wh{\tilde{\phi}}(0)}^\tp v$,
where the sequence $v$ is defined in Lemma~\ref{lem:c*d}.
Since $d\in \lp{0}$ with $\wh{d}(0)=0$,
by the identity \eqref{ord1} in Lemma~\ref{lem:c*d}, we have
\[
I_2=\sum_{k\in \Z} k[\tilde{b}*d](k)=i \ol{\wh{\tilde{\phi}}(0)}^\tp [\wh{d}]'(0)+\ol{\wh{\tilde{\phi}}(0)}^\tp [\wh{d}]''(0)+i\wh{\tilde{d}}(0)[\wh{d}]'(0).
\]
By definition, $d=b-\wh{\phi}(0)\td$. Hence, $\wh{d}(\xi)=\wh{b}(\xi)-\wh{\phi}(0)$. By \eqref{eta} and the definition of the sequence $b$, for $j=0,1,2$, we have
\[
[\wh{b}]^{(j)}(0)
=(-i)^j \sum_{k\in \Z} k^j \la \phi, \eta(\cdot-k)\ra
=(-i)^j \Big\la \phi, \sum_{k\in \Z} k^j \eta(\cdot-k)\Big\ra
=(-i)^j\la \phi, x^j\ra=[\wh{\phi}]^{(j)}(0).
\]
So, $[\wh{d}]'(0)=[\wh{b}]'(0)=[\wh{\phi}]'(0)$ and
 $[\wh{d}]''(0)=[\wh{b}]''(0)=[\wh{\phi}]''(0)$.
Hence,
$I_2=i \ol{\wh{\tilde{\phi}}(0)}^\tp[\wh{\phi}]'(0)
+\ol{\wh{\tilde{\phi}}(0)}^\tp[\wh{\phi}]''(0)
+i{\wh{\tilde{d}}}(0)[\wh{\phi}]'(0)$.
Note that
$\wh{\tilde{d}}(0)
=\sum_{k\in \Z} (\tilde{b}(k)-\ol{\wh{\tilde{\phi}}(0)}^\tp v(k))$. Since $v(k)=-1+2\sum_{n=-\infty}^k \td(n)$
for all $k\in \Z$ and \eqref{tm:b} holds, we deduce that
\begin{align*}
\wh{\tilde{d}}(0)
&=\sum_{k\in \Z} \sum_{n=-\infty}^k 2 \Big(\ol{\mathring{b}(n)}^\tp-
\ol{\wh{\tilde{\phi}}(0)}^\tp \td(n)\Big)
=\sum_{n=-N}^N \sum_{k=n}^N 2 \Big(\ol{\mathring{b}(n)}^\tp-
\ol{\wh{\tilde{\phi}}(0)}^\tp \td(n)\Big)\\
&=\sum_{n=-N}^N 2(N+1-n)\Big(\ol{\mathring{b}(n)}^\tp-
\ol{\wh{\tilde{\phi}}(0)}^\tp \td(n)\Big)
=\sum_{n\in \Z} -2n \Big(\ol{\mathring{b}(n)}^\tp-
\ol{\wh{\tilde{\phi}}(0)}^\tp \td(n)\Big)\\
&=-2\sum_{n\in \Z} n \ol{\mathring{b}(n)}^\tp
=-2\ol{\int_0^1 \Big(\sum_{n\in \Z} n \tilde{\phi}(x-n)\Big)dx}^\tp
=-2\ol{\kappa_1}^\tp.
\end{align*}
Hence,
\[
I_2=i \ol{\wh{\tilde{\phi}}(0)}^\tp[\wh{\phi}]'(0)
+\ol{\wh{\tilde{\phi}}(0)}^\tp[\wh{\phi}]''(0)-2i\ol{\kappa_1}^\tp[\wh{\phi}]'(0).
\]
Therefore, by \eqref{eqI1I2} and $\la (\cdot),\sgn-\qp \sgn\ra=\ol{\la \sgn-\qp \sgn, (\cdot)\ra}$, we conclude that \eqref{identity} holds.
\end{proof}

\section{Gibbs Phenomenon at an Arbitrary Point}
\label{sec:gibbs:x0}

We now discuss the Gibbs phenomenon at an arbitrary point for approximation by framelet expansions and quasi-projection operators.
In particular, we shall deal with the Gibbs phenomenon at an arbitrary point of wavelets and framelets derived from refinable vector functions.

We first discuss the Gibbs phenomenon at an arbitrary point $x_0\in \R$ for approximation by quasi-projection operators.
To do so, we have to introduce some notation.
For a real number $x\in \R$,  $[x]$
stands for its integer part
and $\la x\ra$ stands for its decimal part, i.e.,
\[
x=[x]+\la x\ra, \qquad x\in \R \quad \mbox{with}\quad [x]\in \Z, \; \la x\ra \in [0,1).
\]
Let $\phi$ and $\tilde{\phi}$ be compactly supported vector functions in $(\Lp{2})^r$. For $t\in \R$ and $n\in \N\cup\{0\}$, the \emph{$t$-shifted quasi-projection operators} $\qp_{n,t}$ are defined to be
\be \label{qp:shift}
\qp_{n,t} f:=\sum_{k\in \Z} \la f, 2^n \tilde{\phi}(2^n \cdot-k+t)\ra \phi(2^n\cdot-k+t),\qquad
n\in \N\cup\{0\}, f\in \lLp{2}.
\ee
Note that $\qp_n=\qp_{n,0}$ and
\be \label{qp:0:t}
\qp_{0,t+k}f=\qp_{0,t} f= [\qp (f(\cdot-t))](\cdot+t),\qquad \forall\, t\in \R, k\in \Z.
\ee
It follows directly from the above identity
in \eqref{qp:0:t} that $\qp \pp=\pp$ for all $\pp\in \PL_{m-1}$ if and only if $\qp_{0,t} \pp=\pp$ for all $\pp\in \PL_{m-1}$.

The following result on the shifted quasi-projection operators will be needed in our discussion of the Gibbs phenomenon at an arbitrary point.

\begin{lemma}\label{lem:qp:shift}
Let $\phi$ and $\tilde{\phi}$ be compactly supported vector functions in $(\Lp{2})^r$. Suppose that $\qp1:=\sum_{k\in \Z} \la 1, \tilde{\phi}(\cdot-k)\ra \phi(\cdot-k)=1$ (i.e., \eqref{basic:cond} holds) and all the entries of $\phi$ are continuous. Then
\be \label{qp:continuity}
\lim_{t\to c} \sup_{x\in \R} |[\qp_{0,t} \sgn-\qp_{0,c} \sgn](x)|
=0,\qquad \forall\, c\in \R.
\ee
Let $[c,d]$ be a bounded and closed interval and $t\in \R$. If $f$ is
a square integrable function such that $\lim_{y\to x} f(y)=f(x)$
for all $x\in [c,d]$, then
\be \label{qp:cd}
\lim_{n\to \infty} \sup_{x\in [c,d]} |[\qp_{n,t} f](x)-f(x)|=0.
\ee
\end{lemma}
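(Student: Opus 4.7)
The plan is to prove the two conclusions separately by exploiting the compact support of $\phi, \tilde{\phi}$ and the fact that $\qp 1 = 1$ forces $\qp_{0,t} 1 = 1$ and $\qp_{n,t} 1 = 1$ for every $t \in \R$ and every $n \in \N \cup \{0\}$. Indeed, unwinding the definitions and using $\qp 1 = 1$ gives $\ol{\wh{\tilde{\phi}}(0)}^\tp \sum_{k \in \Z} \phi(y - k) = 1$ for all $y \in \R$, which in turn yields $\qp_{n,t} 1 = 1$ after a change of variable.

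To prove \eqref{qp:continuity}, I would first choose $N \in \N$ so that $\phi$ and $\tilde{\phi}$ vanish outside $[-N,N]$. Inspecting supports gives $\la \sgn, \tilde{\phi}(\cdot - k + t) \ra = \sgn(k - t) \ol{\wh{\tilde{\phi}}(0)}^\tp$ whenever $|k - t| > N$, and combining this with the identity above shows that $[\qp_{0,t} \sgn](x) = \sgn(x)$ for every $|x| > 2N$ and every $t \in \R$. Hence $\qp_{0,t} \sgn - \qp_{0,c} \sgn$ is supported in $[-2N, 2N]$ independent of $t$, and on this compact set the defining sum reduces to a finite collection of indices $k$ that does not depend on $t$ for $t$ in a bounded neighborhood of $c$. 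For each such $k$, $\phi(x - k + t) \to \phi(x - k + c)$ uniformly in $x$ by uniform continuity of the compactly supported continuous $\phi$, while $\la \sgn, \tilde{\phi}(\cdot - k + t) \ra \to \la \sgn, \tilde{\phi}(\cdot - k + c) \ra$ by the dominated convergence theorem applied after the substitution $u = y - k + t$. Summing the finitely many uniform estimates delivers \eqref{qp:continuity}.

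For \eqref{qp:cd}, I would invoke $\qp_{n,t} 1 = 1$ to rewrite
\[
[\qp_{n,t} f](x) - f(x) = \sum_{k \in \Z} \la f - f(x), 2^n \tilde{\phi}(2^n \cdot - k + t) \ra \phi(2^n x - k + t), \qquad x \in [c,d].
\]
For each $x$, only the $k$ with $|k - 2^n x - t| \le N$ contribute, and for each such $k$ the inner-product integrand is supported in an interval of length $O(2^{-n})$ around $x$. A standard compactness argument on the closed interval $[c,d]$ promotes the hypothesized pointwise continuity of $f$ at every point of $[c,d]$ to a uniform modulus: for every $\epsilon > 0$ there exists $\delta > 0$ with $|f(y) - f(x)| < \epsilon$ whenever $x \in [c,d]$ and $|y - x| < \delta$. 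For $n$ large enough that all the support windows sit inside this $\delta$-tube, a componentwise estimate yields
\[
\sup_{x \in [c,d]} \bigl| [\qp_{n,t} f](x) - f(x) \bigr| \le \epsilon \sum_{\ell = 1}^r \|\tilde{\phi}^\ell\|_{\Lp{1}} \cdot \sup_{y \in \R} \sum_{k \in \Z} |\phi^\ell(y - k)|,
\]
and both factors on the right are finite by the compact support of $\tilde{\phi}$ together with the compact support and continuity of $\phi$. Letting $\epsilon \to 0^+$ gives \eqref{qp:cd}.

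The main technical nuance I anticipate is the extension of pointwise continuity of $f$ at every point of $[c,d]$ to a uniform modulus valid on a slight enlargement of $[c,d]$; this is needed because the $O(2^{-n})$ inner-product windows straddle the endpoints of $[c,d]$. A routine subsequence argument on the compact interval $[c,d]$ handles this cleanly, after which the remainder of the proof is bookkeeping with uniformly bounded supports.
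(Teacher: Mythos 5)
Your proposal is correct and follows essentially the same route as the paper: both parts hinge on $\qp_{0,t}1=\qp_{n,t}1=1$, the compact supports forcing $\qp_{0,t}\sgn-\qp_{0,c}\sgn$ to be a finite sum on a fixed compact set (handled term by term via uniform continuity of $\phi$ and continuity of $t\mapsto\la\sgn,\tilde{\phi}(\cdot-k+t)\ra$), and, for \eqref{qp:cd}, the promotion of pointwise continuity on $[c,d]$ to a uniform modulus on a neighbourhood of $[c,d]$ followed by the bound $\gep\cdot\|\tilde\phi\|_{(\Lp{1})^r}\cdot\sup_y\sum_k\|\phi(y-k)\|$. The only cosmetic differences (dominated convergence versus $\Lp{2}$-continuity of translation, and a subsequence argument versus the paper's explicit two-step $\delta$ at the endpoints) do not change the substance.
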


\bp By \eqref{qp:0:t}, we have $\qp_{0,c}=\qp_{0,c+k}$ for all $k\in \Z$. Hence, without loss of generality, we can assume $c\in [0,1)$. So, we only consider $t\in (-1,1)$ in \eqref{qp:continuity}.
Because $\phi$ and $\tilde{\phi}$ have compact support, without loss of generality, we assume that they are supported inside $[-N,N]$ for some $N\in \N$. By $t \in (-1,1)$, both $\tilde{\phi}(\cdot-k+t)$ and
$\phi(\cdot-k+t)$ are supported inside $[k-N-1,k+N+1]$.
Hence, both of them are supported inside $[0,\infty)$ for all $k\ge N+1$, and inside $(-\infty,0]$ for all $k\le -N-1$. Now
by $\qp1=1$, we have $\qp_{0,t}1=1$ and thus
we can easily deduce that
\be \label{qp:supp}
[\qp_{0,t} \sgn](x)=\sgn(x) \qquad \forall\; x \not\in [-2N-1,2N+1],\; t\in (-1,1).
\ee
Since $c\in [0,1)$, the identity in \eqref{qp:supp} holds with $t=c$. Consequently, since $\phi$ is continuous,
\[
\| \qp_{0,t} \sgn -\qp_{0,c} \sgn\|_{C(\R)}:=
\mbox{sup}_{x\in \R}|[\qp_{0,t} \sgn -\qp_{0,c} \sgn](x)|
=
\mbox{sup}_{x\in [-2N-1,2N+1]}|[\qp_{0,t} \sgn -\qp_{0,c} \sgn](x)|.
\]
For $x\in [-2N-1,2N+1]$, we observe
\[
[\qp_{0,t} \sgn -\qp_{0,c} \sgn](x)
=
\sum_{k=-3N-2}^{3N+2} H_{t,c,k}(x),
\]
where
\[
H_{t,c,k}(x):=
\la \sgn, \tilde{\phi}(\cdot-k+t)\ra \phi(x-k+t)-
\la \sgn, \tilde{\phi}(\cdot-k+c)\ra \phi(x-k+c).
\]
Therefore,
\be \label{qp:H}
\| \qp_{0,t} \sgn -\qp_{0,c} \sgn\|_{C(\R)}\le \sum_{k=-3N-2}^{3N+2}
\|H_{t,c,k}\|_{C(\R)}.
\ee
By $\lim_{t\to c} \| \tilde{\phi}(\cdot+t)-\tilde{\phi}(\cdot+c)\|_{(\Lp{2})^r}=0$, we trivially have $\lim_{t\to c} \la \sgn, \tilde{\phi}(\cdot-k+t)\ra=\la \sgn, \tilde{\phi}(\cdot-k+c)\ra$.
On the other hand, since $\phi$ is a compactly supported continuous vector function, the vector function $\phi$ must be bounded and uniformly continuous on $\R$. Hence $\lim_{t\to c}
\|\phi(\cdot+t)-\phi(\cdot+c)\|_{(C(\R))^r}=0$.
Now it is straightforward to conclude that $\lim_{t\to c} \|H_{t,c,k}\|_{C(\R)}=0$. Consequently, it follows directly from \eqref{qp:H} that \eqref{qp:continuity} must hold.

We now prove \eqref{qp:cd}. Let $\gep>0$. Since $f$ is continuous on the compact set $[c,d]$, the function $f$ must be uniformly continuous on $[c,d]$. Together with the fact that $f$ is continuous at every point in $[c,d]$, there must exist $\delta>0$ such that
\[
|f(y)-f(x)|<\gep/2\qquad \forall\; x,y\in [c,d], |y-x|<\delta
\]
and
\[
|f(y)-f(x)|<\gep/2\qquad \forall\;  x\in \{c,d\}, |y-x|<\delta.
\]
Then we can deduce from the above inequalities that
\be \label{f:cd}
|f(y)-f(x)|<\gep \qquad \forall\; x\in [c,d], |y-x|<\delta.
\ee
Since $\qp1=1$ and $\phi$ is continuous,
we must have $[\qp_{n,t}1](x)=1$ for all $x\in \R$. Consequently,
\[
[\qp_{n,t} f](x_0)-f(x_0)=\sum_{k\in \Z} \la f(\cdot)-f(x_0), 2^n \tilde{\phi}(2^n\cdot-k+t)\ra \phi(2^n x_0-k+t),\qquad \forall\, x_0\in \R.
\]
Because $\phi$ and $\tilde{\phi}$ are supported inside $[-N,N]$, the supports of $\phi(2^n\cdot-k+t)$ and $\tilde{\phi}(2^n\cdot-k+t)$ are contained inside an interval of length at most $2^{1-n}N$.
Let $x_0\in [c,d]$ be temporarily fixed.
If $\phi(2^n x_0-k+t)\ne 0$, then $k\in [2^n x_0+t-N, 2^n x_0+t+N]$ and hence the support of $\tilde{\phi}(2^n \cdot-k+t)$ must be contained inside the interval $[x_0-2^{1-n}N, x_0+2^{1-n} N]$. For $n>2+\log_2 (\delta/N)$, we have $[x_0-2^{1-n}N, x_0+2^{1-n} N]\subset (x_0-\delta,x_0+\delta)$ and therefore,
\[
[\qp_{n,t} f](x_0)-f(x_0)=\sum_{k\in \Z \cap [2^n x_0+t-N, 2^n x_0+t+N]} \la f(\cdot)-f(x_0), 2^n \tilde{\phi}(2^n\cdot-k+t)\ra \phi(2^n x_0-k+t).
\]
Therefore, for every $k\in \Z \cap [2^n x_0+t-N, 2^n x_0+t+N]$, since the support of $\tilde{\phi}(2^n\cdot-k+t)$ is contained inside
$(x_0-\delta,x_0+\delta)$, using \eqref{f:cd}, we must have
\begin{align*}
\|\la f(\cdot)-f(x_0), 2^n \tilde{\phi}(2^n \cdot-k+t)\ra\|_{l_2}
&\le
\int_{x_0-\delta}^{x_0+\delta}
|f(y)-f(x_0)| 2^n \| \tilde{\phi}(2^n y-k+t)\|_{l_2} dy\\
&\le \gep \int_\R 2^n \| \tilde{\phi}(2^n y-k+t)\|_{l_2} dy
=C_1\gep,
\end{align*}
where $C_1:=\int_{\R} \|\tilde{\phi}(y)\|_{l_2} dy$. Since $\tilde{\phi}\in (\Lp{2})^r$ has compact support, by the Cauchy-Schwarz inequality, we see that $\tilde{\phi}\in (\Lp{1})^r$ and hence $C_1<\infty$. Consequently, we have
\[
|[\qp_{n,t} f](x_0)-f(x_0)|\le
C_1 \gep \sum_{k\in \Z}  \|\phi(2^n x_0-k+t)\|_{l_2}
\le C_1 C_2 \gep,
\]
where
\[
C_2:=\left\| \sum_{k\in \Z} \|\phi(\cdot-k+t)\|_{l_2}\right\|_{C(\R)}:=
\sup_{x\in \R} \sum_{k\in \Z} \|\phi(x-k+t)\|_{l_2}.
\]
Because $\phi$ is compactly supported and is continuous, $\phi$ is bounded and $C_2<\infty$. This proves
\[
\sup_{x_0\in [c,d]} |[\qp_{n,t} f](x_0)-f(x_0)|\le C_1 C_2 \gep, \qquad \forall\, n> 2+\log_2 (\delta/N).
\]
This completes the proof of \eqref{qp:cd}.
\ep

We now discuss the Gibbs phenomenon at an arbitrary point $x_0\in \R$ for approximation by quasi-projection operators.
Consider the function $\sgn(\cdot-x_0)$, which is continuous everywhere except at the point $x_0$. Noting that $\sgn(2^{-n}\cdot)=\sgn$, we have
\begin{align*}
[\qp_n (\sgn(\cdot-x_0))](x)
&=\sum_{k\in \Z} \la \sgn(\cdot-x_0), 2^n \tilde{\phi}(2^n\cdot-k)\ra \phi(2^n x-k)\\
&=\sum_{k\in \Z}
\la \sgn, \tilde{\phi}(\cdot-k+2^n x_0)\ra \phi(2^n x-k)\\
&=\sum_{k\in \Z}
\la \sgn, \tilde{\phi}(\cdot-k+\la 2^n x_0\ra\ra\phi(2^n x-[2^nx_0]-k).
\end{align*}
Therefore, shifting the function $\qp_n (\sgn(\cdot-x_0))$ back by the amount of $x_0$, we have
\begin{align*}
[\qp_n (\sgn(\cdot-x_0))] (x+x_0)
&=\sum_{k\in\Z} \la \sgn, \tilde{\phi}(\cdot-k+\la 2^n x_0\ra)\ra \phi(2^n(x+x_0)-[2^n x_0]-k)\\
&=\sum_{k\in\Z} \la \sgn, \tilde{\phi}(\cdot-k+\la 2^n x_0\ra)\ra \phi(2^n x-k+\la 2^n x_0\ra)\\
&=[\qp_{0,\la 2^n x_0\ra} \sgn](2^n x).
\end{align*}
Since $\phi$ and $\tilde{\phi}$ are supported inside $[-N,N]$ for some $N\in \N$, as we proved in Lemma~\ref{lem:qp:shift},
\eqref{qp:supp} must hold.
Since $\la 2^n x_0\ra\in [0,1)$,
\eqref{qp:supp} allows us to take $c_n=2^{-n}(2N+1)$ in Definition~\ref{def:gibbs}.
Note that the essential supermum in \eqref{r:gibbs} does not change for any choice of $c_n>2^{-n}(2N+1)$.
So, we define
\be \label{RLx0}
R_{x_0}:=\limsup_{n\to \infty} R(\la 2^n x_0\ra),\qquad
L_{x_0}:=\liminf_{n\to \infty} L(\la 2^n x_0\ra),
\ee
where for $t\in \R$,
\be \label{RL}
R(t):=\mbox{ess-sup}_{x\in (0,\infty)}
[\qp_{0,t} \sgn](x),\qquad
L(t):=\mbox{ess-inf}_{x\in (-\infty,0)}
[\qp_{0,t} \sgn](x).
\ee
Then $\{\qp_n \}_{n\in \N}$ exhibits the Gibbs phenomenon at $x_0$ if and only if either $R_{x_0}>1$ or $L_{x_0}<-1$.
The nonnegative quantities $R_{x_0}-1$ and $-L_{x_0}-1$ are called
the \emph{overshoot percentages} at $x_0$ from the right-hand side and left-hand side, respectively.
For a sequence $\{t_n\}_{n\in \N}$ of real numbers, recall that $c\in \R$ is \emph{a cluster point} of $\{t_n\}_{n\in \N}$ if
there exists a subsequence $\{t_{n_k}\}_{k\in \N}$ such that $\lim_{k\to \infty} t_{n_k}=c$.

Under the assumption that all the entries of $\phi$ are continuous and $\qp1=1$, we now characterize the Gibbs phenomenon at an arbitrary point $x_0\in \R$.

\begin{theorem}\label{thm:gibbsx0}
Let $\phi$ and $\tilde{\phi}$ be compactly supported vector functions in $(\Lp{2})^r$.
For $x_0\in \R$, let $S_{x_0}$ be the set of all the cluster points of the sequence $\{\la 2^n x_0\ra\}_{n\in \N}$.
Define $R_{x_0}$ and $L_{x_0}$ as in \eqref{RLx0}. If $\qp1:=\sum_{k\in \Z} \la 1, \tilde{\phi}(\cdot-k)\ra \phi(\cdot-k)=1$ (i.e., \eqref{basic:cond} holds) and all the entries of $\phi$ are continuous, then
\be \label{RLx0:c}
R_{x_0}=\sup_{c\in S_{x_0}} R(c)\quad \mbox{and}\quad
L_{x_0}=\inf_{c\in S_{x_0}} L(c).
\ee
\end{theorem}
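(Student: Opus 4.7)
The plan is to reduce the claim to a standard real-analysis fact about $\limsup$ and $\liminf$ of a continuous function evaluated along a bounded sequence, using Lemma~\ref{lem:qp:shift} as the key input. The whole argument is essentially bookkeeping once I establish that the maps $t \mapsto R(t)$ and $t \mapsto L(t)$ are continuous on $\R$ (they are automatically $1$-periodic by \eqref{qp:0:t}).

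First I would observe that since every entry of $\phi$ is continuous and $\phi$ has compact support, the defining sum $[\qp_{0,t}\sgn](x) = \sum_{k\in\Z} \la \sgn, \tilde{\phi}(\cdot-k+t)\ra \phi(x-k+t)$ is locally finite in $x$ and hence a continuous function of $x$. Therefore the essential suprema and infima in \eqref{RL} equal their pointwise counterparts, i.e.\ $R(t)=\sup_{x>0}[\qp_{0,t}\sgn](x)$ and $L(t)=\inf_{x<0}[\qp_{0,t}\sgn](x)$. From the elementary estimate $|\sup f - \sup g| \le \|f-g\|_{C(\R)}$ (and the analogous one for $\inf$) I obtain
\[
|R(t)-R(c)| \le \|\qp_{0,t}\sgn - \qp_{0,c}\sgn\|_{C(\R)}, \qquad |L(t)-L(c)| \le \|\qp_{0,t}\sgn - \qp_{0,c}\sgn\|_{C(\R)}.
\]
By \eqref{qp:continuity} in Lemma~\ref{lem:qp:shift}, the right-hand side tends to $0$ as $t\to c$, so $R$ and $L$ are continuous on $\R$.

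Next I would apply the following elementary fact: if $f$ is continuous on a compact set $K\subset \R$, $\{t_n\}_{n\in\N}\subset K$, and $S$ is the set of cluster points of $\{t_n\}$, then $\limsup_n f(t_n)=\sup_{c\in S} f(c)$ and $\liminf_n f(t_n)=\inf_{c\in S} f(c)$. The inequality $\limsup_n f(t_n)\ge \sup_{c\in S} f(c)$ is immediate from continuity applied to subsequences $t_{n_k}\to c$; the reverse inequality is obtained by extracting a subsequence realizing the $\limsup$, then (using compactness of $K$) a further subsequence converging to some $c^*\in S$, and invoking continuity to get $\limsup_n f(t_n)=f(c^*)\le \sup_{c\in S} f(c)$. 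Applying this with $K=[0,1]$, $t_n=\la 2^n x_0\ra$, and $f=R$ respectively $f=L$ yields the two equalities in \eqref{RLx0:c}.

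The only place where something nontrivial happens is the continuity step, which rests entirely on Lemma~\ref{lem:qp:shift}; everything else is formal. I do not expect a genuine obstacle, but care is needed to justify passing from essential sup/inf to pointwise sup/inf, since otherwise the sup-norm continuity of $\qp_{0,t}\sgn$ in $t$ would not translate into continuity of $R(t)$ and $L(t)$ — this is exactly where the continuity hypothesis on $\phi$ is used.
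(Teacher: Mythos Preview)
Your proposal is correct and follows essentially the same approach as the paper: both arguments use Lemma~\ref{lem:qp:shift} to obtain continuity of $t\mapsto R(t)$ (and $t\mapsto L(t)$), then combine this with compactness of $[0,1]$ to identify $\limsup_n R(\la 2^n x_0\ra)$ with $\sup_{c\in S_{x_0}}R(c)$ via extraction of convergent subsequences in both directions. The paper carries out the two inequalities inline rather than packaging them as a general lemma, but the logic is identical; your explicit remark on replacing essential suprema by pointwise suprema (using the continuity of $\phi$) is a detail the paper leaves implicit.
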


\bp
By definition of $R_{x_0}$, there exists a subsequence $\{t_k\}_{k\in \N}$ with $t_k:=\la 2^{n_k} x_0\ra$ of the sequence $\{\la 2^n x_0\ra\}_{n\in \N}$ such that $R_{x_0}=\lim_{k\to \infty} R(t_k)$.
Since all $t_k$ lie inside the compact set $[0,1]$, $\{t_k\}_{k\in \N}$ must have a convergent subsequence. Without loss of generality, we assume that $\{t_k\}_{k\in \N}$ itself converges to a point $c\in [0,1]$. Hence, $c\in S_{x_0}$. Since $\qp1=1$ and all the entries of $\phi$ are continuous, by Lemma~\ref{lem:qp:shift}, \eqref{qp:continuity} must hold. In particular, \eqref{qp:continuity} implies
$\lim_{k\to \infty} R(t_k)=R(c)$. This proves
\[
R_{x_0}=\lim_{k\to \infty} R(t_k)
=R(c) \le \sup_{t\in S_{x_0}} R(t).
\]
On the other hand, for $c\in S_{x_0}$, the point $c$ is a cluster point of $\{\la 2^n x_0\ra\}_{n\in \N}$ and hence
there exists a subsequence $\{t_k\}_{k\in \N}$ with $t_k:=\la 2^{n_k} x_0\ra$ such that $c=\lim_{k\to \infty} t_k$.
Now by \eqref{qp:continuity}, we have
\[
R(c)=\lim_{k\to \infty} R(t_k)\le \limsup_{n\to \infty} R(\la 2^n x_0\ra)=R_{x_0}.
\]
Hence, $\sup_{c\in S_{x_0}} R(c)\le R_{x_0}$. This proves the first identity in \eqref{RLx0:c}.
The proof of the second identity in \eqref{RLx0:c} is similar.
\ep

Note that $R(t)\ge 1$ and $L(t)\le -1$ for all $t\in \R$. By Theorem~\ref{thm:gibbsx0},
$\{\qp_n \}_{n\in \N}$ does not exhibit the Gibbs phenomenon at $x_0$ if and only if
$R_{x_0}=1$ and $L_{x_0}=-1$, which are further equivalent to

\be \label{qp:gibbs}
\mbox{ess-sup}_{x\in (0,\infty)} [\qp_{0,c}\sgn](x)=1
\quad \mbox{and}\quad
\mbox{ess-inf}_{x\in (-\infty,0)}[\qp_{0,c}\sgn](x)=-1, \qquad \forall\; c\in S_{x_0}.
\ee
If $x_0$ is an irrational number, then it is well known that $S_{x_0}=[0,1]$ with $0\in S_{x_0}$. Let $x_0$ be a rational number. Then we can uniquely write $x_0=\frac{p}{2^k q}$ with $p\in \Z$, $k\in \N\cup\{0\}$, and $q$ being a positive odd integer.
If $q=1$, then $x_0$ is just a dyadic rational number and hence $S_{x_0}=S_{\frac{p}{2^k}}=\{0\}$. For this case,
\eqref{qp:gibbs} is simply Lemma~\ref{lem:qp:gibb}.
If $q>1$, then $S_{x_0}=S_{\frac{p}{2^k q}} \subseteq \{\frac{1}{q},\ldots,\frac{q-1}{q}\}$ with  $0,1\not \in S_{x_0}$.

For the Gibbs phenomenon at an arbitrary point of approximation by general quasi-projection operators, as a consequence of Theorem~\ref{thm:gibbsx0},
the following result generalizes Theorem~\ref{thm:qp}.

\begin{theorem}\label{thm:qp:c}
Let $\phi$ and $\tilde{\phi}$ be $r\times 1$ vectors of compactly supported functions in $\Lp{2}$ such that all the entries of $\phi$ are continuous. Let
$\qp_n$ be the quasi-projection operators defined in \eqref{qp} and $\qp:=\qp_0$ in \eqref{qp:0}. Let $\qp_{0,t}, t\in \R$ be the $t$-shifted quasi-projection operator defined in \eqref{qp:shift}.
If \eqref{basic:cond} holds (i.e., $\qp 1=1$) and
\eqref{basic:cond:dual} is satisfied,
then
\be \label{sgn:qp:c}
\la (\cdot) , \sgn-\qp_{0,c} \sgn\ra=-[\ol{\wh{\phi}}^\tp \wh{\tilde{\phi}}]''(0), \qquad \forall\; c\in \R.
\ee
If in addition $[\ol{\wh{\phi}}^\tp \wh{\tilde{\phi}}]''(0)=0$ and both $\phi$ and $\tilde{\phi}$ are real-valued,
then $\{\qp_n \}_{n\in \N}$ exhibits the Gibbs phenomenon at all points in $\R$.
\end{theorem}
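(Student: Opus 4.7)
The plan is to reduce everything to Theorem~\ref{thm:qp} by a simple translation. For each $c\in\R$, define $\phi_c:=\phi(\cdot+c)$ and $\tilde{\phi}_c:=\tilde{\phi}(\cdot+c)$. A direct rewrite of \eqref{qp:shift} shows that $\qp_{0,c}$ is exactly the ordinary (unshifted) quasi-projection operator \eqref{qp:0} built from the pair $(\phi_c,\tilde{\phi}_c)$. Taking Fourier transforms, $\wh{\phi_c}(\xi)=e^{ic\xi}\wh{\phi}(\xi)$ and $\wh{\tilde{\phi}_c}(\xi)=e^{ic\xi}\wh{\tilde{\phi}}(\xi)$, so $\ol{\wh{\tilde{\phi}_c}(0)}^\tp\wh{\phi_c}(2\pi k)=e^{2\pi ick}\ol{\wh{\tilde{\phi}}(0)}^\tp\wh{\phi}(2\pi k)$ and $\ol{\wh{\phi_c}(\xi)}^\tp\wh{\tilde{\phi}_c}(\xi+2\pi k)=e^{2\pi ick}\ol{\wh{\phi}(\xi)}^\tp\wh{\tilde{\phi}}(\xi+2\pi k)$. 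Hence conditions \eqref{basic:cond} and \eqref{basic:cond:dual} continue to hold for $(\phi_c,\tilde{\phi}_c)$; furthermore the phases cancel completely in $\ol{\wh{\phi_c}(\xi)}^\tp\wh{\tilde{\phi}_c}(\xi)=\ol{\wh{\phi}(\xi)}^\tp\wh{\tilde{\phi}}(\xi)$, so $[\ol{\wh{\phi_c}}^\tp\wh{\tilde{\phi}_c}]''(0)=[\ol{\wh{\phi}}^\tp\wh{\tilde{\phi}}]''(0)$. Applying the first conclusion of Theorem~\ref{thm:qp} to $(\phi_c,\tilde{\phi}_c)$ then yields \eqref{sgn:qp:c} for every $c\in\R$.

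For the Gibbs statement I argue by contradiction: suppose $\{\qp_n\}_{n\in\N}$ is free of the Gibbs phenomenon at some $x_0\in\R$. By Theorem~\ref{thm:gibbsx0} this means $R(c)=1$ and $L(c)=-1$ for every cluster point $c\in S_{x_0}$; in particular $[\qp_{0,c}\sgn](x)\le 1$ for a.e.\ $x>0$ and $[\qp_{0,c}\sgn](x)\ge -1$ for a.e.\ $x<0$. Fix such a $c$. Since $\phi$ and $\tilde{\phi}$ are real-valued, so is $\qp_{0,c}\sgn$, and consequently $x(\sgn(x)-[\qp_{0,c}\sgn](x))\ge 0$ for almost every $x\in\R$. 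Combined with the identity $\la(\cdot),\sgn-\qp_{0,c}\sgn\ra=0$ from the first part (which uses the standing assumption $[\ol{\wh{\phi}}^\tp\wh{\tilde{\phi}}]''(0)=0$), this nonnegative integrand must vanish almost everywhere, so $[\qp_{0,c}\sgn](x)=\sgn(x)$ for a.e.\ $x\in\R$.

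Finally, because $\phi$ has compact support and continuous entries and $\tilde{\phi}$ is compactly supported, the sum defining $\qp_{0,c}\sgn$ is locally finite with uniformly continuous summands, so $\qp_{0,c}\sgn$ is continuous on $\R$. A continuous function that agrees almost everywhere with $\sgn$ must coincide with $\sgn$ throughout $\R\bs\{0\}$, forcing $\lim_{x\to 0^+}[\qp_{0,c}\sgn](x)=1$ and $\lim_{x\to 0^-}[\qp_{0,c}\sgn](x)=-1$, which contradicts continuity at $0$. The main obstacle is rather mild: it lies entirely in verifying that translating by $c$ preserves both hypotheses of Theorem~\ref{thm:qp} and keeps the value of $[\ol{\wh{\phi}}^\tp\wh{\tilde{\phi}}]''(0)$ on the right-hand side of \eqref{sgn:qp:c} unchanged; once that bookkeeping is done, Theorem~\ref{thm:gibbsx0} together with the continuity of $\qp_{0,c}\sgn$ produces the contradiction essentially for free.
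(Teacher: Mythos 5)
Your proposal is correct and follows essentially the same route as the paper: translate $\phi,\tilde{\phi}$ by $c$, check via the phase factors $e^{ic\xi}$ that \eqref{basic:cond}, \eqref{basic:cond:dual} and the value of $[\ol{\wh{\phi}}^\tp\wh{\tilde{\phi}}]''(0)$ are preserved, apply Theorem~\ref{thm:qp} to the shifted pair, and then combine the vanishing of $\la(\cdot),\sgn-\qp_{0,c}\sgn\ra$ with Theorem~\ref{thm:gibbsx0} and the continuity of $\qp_{0,c}\sgn$ to rule out the no-Gibbs alternative. Your ending merely makes explicit the continuity argument that the paper compresses into the remark that \eqref{sgn:cond} holds for the shifted operator because $\phi$ is continuous.
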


\bp Define $\mathring{\phi}:=\phi(\cdot+c)$ and
$\tilde{\mathring{\phi}}:=\tilde{\phi}(\cdot+c)$.
Obviously, by definition we have
\[
\mathring{\qp}_n f:=\sum_{k\in \Z} \la f, 2^n \tilde{\mathring{\phi}}(2^n\cdot-k)\ra \mathring{\phi}(2^n\cdot-k)=
\sum_{k\in \Z} \la f, 2^n \tilde{\phi}(2^n\cdot-k+c)\ra \phi(2^n\cdot-k+c)
=\qp_{n,c} f.
\]
By $\wh{\mathring{\phi}}(\xi)=e^{ic\xi} \wh{\phi}(\xi)$ and $\wh{\tilde{\mathring{\phi}}}(\xi)=e^{ic\xi} \wh{\tilde{\phi}}(\xi)$, we observe
\be \label{identity:mathring}
\ol{\wh{\tilde{\mathring{\phi}}}(\xi+2\pi j)}^\tp \wh{\mathring{\phi}}(\xi+2\pi k)=
e^{i2\pi (k-j) c} \ol{\wh{\tilde{\phi}}(\xi+2\pi j)}^\tp \wh{\phi}(\xi+2\pi k),\qquad \xi\in \R, j,k\in \Z.
\ee
Since \eqref{basic:cond} and
\eqref{basic:cond:dual} are satisfied for $\phi$ and $\tilde{\phi}$, now it is trivial to deduce from \eqref{identity:mathring} that both \eqref{basic:cond} and
\eqref{basic:cond:dual} are satisfied with $\phi$ and $\tilde{\phi}$ being replaced by $\mathring{\phi}$ and $\tilde{\mathring{\phi}}$, respectively. Applying Theorem~\ref{thm:qp} to $\mathring{\phi}$ and $\tilde{\mathring{\phi}}$ and noting $\ol{\wh{\mathring{\phi}}}^\tp(\xi) \wh{\tilde{\mathring{\phi}}}(\xi)=
\ol{\wh{\phi}}^\tp(\xi) \wh{\tilde{\phi}}(\xi)$,
we must have
\[
\la (\cdot), \sgn-\qp_{0,c} \sgn\ra=
\la (\cdot), \sgn -\mathring{\qp}_0 \sgn\ra=-[\ol{\wh{\mathring{\phi}}}^\tp \wh{\tilde{\mathring{\phi}}}]''(0)=
-[\ol{\wh{\phi}}^\tp \wh{\tilde{\phi}}]''(0).
\]
This proves \eqref{sgn:qp:c}. Since all the entries of $\phi$ are continuous, the condition in \eqref{sgn:cond} holds. Consequently, if $[\ol{\wh{\phi}}^\tp \wh{\tilde{\phi}}]''(0)=0$, then by \eqref{sgn:qp:c} we must have
$\la (\cdot), \sgn-\qp_{0,c} \sgn\ra=0$.
Hence, one of the identities in \eqref{qp:gibbs} must fail. Now it follows from Theorem~\ref{thm:gibbsx0} that
$\{\qp_n \}_{n\in \N}$ must exhibit the Gibbs phenomenon at all points $x_0\in \R$ by noting that the set $S_{x_0}$ cannot be empty.
\ep

As a direct consequence of Theorem~\ref{thm:qp:c},
a general version of Corollary~\ref{cor:qp} is as follows.

\begin{cor}\label{cor:qp:c}
Let $\phi$ be an $r\times 1$ vector of compactly supported real-valued functions in $\Lp{2}$ such that all the entries of $\phi$ are continuous.
Let $\qp_n, n\in \N$ be the quasi-projection operators associated with $\phi$ as in \eqref{qp:phi}.
If $\{\qp_n\}_{n\in \N}$ has accuracy order higher than two, then  $\{\qp_n\}_{n\in \N}$ must exhibit the Gibbs phenomenon at all points in $\R$.
\end{cor}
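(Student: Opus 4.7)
The plan is to specialize Theorem~\ref{thm:qp:c} to the case $\tilde{\phi} := \phi$, exactly as Corollary~\ref{cor:qp} was obtained from Theorem~\ref{thm:qp}. First I would unpack the hypothesis that $\{\qp_n\}_{n\in\N}$ has accuracy order higher than two: by the characterization \eqref{polyprod} applied with $m = 3$ and $\tilde{\phi} := \phi$, this is equivalent to
\begin{equation*}
\ol{\wh{\phi}(\xi)}^\tp \wh{\phi}(\xi + 2\pi k) = \td(k) + \bo(|\xi|^3), \qquad \xi \to 0,\; k \in \Z,
\end{equation*}
which is just \eqref{apprord:2}. Evaluating at $\xi = 0$ in the cases $k = 0$ and $k \ne 0$ yields \eqref{basic:cond} (equivalently $\qp 1 = 1$), and the same relation weakened to $\bo(|\xi|^2)$ is precisely \eqref{basic:cond:dual} with $\tilde{\phi} := \phi$. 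Thus the two structural hypotheses of Theorem~\ref{thm:qp:c} are verified.

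Next I would extract the vanishing of the critical second derivative. The $k = 0$ case of the displayed relation reads $\ol{\wh{\phi}(\xi)}^\tp \wh{\phi}(\xi) = 1 + \bo(|\xi|^3)$ as $\xi \to 0$, so $[\ol{\wh{\phi}}^\tp \wh{\phi}]''(0) = 0$. Since $\phi$ is real-valued by hypothesis (so $\tilde{\phi} := \phi$ is real-valued as well), the ``in addition'' clause of Theorem~\ref{thm:qp:c} applies provided \eqref{sgn:cond} holds; but the remark immediately following the proof of Theorem~\ref{thm:qp} already notes that continuity of $\phi$ is enough to guarantee \eqref{sgn:cond}, and continuity is assumed here.

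Feeding these ingredients into Theorem~\ref{thm:qp:c} then delivers the conclusion: $\{\qp_n\}_{n\in\N}$ exhibits the Gibbs phenomenon at every $x_0 \in \R$. There is no genuine obstacle; the entire argument is bookkeeping that translates ``accuracy order $>2$'' into the three numerical inputs that Theorem~\ref{thm:qp:c} consumes. The only point worth emphasizing in the write-up is the (already justified) passage from continuity of $\phi$ to condition \eqref{sgn:cond}, since that is the one hypothesis of Theorem~\ref{thm:qp:c} whose verification is not purely algebraic.
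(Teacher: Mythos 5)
Your proof is correct and follows exactly the route the paper intends (the paper omits the proof, calling the corollary a ``direct consequence'' of Theorem~\ref{thm:qp:c}, and the intended argument is precisely your specialization $\tilde{\phi}:=\phi$, mirroring how Corollary~\ref{cor:qp} is derived from Theorem~\ref{thm:qp}). One cosmetic point: \eqref{sgn:cond} is not actually listed among the hypotheses of Theorem~\ref{thm:qp:c} — its proof already derives \eqref{sgn:cond} from the continuity of $\phi$ — so your final paragraph is verifying something that is subsumed in the theorem's stated continuity assumption; this does not affect correctness.
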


Finally we discuss the Gibbs phenomenon at an arbitrary point for dual framelets, in particular, for those derived from refinable vector functions and multiresolution analysis.
As a direct consequence of Theorems~\ref{thm:qp:c} and~\ref{thm:framelet:gibbs},
we have the following result generalizing Theorem~\ref{thm:framelet:gibbs}.

\begin{theorem}\label{thm:framelet:gibbs:c}
Let $(\{\tilde{\phi};\tilde{\psi}\},\{\phi;\psi\})$ be a compactly supported real-valued dual framelet in $\Lp{2}$ such that all the entries of $\phi$ are continuous.
Let $\mathcal{A}_n, n\in \N\cup\{0\}$ be defined in \eqref{df:repr:trunc} for the truncated dual framelet expansions.
Define the quasi-projection operator $\qp$ as in \eqref{qp:0}.  If $\vmo(\psi)\ge 2$ and $\vmo(\tilde{\psi})\ge 1$, then $\la (\cdot), \sgn-\qp \sgn_{0,c}\ra=0$ for all $c\in \R$ and $\{\mathcal{A}_n\}_{n\in \N}$ must exhibit the Gibbs phenomenon at all points in $\R$.
\end{theorem}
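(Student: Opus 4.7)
The plan is to reduce Theorem~\ref{thm:framelet:gibbs:c} to Theorem~\ref{thm:qp:c} by verifying the latter's hypotheses one by one. First, I would invoke item~(i) of Proposition~\ref{prop:framelet} to identify the truncated framelet operators with quasi-projection operators: $\mathcal{A}_n f=\qp_n f$ for every $f\in\Lp{2}$ and $n\in\N\cup\{0\}$. Thus the Gibbs phenomenon for $\{\mathcal{A}_n\}$ is the same question as the Gibbs phenomenon for $\{\qp_n\}$, and the moment identity $\la(\cdot),\sgn-\qp_{0,c}\sgn\ra=0$ is already in the correct framework to be read off from Theorem~\ref{thm:qp:c}.

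Next I would verify the two accuracy-type conditions feeding Theorem~\ref{thm:qp:c}. The hypothesis $\vmo(\tilde{\psi})\ge 1$ together with item~(ii) of Proposition~\ref{prop:framelet} gives $\qp 1=1$, which is exactly \eqref{basic:cond}. Swapping the roles of $\phi,\psi$ with $\tilde{\phi},\tilde{\psi}$ throughout Proposition~\ref{prop:framelet}, the hypothesis $\vmo(\psi)\ge 2$ forces $\tilde{\qp}\pp=\pp$ for every $\pp\in\PL_1$, which via \eqref{tildepolyprod} with $m=2$ is precisely condition \eqref{basic:cond:dual}.

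The third condition, $[\ol{\wh{\phi}}^\tp\wh{\tilde{\phi}}]''(0)=0$, is the only place where both vanishing-moment hypotheses are used jointly, and here I would transplant the cascade argument already carried out inside the proof of Theorem~\ref{thm:framelet:gibbs}. Specializing \eqref{cascade} at $n=1$ and applying \cite[Lemma~5]{han10} yields
\[
\ol{\wh{\phi}(\xi)}^\tp\wh{\tilde{\phi}}(\xi)+\ol{\wh{\psi}(\xi)}^\tp\wh{\tilde{\psi}}(\xi)=\ol{\wh{\phi}(\xi/2)}^\tp\wh{\tilde{\phi}}(\xi/2).
\]
Since $\vmo(\psi)\ge 2$ and $\vmo(\tilde{\psi})\ge 1$ together force $\ol{\wh{\psi}(\xi)}^\tp\wh{\tilde{\psi}}(\xi)=\bo(|\xi|^3)$ as $\xi\to 0$, a Taylor expansion of $\ol{\wh{\phi}}^\tp\wh{\tilde{\phi}}$ around the origin combined with $\ol{\wh{\phi}(0)}^\tp\wh{\tilde{\phi}}(0)=1$ (from \eqref{basic:cond:dual}) forces $\ol{\wh{\phi}(\xi)}^\tp\wh{\tilde{\phi}}(\xi)=1+\bo(|\xi|^3)$, giving the desired vanishing of the second derivative.

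With these three algebraic inputs in hand, and since the standing assumptions provide real-valuedness of $\phi,\tilde{\phi}$ and continuity of the entries of $\phi$, Theorem~\ref{thm:qp:c} applies directly and simultaneously delivers both conclusions: the identity $\la(\cdot),\sgn-\qp_{0,c}\sgn\ra=0$ for every $c\in\R$, and the Gibbs phenomenon of $\{\qp_n\}_{n\in\N}$, equivalently $\{\mathcal{A}_n\}_{n\in\N}$, at every point of $\R$. There is essentially no obstruction: the proof is a synthesis, and the only substantive piece of analysis, the cascade-plus-Taylor computation giving $[\ol{\wh{\phi}}^\tp\wh{\tilde{\phi}}]''(0)=0$, is the same one that was already assembled for the $x_0=0$ case in Section~\ref{sec:wavelet}; all the point-by-point content is handled by the shifted-operator machinery already packaged inside Theorem~\ref{thm:qp:c}.
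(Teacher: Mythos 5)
Your proposal is correct and follows essentially the same route as the paper: the paper's own proof simply says to repeat the argument of Theorem~\ref{thm:framelet:gibbs} (identification $\mathcal{A}_n=\qp_n$ via Proposition~\ref{prop:framelet}, the accuracy conditions \eqref{basic:cond} and \eqref{basic:cond:dual} from the vanishing-moment hypotheses, and the cascade-plus-Taylor computation giving $[\ol{\wh{\phi}}^\tp\wh{\tilde{\phi}}]''(0)=0$) with Theorem~\ref{thm:qp:c} substituted for Theorem~\ref{thm:qp}, which is exactly what you have written out.
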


\bp By Proposition~\ref{prop:framelet}, $\vmo(\tilde{\psi})\ge 1$ implies $\qp 1=1$.
Now all the claim follows from the same proof of Theorem~\ref{thm:framelet:gibbs} but using Theorem~\ref{thm:qp:c} instead of Theorem~\ref{thm:qp}.
\ep

As a direct consequence of Theorem~\ref{thm:framelet:gibbs:c},
the following result generalizes Corollary~\ref{cor:tf}
on tight framelets and orthogonal wavelets.

\begin{cor}\label{cor:tf:c}
Let $\{\phi;\psi\}$ be a compactly supported real-valued tight framelet in $\Lp{2}$ such that all the entries of $\phi$ are continuous and real-valued. Let $\mathcal{A}_n, n\in \N\cup\{0\}$ be defined in \eqref{qp:phi:2} for the truncated tight framelet expansions.
If $\vmo(\psi)\ge 2$, then $\la (\cdot), \sgn-\qp_{0,c} \sgn\ra=0$ for all $c\in \R$
and $\{\mathcal{A}_n\}_{n\in \N}$ must exhibit the Gibbs phenomenon at all points in $\R$.
\end{cor}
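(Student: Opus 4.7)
The plan is to deduce Corollary~\ref{cor:tf:c} directly from Theorem~\ref{thm:framelet:gibbs:c} by viewing a tight framelet as a special case of a dual framelet. Specifically, if $\{\phi;\psi\}$ is a tight framelet in $\Lp{2}$, then setting $\tilde{\phi}:=\phi$ and $\tilde{\psi}:=\psi$ produces a dual framelet $(\{\tilde{\phi};\tilde{\psi}\},\{\phi;\psi\})$ in $\Lp{2}$, because the tight frame identity $\|f\|_{\Lp{2}}^2=\sum_{k,\ell}|\la f,\phi^\ell(\cdot-k)\ra|^2+\sum_{j,\ell,k}|\la f,\psi^\ell_{j;k}\ra|^2$ polarizes to exactly the dual framelet reproduction formula in \eqref{df}. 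Moreover, the truncated tight framelet expansion in \eqref{qp:phi:2} coincides, under this identification, with the truncated dual framelet operator $\mathcal{A}_n$ in \eqref{df:repr:trunc}.

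Next I would verify the remaining hypotheses of Theorem~\ref{thm:framelet:gibbs:c}. Since $\tilde{\psi}=\psi$ and $\vmo(\psi)\ge 2$ by assumption, we automatically have $\vmo(\tilde{\psi})=\vmo(\psi)\ge 2\ge 1$, so both the vanishing moment conditions ($\vmo(\psi)\ge 2$ and $\vmo(\tilde{\psi})\ge 1$) hold. The hypothesis that $\phi$ is compactly supported, real-valued, and has continuous entries is given; since $\tilde{\phi}=\phi$, these properties transfer to $\tilde{\phi}$ as well, so $(\{\tilde{\phi};\tilde{\psi}\},\{\phi;\psi\})$ meets all the standing assumptions of Theorem~\ref{thm:framelet:gibbs:c}.

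Applying Theorem~\ref{thm:framelet:gibbs:c} to this dual framelet then immediately yields both conclusions: $\la (\cdot), \sgn-\qp_{0,c}\sgn\ra=0$ for every $c\in \R$, and $\{\mathcal{A}_n\}_{n\in \N}$ exhibits the Gibbs phenomenon at every point of $\R$. There is really no technical obstacle here; the only point that warrants care is making sure the identification of a tight framelet with a dual framelet (with $\tilde{\phi}=\phi$, $\tilde{\psi}=\psi$) genuinely produces a dual framelet in the sense of \eqref{df}, which follows from polarization of the tight frame identity and the absolute convergence that is already embedded in the tight frame bounds. Once this identification is in place, the corollary is an immediate specialization of the already-established dual framelet result, so the proof should be a short paragraph invoking Theorem~\ref{thm:framelet:gibbs:c}.
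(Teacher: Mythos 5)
Your proposal is correct and follows exactly the route the paper intends: the paper presents this corollary as a direct consequence of Theorem~\ref{thm:framelet:gibbs:c}, obtained by regarding the tight framelet as the dual framelet with $\tilde{\phi}=\phi$ and $\tilde{\psi}=\psi$ (so that $\vmo(\tilde{\psi})=\vmo(\psi)\ge 2\ge 1$), just as you do. The polarization remark justifying that a tight framelet satisfies \eqref{df} is the standard fact the paper relies on implicitly, so nothing is missing.
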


We finish this paper by discussing the Gibbs phenomenon of dual framelets derived from refinable vector functions and multiresolution analysis.
Let $\phi=(\phi^1,\ldots,\phi^r), \tilde{\phi}=(\tilde{\phi}^1,\ldots,\tilde{\phi}^r)\in (\Lp{2})^r$ and $\psi=(\psi^1,\ldots,\psi^s), \tilde{\psi}=(\tilde{\psi}^1,\ldots,\tilde{\psi}^s)\in (\Lp{2})^s$.
As addressed in \cite{han10,han12}, the underlying systems of a dual framelet $(\{\tilde{\phi};\tilde{\psi}\},\{\phi;\psi\})$ in $\Lp{2}$ are the nonhomogeneous affine systems $\AS_J(\phi;\psi)$, $J\in \Z$ in \eqref{as}.
In the classical theory of wavelets, however \emph{homogeneous affine systems} are often discussed and are defined to be
\[
\AS(\psi)=\{\psi^\ell_{j;k} \setsp j\in \Z, k\in \Z, \ell=1,\ldots, s\}.
\]
If both $\AS(\psi)$ and $\AS(\tilde{\psi})$ are frames in $\Lp{2}$ and satisfy
\[
\la f,g\ra=\sum_{j\in \Z} \sum_{\ell=1}^s \sum_{k\in \Z}
\la f, \tilde{\psi}^\ell_{j;k}\ra \la \psi^\ell_{j;k},g\ra,\qquad \forall\, f,g\in \Lp{2}
\]
with the series converging absolutely, then $(\tilde{\psi},\psi)$ is called a \emph{homogeneous dual framelet} in $\Lp{2}$.
It is known in \cite{han12} that if $(\{\tilde{\phi};\tilde{\psi}\},\{\phi;\psi\})$ is a dual framelet (resp. biorthogonal wavelet) in $\Lp{2}$,
as the limiting systems of $\AS_J(\tilde{\phi};\tilde{\psi})$ and $\AS_J(\phi,\psi)$ as $J\to -\infty$,
then $(\tilde{\psi},\psi)$ must be a homogeneous dual framelet (resp. homogeneous biorthogonal wavelet) in $\Lp{2}$. Furthermore, due to
the identities in \eqref{cascade}
and the known identity $\lim_{j\to -\infty}
\sum_{k\in \Z} \la f, \tilde{\phi}_{j;k}\ra \la \phi_{j;k},g\ra=0$ for all $f,g\in \Lp{2}$, one has
\[
\sum_{j=-\infty}^{n-1} \sum_{k\in \Z} \la f, \tilde{\psi}_{j;k}\ra \psi_{j;k}=\mathcal{A}_n f=\sum_{k\in \Z} \la f, \tilde{\phi}_{n;k}\ra \phi_{n;k},\qquad f\in \Lp{2},
\]
where $\mathcal{A}_n$ is defined in \eqref{df:repr:trunc}. Hence, all our results can be directly applied to homogeneous wavelets and framelets.
For detailed discussion about the relations between homogeneous and nonhomogeneous wavelets and framelets, see \cite{han12,han17,hanbook} and references therein.

The most general method for constructing tight or dual framelets derived from refinable vector functions is the Oblique Extension Principle (OEP), e.g., see \cite{dh04,dhrs03,hm03}. For systematic construction of OEP-based dual framelets and dual multiframelets, see \cite{chs02,dh04,dhrs03,han09,han10,hanbook,hm03} and references therein.

For $a=\{a(k)\}_{k\in \Z}\in (\lp{0})^{r\times s}$, recall that $\wh{a}$ is the $r\times s$ matrix of $2\pi$-periodic trigonometric polynomials defined by
$\wh{a}(\xi):=\sum_{k\in \Z}a(k) e^{-i k\xi}$ for $\xi\in \R$.
Let $\phi\in (\Lp{2})^{r}$ be a compactly supported vector function. We say that $\phi$ is \emph{refinable} with a filter/mask $a\in (\lp{0})^{r\times r}$ if
$\phi=2\sum_{k\in \Z} a(k) \phi(2\cdot-k)$,
which is equivalent to $\wh{\phi}(2\xi)=\wh{a}(\xi)\wh{\phi}(\xi)$.

Now we recall the Oblique Extension Principle stated in \cite[Theorem~6.4.1]{hanbook} (also see \cite{han09,hm03}) for constructing OEP-based dual framelets in $\Lp{2}$ from refinable vector functions.

\begin{theorem} \label{thm:oep} (\cite[Theorem~6.4.1]{hanbook})
Let $a,\tilde{a},\theta,\tilde{\theta}\in (\lp{0})^{r\times r}$ and $b,\tilde{b}\in (\lp{0})^{r\times s}$. Let $\phi, \tilde{\phi}$ be $r\times 1$ vectors of compactly supported functions in $\Lp{2}$ satisfying
\[
\wh{\phi}(2\xi)=\wh{a}(\xi)\wh{\phi}(\xi),\qquad
\wh{\tilde{\phi}}(2\xi)=\wh{\tilde{a}}(\xi)\wh{\tilde{\phi}}(\xi),\qquad \xi\in \R.
\]
Define vector functions $\eta, \tilde{\eta}, \psi, \tilde{\psi}$ by
\[
\wh{\eta}(\xi):=\wh{\theta}(\xi) \wh{\phi}(\xi), \quad
\wh{\tilde{\eta}}(\xi):=\wh{\tilde{\theta}}(\xi)
\wh{\tilde{\phi}}(\xi)
\quad\mbox{and}\quad
\wh{\psi}(\xi):=
\wh{b}(\xi/2)\wh{\phi}(\xi/2),\quad \wh{\tilde{\psi}}(\xi):=
\wh{\tilde{b}}(\xi/2)\wh{\tilde{\phi}}(\xi/2).
\]
Define $\Theta\in (\lp{0})^{r\times r}$ by $\wh{\Theta}(\xi):=[\wh{\tilde{\theta}}(\xi)]^\tp \ol{\wh{\theta}(\xi)}$. If $\wh{\tilde{\phi}}(0)^\tp \wh{\Theta}(0)\ol{\wh{\phi}(0)}=1$ and
$(\{\tilde{a};\tilde{b}\},\{a;b\})_\Theta$ is \emph{an OEP-based dual framelet filter bank}, that is,
\be \label{oep}
\wh{\tilde{a}}(\xi)^\tp \wh{\Theta}(2\xi) \ol{\wh{a}(\xi)}+\wh{\tilde{b}}(\xi)^\tp \ol{\wh{b}(\xi)}=\wh{\Theta}(\xi),\quad
\wh{\tilde{a}}(\xi)^\tp \wh{\Theta}(2\xi) \ol{\wh{a}(\xi+2\pi)}+\wh{\tilde{b}}(\xi)^\tp \ol{\wh{b}(\xi+2\pi)}=0,
\ee
then $(\{\tilde{\eta}; \tilde{\psi}\},\{\eta;\psi\})$ is a dual framelet in $\Lp{2}$. Moreover, the converse direction is also true if $\det \wh{\Theta}$ is not identically zero.
\end{theorem}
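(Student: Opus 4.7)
The plan is to upgrade the filter-bank identities \eqref{oep} to a one-level cascade identity on the function side, telescope across scales, and then pass to the limit using the normalization hypothesis $\wh{\tilde{\phi}}(0)^\tp\wh{\Theta}(0)\ol{\wh{\phi}(0)}=1$. This follows the same architecture that already appeared as \eqref{cascade} in the proof of Proposition~\ref{prop:framelet}, except that here the cascade identity must be \emph{derived} from the mask identities rather than assumed a priori.

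The first step is a Fourier/Poisson computation. Expanding $\sum_{k\in\Z}\la f,\tilde{\eta}_{j;k}\ra\la\eta_{j;k},g\ra$ via Parseval and applying Poisson summation in $k$ produces, after rescaling, a sum over $\ell\in\Z$ of integrals whose $\ell$th kernel involves $\ol{\wh{\tilde{\eta}}(\xi)}^\tp\wh{\eta}(\xi+2\pi\ell)$; the analogous maneuver on $\sum_{k}\la f,\tilde{\psi}_{j;k}\ra\la\psi_{j;k},g\ra$ produces $\ol{\wh{\tilde{\psi}}(2\xi)}^\tp\wh{\psi}(2\xi+2\pi\ell)$. Inserting the refinement relations $\wh{\eta}=\wh{\theta}\wh{\phi}$, $\wh{\psi}(2\xi)=\wh{b}(\xi)\wh{\phi}(\xi)$, $\wh{\phi}(2\xi)=\wh{a}(\xi)\wh{\phi}(\xi)$ (and their tildes), and splitting the resulting sum over $\ell$ by parity, the two kernels combine into expressions of exactly the form appearing on the left of \eqref{oep}. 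Invoking \eqref{oep} collapses them to $\wh{\Theta}(\xi)$ in the even parity and to zero in the odd parity, which, after reassembling, is precisely the kernel for the right-hand side of the one-level identity
\be\label{planc}
\sum_{k\in\Z}\la f,\tilde{\eta}_{j;k}\ra\la\eta_{j;k},g\ra+\sum_{k\in\Z}\la f,\tilde{\psi}_{j;k}\ra\la\psi_{j;k},g\ra=\sum_{k\in\Z}\la f,\tilde{\eta}_{j+1;k}\ra\la\eta_{j+1;k},g\ra,
\ee
valid for every $f,g\in\Lp{2}$ and $j\in\Z$. Telescoping \eqref{planc} from $j=0$ to $j=J-1$ then reduces the duality statement \eqref{df} to showing $\lim_{J\to\infty}\sum_{k\in\Z}\la f,\tilde{\eta}_{J;k}\ra\la\eta_{J;k},g\ra=\la f,g\ra$.

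This limit is the step I expect to be the main obstacle, since it is exactly strong convergence to the identity of the quasi-projection operator built from $(\eta,\tilde{\eta})$. Using the refinement relations one gets $\ol{\wh{\tilde{\eta}}(\xi)}^\tp\wh{\eta}(\xi)=\ol{\wh{\tilde{\phi}}(\xi)}^\tp\wh{\Theta}(\xi)\wh{\phi}(\xi)$, which equals $1$ at $\xi=0$ by the normalization hypothesis, while the shifted pairings $\ol{\wh{\tilde{\eta}}(\xi)}^\tp\wh{\eta}(\xi+2\pi\ell)$ for $\ell\ne 0$ can be shown to vanish at $\xi=0$ as a consequence of \eqref{oep} together with the refinability of $\phi$ and $\tilde{\phi}$. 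These Fourier-side vanishing conditions, combined with an upper Bessel-type bound on the affine systems (supplied by the compact support of the generators via a standard almost-disjoint-supports estimate), force the desired limit first on a dense class of test functions (e.g.\ Schwartz functions whose Fourier transforms are compactly supported away from the origin) and then on all of $\Lp{2}$ by a boundedness/density argument.

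Finally, the upper framelet bound in \eqref{framelet} for each system follows from the compact support of $\phi,\tilde{\phi},\psi,\tilde{\psi}$ via the same almost-disjoint-supports estimate, and the lower bound then follows by polarization from the now-established duality relation \eqref{df} together with the Cauchy--Schwarz inequality. For the converse direction, assuming $(\{\tilde{\eta};\tilde{\psi}\},\{\eta;\psi\})$ is a dual framelet and $\det\wh{\Theta}\not\equiv 0$, I would reverse the Fourier-side calculation: the duality relation applied at two consecutive levels forces \eqref{planc}, and matching Fourier coefficients of the resulting trigonometric-polynomial identity (using that $\det\wh{\Theta}(2\xi)\ne 0$ off a measure-zero set) yields the two equalities in \eqref{oep}.
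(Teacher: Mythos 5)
The paper does not actually prove this statement: Theorem~\ref{thm:oep} is imported verbatim from \cite[Theorem~6.4.1]{hanbook}, so there is no in-paper argument to compare against. Your outline does follow the architecture of the known proof --- derive a one-level cascade identity of the form \eqref{cascade} from the filter-bank identities \eqref{oep} by a bracket-product/Poisson computation, telescope, identify the limit of the level-$J$ quasi-projections with the identity operator using the normalization $\wh{\tilde{\phi}}(0)^\tp \wh{\Theta}(0)\ol{\wh{\phi}(0)}=1$, and supply the frame bounds separately --- and the Parseval computation, the telescoping, the limit on band-limited test functions, and the polarization argument for the lower frame bound are all sound.

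The genuine gap is the Bessel bound for the wavelet part of the affine systems. You assert that the upper bound in \eqref{framelet} ``follows from the compact support of the generators via the same almost-disjoint-supports estimate.'' That is false: almost-disjointness gives $\sum_{k}|\la f,\psi_{j;k}\ra|^2\le C\|f\|_{\Lp{2}}^2$ for each \emph{fixed} scale $j$, uniformly in $j$, but \eqref{framelet} sums over infinitely many scales $j\ge 0$, and without cancellation each scale contributes a comparable amount (take $\psi=\chi_{[0,1]}$ and $f=\chi_{[0,1]}$: every level $j\ge 0$ contributes exactly $\|f\|_{\Lp{2}}^2$, so the double sum diverges). The Bessel property genuinely requires $\wh{\psi}(0)=0$ and $\wh{\tilde{\psi}}(0)=0$ together with a cross-scale cancellation argument, e.g.\ writing $\psi=g'$ with $g$ bounded and compactly supported and integrating by parts against smooth $f$ to gain a factor $2^{-j}$, then extending by density. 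Moreover, these vanishing moments do not come for free from your hypotheses: evaluating the first identity of \eqref{oep} at $\xi=0$ and sandwiching with $\wh{\tilde{\phi}}(0)^\tp$ and $\ol{\wh{\phi}(0)}$ yields only $\wh{\tilde{\psi}}(0)^\tp\ol{\wh{\psi}(0)}=0$, not the componentwise vanishing that Besselness forces; this is precisely why the source statement in \cite{hanbook} carries $\wh{\psi}(0)=0$ and $\wh{\tilde{\psi}}(0)=0$ as explicit conditions. A smaller issue: your claim that $\ol{\wh{\tilde{\eta}}(0)}^\tp\wh{\eta}(2\pi \ell)=0$ for $\ell\ne 0$ ``is a consequence of \eqref{oep}'' is unsupported and also unnecessary, since for band-limited $f,g$ the $\ell\ne 0$ terms of the level-$J$ bracket product vanish for large $J$ for support reasons alone; drop that claim rather than rely on it.
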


The most common choice of $\theta,\tilde{\theta}$ in Theorem~\ref{thm:oep} is $\wh{\theta}(\xi)=\wh{\tilde{\theta}}(\xi)=I_r$, leading to $\eta=\phi$ and $\tilde{\eta}=\tilde{\phi}$.
The main role of $\theta$ and $\tilde{\theta}$ is to increase the vanishing moments of $\psi$ and $\tilde{\psi}$ as well as the accuracy order of their associated quasi-projection operators: $\mathring{\qp}:=\mathring{\qp}_0$ and
\[
[\mathring{\qp}_n f](x):=\sum_{k\in \Z} \la f, 2^n \tilde{\eta}(2^n \cdot-k)\ra \eta(2^n x-k),\qquad n\in \N\cup\{0\}, f\in \lLp{2}.
\]
By Proposition~\ref{prop:framelet}, we have $\mathring{\qp} \pp=\pp$ for all $\pp\in \PL_{m-1}$ with $m:=\vmo(\tilde{\psi})$. Hence, all the results in this paper hold for OEP-based dual framelets.
Moreover, it is known in \cite[Lemma~4.1.11]{hanbook} that
\be \label{qp:mathring}
\mathring{\qp}_n f=\sum_{k\in \Z} \la f, 2^n \tilde{\mathring{\phi}}(2^n\cdot-k)\ra \phi(2^n\cdot-k)=
\sum_{k\in \Z} \la f, 2^n \tilde{\phi}(2^n\cdot-k)\ra \mathring{\phi}(2^n \cdot-k),\qquad n\in \N\cup\{0\}
\ee
with $\wh{\tilde{\mathring{\phi}}}(\xi):=\wh{\Theta}(\xi)^\tp \wh{\tilde{\phi}}(\xi)$ and $\wh{\mathring{\phi}}(\xi)=\ol{\wh{\Theta}(\xi)}\wh{\phi}(\xi)$.
Consequently, if
$(\{\tilde{\eta}; \tilde{\psi}\},\{\eta;\psi\})$ is a dual framelet in $\Lp{2}$, then both
$(\{\tilde{\phi}; \tilde{\psi}\},\{\mathring{\phi};\psi\})$ and $(\{\tilde{\mathring{\phi}}; \tilde{\psi}\},\{\phi;\psi\})$ are dual framelets in $\Lp{2}$.
In particular, if $\vmo(\psi)\ge 2$ and $\vmo(\tilde{\psi})\ge 1$ and assume that all the entries of $\phi$ are continuous, then by Theorem~\ref{thm:qp:c} we have $[\ol{\wh{\eta}}^\tp \wh{\tilde{\eta}}]''(0)=
[\ol{\wh{\phi}}^\tp \Theta^\tp \wh{\tilde{\phi}}]''(0)=0$ and
$\{\mathring{\qp}_n\}_{n\in \N}$ must exhibit the Gibbs phenomenon at all points.
According to the results in this paper, the oblique extension principle can increase the vanishing moments of the framelets $\psi$ and $\tilde{\psi}$ as well as the accuracy orders of the quasi-projection operators $\mathring{\qp}_n$, but at the cost of the Gibbs phenomenon.

\medskip

\noindent \textbf{Acknowledgment:} The author would like to thank the reviewers for their valuable suggestions which improved the presentation of the paper.


\begin{thebibliography}{99}


\bibitem{chs02}
C.~K.~Chui, W.~He, and J.~St\"ockler, Compactly supported tight and sibling frames with maximum vanishing moments. \emph{Appl. Comput. Harmon. Anal.} \textbf{13} (2002), 224--262.

\bibitem{daubook}
I.~Daubechies, Ten lectures on wavelets. CBMS-NSF Series in Applied Mathematics, \textbf{61}. SIAM, 1992.


\bibitem{dh04}
I.~Daubechies and B.~Han, Pairs of dual wavelet frames from any two refinable functions. \emph{Constr. Approx.} \textbf{20} (2004), 325--352.

\bibitem{dhrs03}
I.~Daubechies, B.~Han, A.~Ron, and Z.~Shen, Framelets: MRA-based constructions of wavelet frames. \emph{Appl. Comput. Harmon. Anal.} \textbf{14} (2003), 1--46.

\bibitem{gib}
J.~W.~Gibbs, Letter to the editor, \emph{Nature} \textbf{59}, (1899), 606.

\bibitem{gc95}
S.~M.~Gomes and E.~Cortina, Some results on the convergence of sampling series based on convolution integrals. \emph{SIAM J. Math. Anal.} \textbf{26} (1995), 1386--1402.

\bibitem{gs97}
D.~Gottlieb and C.~Shu, On the Gibbs phenomenon and its resolution. \emph{SIAM Rev.} \textbf{39} (1997), 644--688.

\bibitem{han97}
B.~Han, On dual wavelet tight frames. \emph{Appl. Comput. Harmon. Anal.} \textbf{4} (1997), 380--413.

\bibitem{han09}
B.~Han, Dual multiwavelet frames with high balancing order and compact fast frame transform. \emph{Appl. Comput. Harmon. Anal.} \textbf{26} (2009), 14--42.

\bibitem{han10}
B.~Han, Pairs of frequency-based nonhomogeneous dual wavelet frames in the distribution space. \emph{Appl. Comput. Harmon. Anal.} \textbf{29} (2010), 330--353.

\bibitem{han12}
B.~Han, Nonhomogeneous wavelet systems in high dimensions. \emph{Appl. Comput. Harmon. Anal.} \textbf{32} (2012), 169--196.

\bibitem{han17}
B.~Han, Homogeneous wavelets and framelets with the refinable structure. \emph{Sci. China Math.} \textbf{60} (2017), 2173--2198.

\bibitem{hanbook}
B.~Han, Framelets and wavelets: Algorithms, analysis, and applications. \emph{Applied and Numerical Harmonic Analysis}. Birkh\"auser/Springer, Cham, 2017. xxxiii + 724 pp.

\bibitem{hm03}
B.~Han and Q.~Mo, Multiwavelet frames from refinable function vectors. \emph{Adv. Comput. Math.} \textbf{18} (2003), 211--245.


\bibitem{jerbook}
A.~Jerri, The Gibbs phenomenon in Fourier analysis, splines and wavelet approximations. Kluwer Academic Publishers, Netherlands, 1998.

\bibitem{jz95}
K.~Jetter and D.-X.~Zhou, Order of linear approximation from shift-invariant spaces. \emph{Constr. Approx.} \textbf{11} (1995), 423--438.


\bibitem{jia04}
R.-Q.~Jia, Approximation with scaled shift-invariant spaces by means of quasi-projection operators. \emph{J. Approx. Theory} \textbf{131} (2004), 30--46.

\bibitem{jj02}
R.-Q.~Jia and Q.-T.~Jiang, Approximation power of refinable vectors of functions. \emph{Wavelet analysis and applications}, 155--178, AMS/IP Stud. Adv. Math., 25, Amer. Math. Soc., Providence, RI, 2002.

\bibitem{kel96}
S.~E.~Kelly, Gibbs phenomenon for wavelets. \emph{Appl. Comput. Harmon. Anal.} \textbf{3} (1996), 72--81.


\bibitem{ml18}
M.~Mohammad and E.-B. Lin, Gibbs phenomenon in tight framelet expansions. \emph{Commun. Nonlinear Sci. Numer. Simul.} \textbf{55} (2018), 84--92.

\bibitem{rs97}
A.~Ron and Z.~Shen, Affine systems in $L_2(\R^d)$: the analysis of the analysis operator. \emph{J. Funct. Anal.} \textbf{148} (1997), 408--447.

\bibitem{rf05}
D.~K.~Ruch and P.~J.~Van Fleet, Gibbs' phenomenon for nonnegative compactly supported scaling vectors.
\emph{J. Math. Anal. Appl.} \textbf{304} (2005), 370--382.

\bibitem{shen02}
X.~Shen, Gibbs phenomenon in orthogonal wavelet expansion.
\emph{J. Math. Study} \textbf{35} (2002), no. 4, 343--357.

\bibitem{shen11}
X.~Shen, Gibbs phenomenon for orthogonal wavelets with compact support, in \emph{Advances in the Gibbs Phenomenon}, 337--369,
A.~Jerri ed., Sampling Publ., 2011.



\bibitem{sv96}
H.-T.~Shim and H.~Volkmer, On the Gibbs phenomenon for wavelet expansions. \emph{J. Approx. Theory} \textbf{84} (1996), 74--95.


\bibitem{wil48}
H.~Wilbraham, On a certain periodic function. \emph{Cambridge Dublin Math. J.} \textbf{3} (1848), 198--201.

\end{thebibliography}
\end{document}